\pgfplotsset{compat=1.18}
\newtheorem{theorem}{Theorem}[section]
\newtheorem{proposition}[theorem]{Proposition}
\newtheorem{lemma}[theorem]{Lemma}
\newtheorem{corollary}[theorem]{Corollary}
\newtheorem{observation}[theorem]{Observation}
\newtheorem{claim*}{Claim}
\theoremstyle{remark}
\newtheorem{example}[theorem]{Example}
\newtheorem{remark}[theorem]{Remark}
\newtheorem{remarks}[theorem]{Remarks}
\newcommand{\Aff}{{\mathbb A}}
\newcommand{\PP}{{\mathbb P}}
\newcommand{\FF}{{\mathbb F}}
\newcommand{\ZZ}{{\mathbb Z}}
\newcommand{\NN}{{\mathbb N}}
\newcommand{\calC}{{\mathcal C}}
\newcommand{\calP}{{\mathcal P}}
\newcommand{\calS}{{\mathcal S}}
\newcommand{\OO}{{\mathcal O}}
\newcommand{\hideqed}{\renewcommand{\qed}{}}
\DeclareMathOperator{\im}{im}
\DeclareMathOperator{\ev}{ev}
\DeclareMathOperator{\opt}{opt}
\numberwithin{equation}{section}
\numberwithin{table}{section}
\newcommand{\defi}[1]{\emph{#1}} 
\title{Locally recoverable algebro-geometric codes \\
from projective bundles}
\author{Konrad Aguilar}
\address{Konrad Aguilar. Department of Mathematics and Statistics, Pomona College, 610 N. College Ave., Claremont, CA 91711, USA}
\email{{konrad.aguilar@pomona.edu}}
\author{Angelynn \'Alvarez}
\address{Angelynn \'Alvarez. Department of Mathematics, Embry-Riddle Aeronautical University, 3700 Willow Creek Rd., Prescott, AZ 86301, USA}
\email{{alvara44@erau.edu}}
\author{Ren\'e Ardila}
\address{Rene Ardila. Department of Mathematics, Grand Valley State University, 1 Campus Dr, Allendale, Michigan 49401, USA}
\email{{ardilar@gvsu.edu}}
\author{Pablo S. Ocal}
\address{Pablo S. Ocal. Okinawa Institute of Science and Technology, 1919-1 Tancha, Onna-son, Kunigami-gun, Okinawa 904-0495, Japan}
\email{{pablo.ocal@oist.jp}}
\author{Cristian Rodriguez Avila}
\address{Cristian Rodriguez Avila. Department of Mathematics and Statistics, Mount Holyoke College, 50 College Street, South Hadley, MA 01075, USA}
\email{{crodriguezavila@mtholyoke.edu}}
\author{Anthony V\'arilly-Alvarado}
\address{Anthony V\'arilly-Alvarado. Department of Mathematics MS 136, Rice University, 6100 S.\ Main St., Houston, TX 77005, USA}
\email{{av15@rice.edu}}
\date{March 18, 2025}
\subjclass[2020]{94B27; 14G50; 11G25}
\keywords{Error correcting codes, locally recoverable codes, availability problem, algebro-geometric codes.}
\begin{document}

\begin{abstract}
A code is locally recoverable when each symbol in one of its code words can be reconstructed as a function of $r$ other symbols. We use bundles of projective spaces over a line to construct locally recoverable codes with availability; that is, evaluation codes where each code word symbol can be reconstructed from several disjoint sets of other symbols. The simplest case, where the code's underlying variety is a plane, exhibits noteworthy properties: When $r = 1$, $2$, $3$, they are optimal; when $r \geq 4$, they are optimal with probability approaching $1$ as the alphabet size grows. Additionally, their information rate is close to the theoretical limit. In higher dimensions, our codes form a family of asymptotically good codes.
\end{abstract}

\maketitle

\section{Introduction}

Distributed cloud storage applications have long motivated the study of locally recoverable 
codes (LRCs), whose use 
has led to increased efficiency in both storage and data availability. 
For example, Meta uses an in-house implementation of Reed--Solomon codes~\cite{Meta24}, 
and many large-scale systems such as Windows Azure Storage \cite{HSXOCGLY12}, Hadoop \cite{SAPDVCB13}, and Facebook \cites{MLRHLLPSSTK14, MAK17} benefit from LRCs. 
This paper provides a practical construction of optimal LRCs and a general construction of asymptotically good LRCs with availability, leveraging ideas from algebraic geometry. Our codes have parameters with desirable properties for applications to cloud storage: Their minimum distance is high, enabling the correction of many errors; their information rate is close to the theoretical limit, implying a minimum amount of redundancy and overhead; and they allow multiple recovery sets, increasing the availability of the data for users while minimizing bandwidth usage.

\subsection{Algebro-geometric context}

While error-correcting codes date back to Hamming's work in the early 1950s~\cite{Hamming50}, the infusion of algebro-geometric techniques to create codes emerged only in 1977 with Goppa's construction of evaluation codes on curves that used the Riemann--Roch Theorem to bound their minimum distance~\cite{Goppa77}. Goppa elaborated on his idea in~\cite{Goppa81}, but it was only after Tsfasman, Vl{\u a}du{\c t}, and Zink~\cite{TsfasmanVladutsZink} showed how to use modular curves to beat the Gilbert--Varshamov bound that algebro-geometric methods took on a more central role in the development of codes with good theoretical properties~\cites{TsfasmanVluadact91, Walker00, HaymakerLopezMalmskogMatthewsPinero24}.

With the explosion of distributed large-scale storage in the early 2000s, a need arose for codes that could correct transmission errors \emph{and} repair data erasures, which led to the development of locally recoverable codes~\cites{HuangChenLi07, HanLastras-Montano07, GopalanChengSimitciYekhanin12, PapailiopoulosDimakis14}. In a seminal paper, Tamo and Barg~\cite{TamoBarg14} constructed LRCs whose minimum distance meets the Singleton-type bound that constrains LRCs. These codes inspired numerous algebro-geometric interpretations and further constructions, such as~\cites{BargTamoVluadact17, BargHaymakerHoweMatthewsVarillyAlvarado17, MunueraTenorio18, LiMaXing, MatthewsPinero20, MunueraTenorioTorres20, SalgadoVarillyAlvaradoVoloch21}, most of them relying on the structural geometry of certain curves (maybe embedded in a surface). 

A desirable layer of complexity one can add to LRCs is the property of 
\emph{availability}~\cites{WangZhang14, RawatPapailiopoulosDimakisVishwanath16}, whereby erasures 
in a code word can be repaired in multiple ways. Algebraic Geometry has also played a role in 
the construction of such codes; see for 
example~\cites{BargHaymakerHoweMatthewsVarillyAlvarado17, HaymakerMalmskogMatthews18, BartoliMontanucciMariaandQuoos20, JinKanZhang20, LopezMalmskogMatthewsPineroWooters21, CharaKottlerMalmskogThompsonWest23}. 

In 1972, Justesen~\cite{Justesen72} pioneered the systematic study of \emph{families} of codes with good asymptotic behavior; his own codes extended Reed--Solomon codes but did not use algebro-geometric techniques. Shortly after Goppa introduced algebro-geometric methods, Katsman, Tsfasman, and Vl{\u a}du{\c t} constructed algebro-geometric families of codes with good asymptotic properties~\cite{KatsmanTsfasmanVluadact84}. Further constructions appeared over time, e.g.~\cite{vLintSpringer87}, including recent work using algebraic surfaces~\cite{CouvreurLebacquePerret21}. It is natural to wonder if it's possible to construct algebro-geometric families of asymptotically good codes that incorporate locality and availability; to date, there are surprisingly few constructions of such families~\cites{BargTamoVluadact17, LiLiuMaXing24}.

\subsection{Results}

In this paper, we use products of affine and projective spaces to construct families of LRCs with availability that have good asymptotic properties. 
Our guiding examples are the following evaluation codes. Fix a finite field $\FF_q$, and positive integers $b$ and $r$, and pick a subset of $b (r+1)$ elements
\[
\calP = \{ (x_i,y_{i,j}) \}_{1 \leq i \leq b}^{1 \leq j \leq r+1} \subset \FF_q^2
\]
where all $x_i$ are distinct and all $y_{i,j}$ are distinct. Let
\[
V = \left\{ \sum_{\ell = 0}^{r-1} \sum_{i = 0}^{b-2} a_{ij} x^i y^{\ell}  : a_{ij}\in \FF_q \right\},
\]
a finite-dimensional $\FF_q$-vector space of polynomials. We construct a code $\calC$ by evaluating the points $\calP$ on the vector space $V$; see Section~\ref{ss:ag-codes} for details. Our first main result establishes values of $b$ and $r$ giving optimal locally recoverable codes.

\begin{theorem}[see Theorem~\ref{theo:line-fiber-optimal}]
Let $r = 1$, $2$, or $3$ and let $b$ satisfy $\displaystyle 3 \leq b \leq \frac{q}{r+1}$. The code $\mathcal{C}$ is optimal and locally recoverable with locality $r$. Its parameters $[n, k, d]_q$ are
\begin{align*}
n &= b (r+1),\\
k &= (b-1) r,\\
d &= r+3.
\end{align*}
\end{theorem}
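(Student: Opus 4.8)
The plan is to verify the three code parameters $n$, $k$, $d$ in turn, and then exhibit the local recovery sets. Computing $n$ is immediate: the code is obtained by evaluating functions in $V$ at the $n = b(r+1)$ points of $\calP$, and these points are distinct since the $x_i$ are distinct and the $y_{i,j}$ are distinct. For the dimension $k$, I would argue that the evaluation map $\mathrm{ev}\colon V \to \FF_q^n$ is injective, so that $k = \dim_{\FF_q} V = (b-1) \cdot r$. Injectivity should follow from a Schwartz--Zippel / degree-counting argument: a nonzero $f = \sum_{\ell=0}^{r-1}\sum_{i=0}^{b-2} a_{i\ell} x^i y^\ell$ has $x$-degree at most $b-2$ and $y$-degree at most $r-1$, so for it to vanish on all of $\calP$ one would need, for each fixed row $y = y_{i,j}$ or each fixed column, too many zeros; the hypothesis $b \le q/(r+1)$ guarantees the grid is large enough relative to the degrees that this forces $f = 0$. (I expect the precise bookkeeping here to invoke the product/grid structure of $\calP$ rather than raw Schwartz--Zippel.)

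Next I would establish local recoverability with locality $r$ and availability. For a codeword symbol indexed by the point $(x_i, y_{i,j})$, the natural recovery set is the ``fiber'' $\{(x_i, y_{i,j'}) : j' \ne j\}$ of the other $r$ points sharing the same $x$-coordinate $x_i$. Since any $f \in V$ restricts on this fiber to a polynomial in $y$ alone of degree at most $r-1$, its value at $(x_i, y_{i,j})$ is determined by Lagrange interpolation from its values at the other $r$ points of the fiber; this gives locality $r$. For availability one looks for a second, disjoint recovery set --- presumably coming from a line of the form $y = (\text{linear in } x)$ or another pencil through the point --- using that $f$ restricted to such a curve also has controlled degree; the exact pencil to use is dictated by the projective-bundle geometry referenced in the body of the paper.

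Finally, the minimum distance $d = r+3$. The upper bound $d \le r+3$ I would get by writing down an explicit low-weight codeword: take a function in $V$ that vanishes on as many points of $\calP$ as possible, e.g.\ a product of $b-2$ factors $(x - x_i)$ times a factor killing one full fiber's worth minus the structure, engineered so that its zero set on $\calP$ has size $n - (r+3)$. The lower bound $d \ge r+3$ is the heart of the matter: one must show no nonzero $f \in V$ vanishes at more than $n - (r+3)$ points of $\calP$. Here I would combine the row-by-row structure (on each of the $b$ fibers $f$ restricts to a degree-$\le r-1$ polynomial in $y$, hence vanishes at $\le r-1$ of the $r+1$ points, unless it vanishes on the whole fiber) with a count of how many fibers $f$ can kill entirely (at most $b-2$, since the ``column polynomial'' has $x$-degree $\le b-2$), and optimize. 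I expect this lower bound on $d$ --- reconciling the two competing degree constraints to pin down the constant $r+3$ exactly, rather than something weaker like $r+2$ --- to be the main obstacle, and it is presumably where the small-$r$ hypothesis $r \in \{1,2,3\}$ and the optimality (meeting the Singleton-type bound for LRCs) get used.
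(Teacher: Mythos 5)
Your outline follows the same route as the paper (fiber-by-fiber degree counting for injectivity, Lagrange interpolation within a fiber for locality, an explicit word of weight $r+3$ for the upper bound, and zero-counting for the lower bound), but the step you defer as ``the main obstacle'' is genuinely incomplete, and the optimization you describe does not close it as stated. If a nonzero $f$ has $s$ complete zero-fibers, your two ingredients (each non-zero-fiber contributes at most $r-1$ zeros; at most $b-2$ zero-fibers) give at most $s(r+1)+(b-s)(r-1)=b(r-1)+2s$ zeros, and in the extremal case $s=b-2$ this is $n-4$, i.e.\ only weight $\geq 4$ --- enough for $r=1$ but strictly weaker than $r+3$ for $r=2,3$. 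The missing idea is a refinement of the case $s=b-2$: there $f$ factors as $(x-x_1)\cdots(x-x_{b-2})\,p(y)$ with $\deg p\leq r-1$, and since \emph{all} $y$-coordinates of $\calP$ are distinct (across batches, not merely within a batch), $p$ can kill at most $r-1$ points in the two remaining fibers \emph{combined}, giving at most $(b-2)(r+1)+(r-1)=n-(r+3)$ zeros. The hypothesis $r\leq 3$ is then used only in the complementary regime $s\leq b-3$, where $b(r-1)+2s\leq (b-2)(r+1)+(r-1)$ holds exactly when $r\leq 3$. Without the factorization refinement your count cannot pin down $d\geq r+3$.

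Two smaller points. The condition $b\leq q/(r+1)$ is not what makes the evaluation map injective; it only guarantees that $n=b(r+1)$ distinct $y$-coordinates (and $b$ distinct $x$-coordinates) exist in $\FF_q$. Injectivity comes purely from the grid of degrees: $f(x_i,y)$ has degree $\leq r-1$ but $r+1$ zeros in each batch, hence vanishes identically, and then each coefficient $a_\ell(x)$ of degree $\leq b-2$ has $b$ zeros. Also, the theorem claims no availability, so your search for a second recovery set is unnecessary; moreover restricting $f$ to a line $y=\text{(linear in $x$)}$ gives a univariate polynomial of degree up to $(b-2)+(r-1)$, so such a pencil would not yield recovery sets of size $r$ anyway. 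Finally, note that $d\leq r+3$ already follows from the Singleton-type bound once $n$, $k$, and the locality are known, so the explicit low-weight word, while nice, is optional.
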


This result is best possible in the following sense: For \emph{any} $q$, if $r \geq 4$, we can construct nonoptimal codes for all $b$ satisfying $\displaystyle 3 \leq b \leq \frac{q}{r+1}$. We do this concretely as a proof of concept in Section~\ref{sec:non-optimal} for $r = b = 4$ and $q = 37$. Knowing that not \emph{all} of our codes are optimal, we naturally ask \emph{what proportion} of these codes are optimal. Our second main result answers this question: When $q$ is large enough, for most choices of $\calP$, the code $\calC$ is optimal and locally recoverable. In other words, \emph{almost all} of the codes we construct are optimal. 

\begin{theorem}[see Theorem~\ref{theo:optimal-limit}]\label{theo:main-2}
Let $r \geq 4$ and let $b$ satisfy $\displaystyle 3 \leq b \leq \frac{q}{r+1}$. There exists an integer $q_0 = q_{0}(r,b)$ such that if $q \geq q_0$, for most choices of points $\calP$ there are no code words in $\calC$ of weight $\leq r + 2$. That is, the minimum distance of $\calC$ is $d \geq r+3$. Consequently, for most choices of points $\calP$ the code $\calC$ is optimal and locally recoverable with locality $r$. Moreover, as $q\to \infty$, the code $\calC$ is optimal with probability~$1$.
\end{theorem}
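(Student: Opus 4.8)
The plan is to deduce optimality from the estimate on light code words that the theorem itself asserts. First I would recall why $\calC$ is locally recoverable with locality $r$: restricting $f \in V$ to a fiber $\{x = x_i\}$ yields a polynomial of degree $\le r-1$ in $y$, so Lagrange interpolation recovers any one of the $r+1$ symbols of that fiber from the other $r$. Since $\dim V = (b-1)r$ and the evaluation map is injective (a nonzero $f$ will be seen below to have at least four nonzeros on $\calP$), the parameters are $n = b(r+1)$, $k = (b-1)r$, and the Singleton-type bound for locally recoverable codes gives $d \le n - k + 2 - \lceil k/r\rceil = r+3$. Hence the theorem reduces to showing that, for $q \ge q_0(r,b)$, the fraction of admissible $\calP$ admitting a nonzero code word of weight $\le r+2$ tends to $0$ as $q \to \infty$; for the remaining $\calP$ we then have $d \ge r+3$, hence $d = r+3$ and $\calC$ is an optimal LRC, and the limiting statement is immediate.

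\textbf{Shape of a light code word.} Next I would describe any hypothetical light word. Write $f = \sum_{\ell=0}^{r-1} y^\ell g_\ell(x) \in V \setminus \{0\}$ with $\deg g_\ell \le b-2$, let $S \subseteq \{1,\dots,b\}$ be the set of fibers on which $f$ vanishes identically, and set $t := b - |S|$. Since $\gcd(g_0,\dots,g_{r-1})$ has degree at most $b-2$, we get $|S| \le b-2$, i.e.\ $t \ge 2$. On a fiber over $i \notin S$ the nonzero polynomial $f(x_i, y)$ of degree $\le r-1$ hits at most $r-1$ of the $r+1$ distinct points $y_{i,1}, \dots, y_{i,r+1}$, so that fiber carries at least $2$ nonzeros of $f$; hence $\operatorname{wt}(f) \ge 2t$, and $\operatorname{wt}(f) \le r+2$ forces $t \le \lfloor(r+2)/2\rfloor$. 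Factoring $f = \big(\prod_{i \in S}(x-x_i)\big)\,\tilde f$, the cofactor $\tilde f$ is nonzero, of bidegree at most $(t-2, r-1)$, it vanishes identically on no fiber over $T := \{1,\dots,b\}\setminus S$ (because $f$ does not there), and, counting how many of the $\ge n - (r+2)$ zeros of $f$ lie over $T$, it vanishes at $\ge (r+1)t - (r+2)$ of the $(r+1)t$ points of $\calP$ over $T$.

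\textbf{Counting the bad configurations.} The crux is to show that the $\calP$ admitting such a pair $(T, \tilde f)$ form a vanishing fraction. I would fix $t$ with $2 \le t \le \lfloor(r+2)/2\rfloor$, a $t$-element subset $T$, a projective equivalence class of $\tilde f$ (there are $O(q^{(t-1)r - 1})$ of these, since $\tilde f$ has $(t-1)r$ coefficients and scaling is irrelevant), and integers $\rho_i \le r-1$ with $\sum_{i \in T}\rho_i \ge (r+1)t - (r+2)$, and then bound the number of admissible $\calP$ compatible with this data. Choosing the $b$ values $x_i$ costs $\le q^b$; in each fiber over $T$, the $\rho_i$ ``constrained'' $y$-coordinates must lie among the $\le r-1$ roots of $\tilde f(x_i, \cdot)$, which costs only an $(r,b)$-bounded factor; and the remaining $b(r+1) - \sum_{i\in T}\rho_i$ ``free'' $y$-coordinates cost $\le q^{\,b(r+1) - \sum_{i\in T}\rho_i}$. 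Multiplying, using the lower bound on $\sum_{i\in T} \rho_i$, and summing over the $(r,b)$-bounded number of choices of $t$, $T$, and $(\rho_i)$, the number of bad $\calP$ is $O_{r,b}\!\big(q^{\,b(r+2) - 1}\big)$, whereas the total number of admissible $\calP$ is $\Theta_{r,b}\!\big(q^{\,b(r+2)}\big)$ (the hypothesis $b(r+1) \le q$ guaranteeing that enough distinct $y$-values exist). So the bad fraction is $O_{r,b}(1/q) \to 0$, which is exactly the claim.

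\textbf{Expected difficulty.} The delicate part is entirely in the last step: setting up the factorization $f = \big(\prod_{i\in S}(x-x_i)\big)\tilde f$ so that $\tilde f$ has no vertical component over $T$ — without which ``at most $r-1$ roots per fiber'' could fail — correctly separating constrained from free $y$-coordinates, and, above all, checking that the exponent $b(r+2)-1$ of a bad configuration is \emph{strictly} below the exponent $b(r+2)$ of the ambient count; the needed saving of one power of $q$ comes precisely from counting $\tilde f$ only up to scaling. The rest — the locality of $\calC$, the bound $d \le r+3$, and the fiber-by-fiber root estimate — is routine.
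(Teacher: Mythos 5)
Your proposal is correct, but it reaches the conclusion by a genuinely different route than the paper. The paper (proof of Theorem~\ref{theo:optimal-limit}) fixes the combinatorial data of where a light word could vanish, encodes the existence of the cofactor $g(x,y)$ as the vanishing of all maximal minors of a Vandermonde-like interpolation matrix, views one such minor as a hypersurface in a projective space of coordinate variables, and invokes the Lang--Weil estimate (Theorem~\ref{theo:lang-weil-estimate}) to bound the probability that a uniformly random admissible coordinate tuple lies on it, using Corollary~\ref{coro:low-weight-zero-fibers} ($s \leq b-3$) to ensure the linear system is overdetermined. You instead enumerate the cofactor $\tilde f$ directly, up to scaling ($\approx q^{(t-1)r-1}$ classes), and count the point sets $\calP$ compatible with each; the exponent bookkeeping gives at most $q^{b(r+2)+1-t}$ bad configurations against $\Theta_{r,b}(q^{b(r+2)})$ admissible $\calP$, so the bad fraction is $O_{r,b}(1/q)$. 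This is entirely elementary: you need no algebraic geometry, only the basic Fiber Vanishing statement $|S|\leq b-2$ (your gcd argument reproves Lemma~\ref{lem:max_number_zeros} more simply, and you never need the sharper $s\leq b-3$), and because every constant in your count is an explicit binomial-type quantity, your argument is in principle effective, which is precisely what Remark~\ref{rema:effective-crutch} asks for; the paper's Lang--Weil constants are not explicit. What the paper's determinantal viewpoint buys in exchange is the explicit minor equations that drive the search for nonoptimal codes in Section~\ref{sec:non-optimal}, something your count does not produce. One small imprecision in your closing remark: the saving of one power of $q$ does not come from projectivizing $\tilde f$ alone; with $t=2$ the exponent is exactly $b(r+2)-1$, so you need both the $-1$ from counting $\tilde f$ up to scaling and the bound $t\geq 2$ from $\operatorname{wt}\geq 2t$ (for $t\geq 3$ the projectivization is not even needed). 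As written, your computation already uses both, so the proof stands.
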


In all cases these locally recoverable codes have information rate
\[
\frac{k}{n} = \frac{b-1}{b}\cdot\frac{r}{r+1},
\]
approaching the theoretical limit of
\[
\frac{r}{r+1}
\]
when $r$ is fixed and $b$ is large. 

It is worth pointing out that the codes we construct to prove Theorem~\ref{theo:main-2} 
generalize the optimal LRCs of Tamo and Barg in~\cite{TamoBarg14}, while avoiding 
the use of good polynomials. A noteworthy difference between our codes 
and theirs is that we do not require evaluation points to be carefully chosen; a random choice of 
points yields an optimal LRC with probability~$\approx 1$, as long as 
$q$ is large (see Remark~\ref{rema:effective-crutch} for more details).

We then generalize these evaluation codes to algebro-geometric codes arising from bundles of projective spaces. Our third main result gives the parameters of these codes, as well as a lower bound on their minimum distance.

\begin{theorem}\label{theo:higher-dimensional-summary}
Let $\FF_q$ be a finite field of cardinality $q$. Fix positive integers $\alpha$, $\beta$, $b$, and $t$ such that $b \geq \alpha + 1$. There exists an integer $m_0 = m_0(\alpha,\beta,b,t,q)$ such that, for each $m\geq m_0$, there is a locally recoverable code over $\FF_q$ with locality $\displaystyle r = \binom{\beta + m}{\beta}$ and availability $t$, and parameters $[n,k,d]_q$ satisfying
\begin{align*}
n &= b\left(t r + 1\right),\\
k &= (\alpha + 1) r,\\
d &\geq (b - \alpha) \left((t-1) r + 2\right).
\end{align*}
In particular, these codes parametrized by $m \geq m_0$ form a family of asymptotically good codes.
\end{theorem}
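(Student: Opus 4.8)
The plan is to realize the code $\calC$ of Theorem~\ref{theo:higher-dimensional-summary} as an algebro-geometric evaluation code on a projective bundle $\PP(\calE) \to \PP^1$, where $\calE$ is a rank-$(\beta+1)$ bundle on $\PP^1$, and to extract the three parameters $n$, $k$, $d$ from the geometry of this bundle together with a divisor $D$ chosen so that the associated Riemann--Roch space has the prescribed dimension. First I would set up the evaluation points: choose $b$ fibers over distinct $\FF_q$-points $x_1,\dots,x_b$ of $\PP^1$ (this uses $b \le q+1$, which follows once $m$ is large since then $q$ is large relative to the other data, or more precisely is part of the hypothesis packaged into $m_0$), and in each fiber $\PP^\beta$ pick $tr+1$ rational points arranged into $t$ recovery sets, each a rational normal-type configuration of $r+1 = \binom{\beta+m}{\beta}+1$ points lying on a rational curve (a line when $\beta = 1$, the guiding case), so that degree-$m$ forms restricted to such a set behave like a Reed--Solomon code of length $r+1$ and dimension $r$. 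This yields $n = b(tr+1)$ and gives the local recovery: puncturing at one coordinate, the remaining $r$ points in a recovery set determine the missing symbol by Lagrange interpolation of a degree-$m$ form on a $\PP^\beta$-fiber, and the $t$ disjoint such sets give availability $t$.

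Next I would pin down the vector space $V$ being evaluated. Taking $V = \HH^0(\PP(\calE), \calO(1) \otimes \pi^* \calO_{\PP^1}(\alpha))$ — i.e. fiberwise linear forms of degree $m$ in the fiber direction, wait, more precisely sections that are degree-$m$ on each $\PP^\beta$-fiber and of degree $\le \alpha$ in the base coordinate — the dimension is $\binom{\beta+m}{\beta}(\alpha+1) = r(\alpha+1)$, which is the claimed $k$; here I would invoke the relative Euler sequence / projection formula computation stated in the earlier sections, and the hypothesis $b \ge \alpha+1$ guarantees the evaluation map on the $b$ fibers is injective (a nonzero section vanishing on $\alpha+1$ of the $b$ fibers would be divisible by a degree-$(\alpha+1)$ base form, impossible since its base degree is $\le \alpha$), so $\dim_{\FF_q} \calC = k$ exactly.

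For the distance bound I would argue contrapositively about the support of a nonzero code word $\ev(f)$. Such an $f$ is nonzero on at most $\alpha$ of the $b$ fibers from the injectivity argument — no: rather, $f$ can vanish identically on at most $\alpha$ fibers, so it is nonvanishing somewhere on at least $b - \alpha$ fibers; on each such fiber $\PP^\beta$, $f$ restricts to a nonzero degree-$m$ form, whose restriction to each of the $t$ recovery curves is either identically zero or a nonzero Reed--Solomon-type word vanishing at $\le r - 1$ of the $r+1$ points, hence with $\ge 2$ nonzero coordinates; a short counting argument — a nonzero degree-$m$ form on $\PP^\beta$ cannot vanish on all $t$ of the recovery curves simultaneously provided the curves are chosen in sufficiently general position (this is where $m_0$ enters, forcing $q$ large enough that such a configuration exists), so in each of the $\ge b-\alpha$ fibers at least $(t-1)r + 2$ of the $tr+1$ coordinates are nonzero — giving $d \ge (b-\alpha)((t-1)r+2)$.

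The main obstacle I anticipate is the last combinatorial-geometric step: controlling, uniformly in $m$, how a degree-$m$ hypersurface in a fixed $\PP^\beta$-fiber can meet the $t$ chosen recovery curves, and in particular ruling out the degenerate case where $f$ vanishes on enough recovery curves to destroy the per-fiber contribution of $(t-1)r+2$. Making this precise — and thereby extracting the correct value of $m_0(\alpha,\beta,b,t,q)$ — will require either a careful choice of the $t$ recovery curves within each fiber (e.g. placing them so that no degree-$m$ form of bounded shape vanishes on two of them) or a dimension count forcing $q$ past a threshold; this parallels, and generalizes, the role played by "good polynomials" and the asymptotic argument of Theorem~\ref{theo:optimal-limit}. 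Once $n$, $k$, and the bound on $d$ are in hand, asymptotic goodness is immediate: with $\alpha,\beta,b,t$ fixed and $m \to \infty$ we have $r \to \infty$, so $k/n \to (\alpha+1)/(b t) > 0$ and $d/n \ge (b-\alpha)(t-1)/(bt) > 0$ (using $b \ge \alpha+1$), both bounded away from $0$, which is the definition of an asymptotically good family.
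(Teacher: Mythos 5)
Your proposal does not follow the paper's construction, and the change you made is fatal to the statement being proved. You have swapped the roles of the fiber dimension and the fiber degree: the paper works on $X=\PP^1_x\times\PP^m_{\mathbf{y}}$ and evaluates bi-homogeneous forms of bi-degree $(\alpha,\beta)$ (degree $\alpha$ in the base, degree $\beta$ along the $\PP^m$-fibers), whereas you take a $\PP^\beta$-bundle with $\beta$ fixed and forms of degree $m$ along the fibers. Numerically $r=\binom{\beta+m}{\beta}$ and $k=(\alpha+1)r$ come out the same, but the statement is about \emph{every} $m\geq m_0$ over a \emph{fixed} $\FF_q$, and in your setup each fiber $\PP^\beta(\FF_q)$ has a bounded number of rational points while the batch size $tr+1=t\binom{\beta+m}{\beta}+1\to\infty$; so for all large $m$ you cannot even choose $tr+1$ distinct rational evaluation points in a fiber, let alone place $r+1$ of them on a rational curve (such a curve has at most $q+1$ rational points, and eventually $r>q$). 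The paper's fibers are $\PP^m$ precisely so that the supply of rational points grows with $m$ and outpaces $tr+1$; securing a suitable configuration for all $m\geq m_0$ is what the threshold $m_0$ is for. Your remarks that ``$q$ is large relative to the other data once $m$ is large'' and that $m_0$ ``forces $q$ large enough'' are backwards: $q$ is fixed throughout, and increasing $m$ does not enlarge it. (A smaller bookkeeping issue: $t$ recovery sets of $r+1$ points each would need $t(r+1)>tr+1$ points per batch.)

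The step you yourself flag as the main obstacle — controlling how a nonzero form meets the $t$ recovery curves inside a fiber — is exactly where the paper's argument differs and needs no extra work, because there are no recovery curves at all. The only hypothesis is that the $tr+1$ points of each batch are in general position in $\PP^m$, meaning no nonzero degree-$\beta$ form vanishes at $r$ of them. That single hypothesis yields: (i) the fiber-vanishing and generalized fiber-vanishing lemmas (a nonzero $f\in V$ has at most $\alpha$ zero-fibers, via a Vandermonde argument in the base variable), hence injectivity of $\ev_{\calP}$ and $k=(\alpha+1)r$, which is the rigorous version of your divisibility remark; (ii) locality $r$ and availability $t$, since the restriction $f(x_i,\mathbf{y})$ has exactly $r$ unknown coefficients, any $r$ points of the batch determine it, and the remaining $tr$ points split into $t$ disjoint such $r$-subsets; and (iii) the distance bound, since on each of the $\geq b-\alpha$ non-zero-fibers the restriction is a nonzero degree-$\beta$ form, hence vanishes at $\leq r-1$ batch points, leaving at least $(tr+1)-(r-1)=(t-1)r+2$ nonzero coordinates per such fiber. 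So even setting aside the point-count problem, your distance argument is left incomplete at the very place the paper's general-position hypothesis settles it in one line; the asymptotic-goodness computation at the end of your proposal is the only part that matches the paper essentially verbatim.
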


\subsection*{Outline}

In Section~\ref{sec:preliminaries}, we establish the necessary background for the paper. 
In Section~\ref{sec:line-baseline-codes}, we use the affine plane to provide a 
construction of LRCs, we determine their parameters, we extensively study when and how they are 
optimal, and we showcase a method to increase their minimum distance. 
This includes a useful detour through matroid theory. 
In Section~\ref{sec:non-optimal}, we argue for the existence of nonoptimal LRCs among the 
aforementioned ones, and we construct one.
In Section~\ref{sec:construction}, we use arbitrarily high-dimensional projective varieties 
to generalize the construction of Section~\ref{sec:line-baseline-codes} to an infinite 
family of LRCs, we determine their parameters, and we show that it is a family of 
asymptotically good codes.

\subsection*{Notation}

The cardinality of a set $S$ will be denoted by $\#S$. 
A code will be denoted by a calligraphic letter $\calC$, 
and its dual code will be denoted by $\calC^{\bot}$. 
All the codes we handle are linear evaluation codes 
$\calC \coloneqq \mathrm{im} \ev_{\calP}$ with $\calP$ a finite set of points. 
A code word of $\calC$ will be denoted by a bold lowercase letter $\mathbf{c}$. 
Scalars in $\FF_q$ will be denoted by a lowercase $x$, and elements of 
$\FF_q^n$ for $n > 1$ will be denoted by a bold lowercase $\mathbf{x}$. 
The $n$-dimensional affine space over $\FF_q$ is denoted by $\Aff^n$, and its Cartesian 
coordinates are denoted by $\Aff_{x_1,\dots,x_n}^n$. The $n$-dimensional projective 
space over $\FF_q$ is denoted by $\PP^n$, its homogeneous coordinates are denoted 
by $\PP_{\mathbf{x}}^n$ where $\mathbf{x} = [x_0,\dots,x_n]$. When we write $\PP_{x_0,\dots,x_{n-1}}^n$, the coordinates $x_0,\dots,x_{n-1}$ are the affine coordinates in the chart $D_+(x_n)$ where the last homogeneous coordinate $x_n$ is nonzero. 
Given a projective variety $X$ over $\FF_q$, we denote by $X(\FF_q)$ the $\FF_q$-rational 
points of $X$.

\section{Preliminaries}\label{sec:preliminaries}

In this section we briefly recall the basic notions of locally recoverable codes 
with availability and the basic constructions of algebro-geometric codes. 
For the fundamentals we refer the reader to~\cites{TsfasmanVluadact91,Walker00}.

\subsection{Linear locally recoverable codes}\label{ss:linear-lrcs}

A \emph{linear code} $\calC$ over a finite field $\mathbb{F}_q$ is a linear subspace of $\mathbb{F}_q^n$. We call $n$ the \emph{length} of $\calC$. We denote by $k$ the \emph{dimension} of $\calC$ as an $\mathbb{F}_q$-vector space. We denote by $d$ the \emph{minimum distance} of $\calC$, which is the minimum pairwise separation between two distinct elements of $\calC$ in the Hamming metric, or equivalently the minimum Hamming weight of the nonzero code words of $\calC$, that is, the minimum number of nonzero coordinates of the nonzero code words of $\calC$. The \emph{information rate} of a linear code $\calC$ is the ratio $k/n$; the \emph{relative distance} is the ratio $d/n$.

A code $\calC$ is said to be \emph{locally recoverable} (LRC) with \emph{locality} $r$ if for each symbol $c_i$ in a code word $\mathbf{c} = (c_1,\dots,c_n) \in \calC$, there exists a \emph{recovery set} $R_i\subset \{1,\dots,n\}\setminus \{i\}$ with $\#R_i \leq r$ such that $c_i$ is a function of the symbols $\{c_j\}_{j \in R_i}$. In particular, if the $i$-th coordinate of $\mathbf{c}$ is lost, it can be recovered by accessing $\leq r$ other coordinates in $\mathbf{c}$. Trivially, every linear code is a LRC with locality $r = k$. 
Our convention of locality throughout this paper is also referred to as \emph{all-symbol locality} in the literature~\cite{BargTamoVluadact17}. An LRC $\calC$ with locality $r$ is said to have \emph{availability} $t$ if for every $c_i$ in a code word $\mathbf{c} = (c_1,\dots,c_n) \in \calC$ there exist $t$ disjoint recovery sets $R_{i,\ell}\subset \{1,\dots,n\}\setminus \{i\}$ with $\#R_{i,\ell} \leq r$ such that $c_i$ is a function of the symbols $\{c_j\}_{j \in R_{i,\ell}}$ for $1 \leq \ell \leq t$.

The parameters of an LRC $\calC$ are denoted $[n,k,d;r,t]_q$, or simply $[n,k,d;r]_q$ when $t = 1$. 
They are constrained by relations like the Singleton-type bound for the minimum distance
\begin{equation}\label{eq:singleton-type_bound}
       d \leq n - k - \bigg\lceil \frac{k}{r} \bigg\rceil + 2,
\end{equation}
and the following bound for the information rate
\begin{equation*}\label{eq:information_rate_bound_LRC}
    \frac{k}{n} \leq \frac{r}{r+1},
\end{equation*}
both of which were proven independently in~\cites{GopalanChengSimitciYekhanin12, PapailiopoulosDimakis14}.

An LRC $\calC$ with parameters $[n,k,d;r]_q$ whose minimum distance achieves equality in \eqref{eq:singleton-type_bound} is said to be \emph{optimal}. When an LRC $\calC$ with parameters $[n,k,d;r]_q$ has the property that each code word is partitioned into sets of $r+1$ elements where recoverability takes place within each set, as is often the case with algebro-geometric locally recoverable codes, a particularly simple proof of~\eqref{eq:singleton-type_bound} was given in~\cite{SalgadoVarillyAlvaradoVoloch21}*{Theorem~I.3}. 

Let $\mathscr{C} = \{\calC_i\}_{i=1}^{\infty}$ be a family of codes (not necessarily LRCs) and denote by $[n_i,k_i,d_i]_q$ the parameters of $\calC_i$ for all $i\geq 1$. We say that $\mathscr{C}$ is a family of \emph{asymptotically good} codes when
\[
\lim_{i\to\infty}\frac{d_i}{n_i} > 0,
\quad\text{and}\quad
\lim_{i\to\infty}\frac{k_i}{n_i} > 0.
\]

\subsection{Algebro-geometric evaluation codes}\label{ss:ag-codes}

Algebraic Geometry supplies an abundance of constructions for linear codes 
under the framework of \emph{evaluation codes}. To specify such a code one 
needs a triple $(X,\calP,V)$, where $X$ is a quasi-projective variety over 
a finite field $\FF_q$, $\calP = \{ P_1, \dots, P_n\}$ is a set of $n$ 
points in $X(\FF_q)$, and $V$ is a finite-dimensional subspace of the 
function field of $X$. From this data, we construct a linear code as 
the image of the \emph{evaluation map} 
\begin{equation*}
\begin{tikzcd}[row sep=3em]
\ev_\mathcal{P} \colon &[-3.2em] V \arrow{r} & \FF_q^n,\\[-3.1em]
{} & f \arrow[maps to]{r} & (f(P_1),\dots,f(P_n)).
\end{tikzcd}
\end{equation*}
Such an algebro-geometric code $\calC \coloneqq \mathrm{im} \ev_{\calP}$ has length $n$ and dimension
\[
k \coloneqq \dim_{\FF_q} \mathrm{im} \ev_{\calP}  = \dim_{\FF_q} V - \dim_{\FF_q} \ker \ev_{\calP}.
\]
In particular, when $\ev_{\calP}$ is injective we have $k = \dim_{\FF_q} V$.

Beyond constructing codes, Algebraic Geometry provides a toolbox for proving properties of the codes it furnishes. Classically, the Riemann--Roch Theorem gives bounds for the dimension and the minimum distance of a code arising from smooth projective curves (see~\cite{Goppa77} and~\cite{Walker00}*{Theorem~6.4}). 
In this paper, the cornerstone of several arguments to determine properties or bounds on the parameters of the LRCs we construct is the following remarkable and celebrated bound on the cardinality of the rational points on algebraic varieties over finite fields.

\begin{theorem}[Lang--Weil estimate~\cite{LangWeil54}*{Lemma 1}]\label{theo:lang-weil-estimate}
Let $m$, $d$, and $s$ be nonnegative integers with $d > 0$, and let $q$ be a prime power. There exists a positive constant $A(m,d,s)$ such that for every $\FF_q$ and every variety $X\subseteq \PP^m$ of pure dimension $s$ and degree $d$, we have
\begin{equation*}
\#X(\FF_q) \leq A(m,d,s) q^s.
\end{equation*}
\end{theorem}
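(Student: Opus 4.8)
The plan is to establish the upper bound by induction on the dimension $s$, using hyperplane sections together with a double-counting identity. The geometric inputs are elementary: intersecting a variety of pure dimension $s\geq 1$ with a hyperplane that contains none of its components yields a variety of pure dimension $s-1$ and of no larger degree (Krull's principal ideal theorem, plus additivity of degree over components), and the number of $\FF_q$-rational hyperplanes through a fixed point, or containing a fixed linear subspace, is read off from the dual projective space. It is convenient to prove the slightly more flexible statement that $\#X(\FF_q)\leq A(m,d,s)\,q^s$ for every $X\subseteq\PP^m$ of pure dimension $s$ and degree \emph{at most} $d$, which visibly implies the theorem.

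For the base case $s=0$, the variety $X$ is a finite reduced scheme of at most $d$ points, so $\#X(\FF_q)\leq d$ and one may take $A(m,d,0)=d$. For the inductive step, fix $X\subseteq\PP^m$ of pure dimension $s\geq 1$ and degree at most $d$, write $(\PP^m)^{\vee}$ for the dual projective space of hyperplanes, and count incident pairs $(P,H)$ with $P\in X(\FF_q)$ and $H\in(\PP^m)^{\vee}(\FF_q)$ passing through $P$. Since the hyperplanes through a fixed point form a copy of $\PP^{m-1}$, this yields
\[
\#X(\FF_q)\cdot\#\PP^{m-1}(\FF_q)\;=\;\sum_{H\in(\PP^m)^{\vee}(\FF_q)}\#(X\cap H)(\FF_q).
\]
Split the hyperplanes $H$ into those containing no component of $X$ and those containing at least one. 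For the former, $X\cap H$ has pure dimension $s-1$ and degree at most $d$, so $\#(X\cap H)(\FF_q)\leq A(m,d,s-1)\,q^{s-1}$ by induction, and there are at most $\#\PP^m(\FF_q)$ of them. For the latter, a component of $X$ spans a linear subspace of dimension at least $s$, so the hyperplanes containing it form a linear subspace of $(\PP^m)^{\vee}$ of dimension at most $m-1-s$; as $X$ has at most $d$ components, there are at most $d\,\#\PP^{m-1-s}(\FF_q)$ such $H$ (read as $0$ when $s=m$), each contributing at most $\#X(\FF_q)$. Rearranging,
\[
\#X(\FF_q)\bigl(\#\PP^{m-1}(\FF_q)-d\,\#\PP^{m-1-s}(\FF_q)\bigr)\;\leq\;\#\PP^m(\FF_q)\cdot A(m,d,s-1)\,q^{s-1}.
\]
When $q\geq 4d$, the hypothesis $s\geq1$ makes the bracketed factor at least $\tfrac12 q^{m-1}$ while $\#\PP^m(\FF_q)\leq 2q^m$, whence $\#X(\FF_q)\leq 4\,A(m,d,s-1)\,q^s$; when $q<4d$, the crude bound $\#X(\FF_q)\leq\#\PP^m(\FF_q)\leq 2q^m\leq 2(4d)^{m-s}q^s$ finishes the job. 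Setting $A(m,d,s)=\max\{4A(m,d,s-1),\,2(4d)^{m-s}\}$, an explicit function of $m$, $d$, $s$, completes the induction.

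The split on the size of $q$ is forced if the constant is to be uniform: for small $q$ a variety may contain every $\FF_q$-point of an ambient hyperplane, so no inductive estimate is clean there and one has no choice but to absorb the loss into a bound on $q$ in terms of $d$. I expect the real work to lie in the geometric bookkeeping rather than the arithmetic, namely in making precise, uniformly over all $\FF_q$, that the reduction of a hyperplane section remains of pure dimension and controlled degree; this forces one to work with geometrically irreducible components and to keep track of how the absolute Galois group permutes them when identifying the ``bad'' $\FF_q$-hyperplanes. A conceptual alternative would project $X$ linearly onto $\PP^s$ by a map generically finite of degree at most $d$, reducing everything at once to $\#\PP^s(\FF_q)\leq 2q^s$; but this shares the small-$q$ defect (no such projection need be defined over $\FF_q$) and additionally requires a genericity analysis of the center of projection, so the hyperplane-section induction seems the most economical route.
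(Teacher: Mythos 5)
Your argument is correct, but it is worth saying up front that the paper does not prove this statement at all: Theorem~\ref{theo:lang-weil-estimate} is quoted directly from Lang--Weil's Lemma~1, and the surrounding text explicitly notes that only this coarse form is needed for the asymptotic point counts in Example~\ref{ex:badcodes} and Theorem~\ref{theo:optimal-limit}. So your hyperplane-section induction is a genuinely different route in the sense that it replaces a citation by a self-contained elementary proof. I checked the steps: the double count $\#X(\FF_q)\cdot\#\PP^{m-1}(\FF_q)=\sum_H\#(X\cap H)(\FF_q)$ is valid because the $\FF_q$-hyperplanes through an $\FF_q$-point form an $\FF_q$-rational $\PP^{m-1}$ in the dual space; for a ``good'' $H$ the intersection is of pure dimension $s-1$ (Krull plus the fact that no top-dimensional component survives) and its reduced structure has degree at most $d$ by B\'ezout, so the inductive hypothesis in the ``degree at most $d$'' form applies; the count of ``bad'' hyperplanes only uses that each geometric component spans a linear space of dimension at least $s$ and that a (possibly irrational) linear subspace of dimension $m-1-s$ has at most $\#\PP^{m-1-s}(\FF_q)$ rational points, so the Galois-descent worry you raise at the end is already harmless as written; and the arithmetic in the two regimes $q\geq 4d$ and $q<4d$ gives the recursion $A(m,d,s)=\max\{4A(m,d,s-1),\,2(4d)^{m-s}\}$, an explicit constant depending only on $(m,d,s)$, exactly as the statement requires. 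What your approach buys is an effective, entirely elementary constant (roughly $4^s$ times $\max\{d,2(4d)^{m-s}\}$), which is more than enough for the paper's purposes since only the order of magnitude $O(q^s)$ enters the probability estimates; what it gives up, compared with Lang--Weil's actual results, is any sharpness --- their paper proves much finer statements (e.g.\ $q^s+O(q^{s-1/2})$ for geometrically irreducible varieties), none of which are needed here.
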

Lang and Weil prove sharper results in their seminal paper~\cite{LangWeil54}. We only use the coarse result above in Example~\ref{ex:badcodes} and Theorem~\ref{theo:line-fiber-optimal} below because our arguments deal only with the asymptotic behavior of certain point counts.

\section{Codes from line bundles and certain baseline codes}\label{sec:line-baseline-codes}

In this section we construct LRCs that are optimal for low values of the locality parameter, and that are optimal with high probability for the remaining values of the locality parameter. As the size of the alphabet increases, these codes are optimal with probability $1$.

\subsection{Set-up}\label{ss:set-up}

We fix positive integers $b$, $r$, and a prime power $q$. We define $n \coloneqq b (r+1)$ and $k \coloneqq  (b-1) r$. 
Let $X \coloneqq \Aff^1_{x} \times \Aff^1_{y}$. Pick $b$ distinct points $x_1,\dots,x_b$ in $\Aff^1_x(\FF_q) = \FF_q$, which we 
call \emph{points on the base}, and pick $n$ distinct points
\[
y_{1,1},\dots,y_{1,r+1},\dots, y_{b,1},\dots,y_{b,r+1} \in \Aff^1_y(\FF_q) = \FF_q
\]
which together form the set of points
\[
\calP = \left\{ (x_i,y_{i,j}) \right\}_{1 \leq i \leq b}^{1 \leq j \leq r+1} \subset \FF_q^2 = X(\FF_q).
\]
This set-up is sketched in Figure~\ref{fig:code-schematics} in the case when $r=3$.

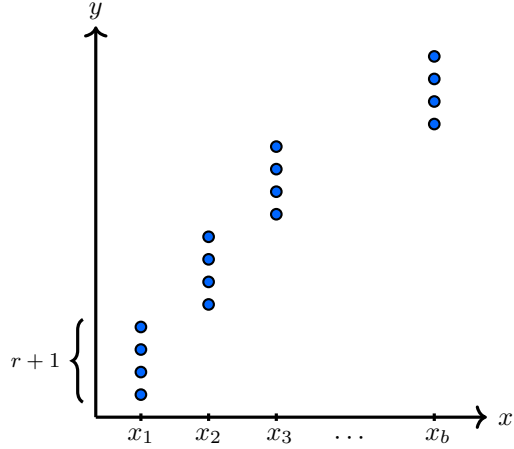
\begin{figure}[ht]
\begin{center}
\begin{tikzpicture}
\draw[->][very thick] (0,0) -- (5.2,0);
\draw[->][very thick] (0,0) -- (0,5.2);
\node [label={[xshift=0.25cm, yshift=-0.35cm]$x$}] at (5.2,0){};
\node [label={[xshift=0.cm, yshift=-0.15cm]$y$}] at (0,5.2){};

\filldraw [thick, fill=blue!60!cyan,opacity=0.99](0.6,0.3) circle [radius=2pt];
\filldraw [thick, fill=blue!60!cyan,opacity=0.99](0.6,0.6) circle [radius=2pt];
\filldraw [thick, fill=blue!60!cyan,opacity=0.99](0.6,0.9) circle [radius=2pt];
\filldraw [thick, fill=blue!60!cyan,opacity=0.99](0.6,1.2) circle [radius=2pt];

\filldraw [thick, fill=blue!60!cyan,opacity=0.99](1.5,1.5) circle [radius=2pt];
\filldraw [thick, fill=blue!60!cyan,opacity=0.99](1.5,1.8) circle [radius=2pt];
\filldraw [thick, fill=blue!60!cyan,opacity=0.99](1.5,2.1) circle [radius=2pt];
\filldraw [thick, fill=blue!60!cyan,opacity=0.99](1.5,2.4) circle [radius=2pt];

\filldraw [thick, fill=blue!60!cyan,opacity=0.99](2.4,2.7) circle [radius=2pt];
\filldraw [thick, fill=blue!60!cyan,opacity=0.99](2.4,3) circle [radius=2pt];
\filldraw [thick, fill=blue!60!cyan,opacity=0.99](2.4,3.3) circle [radius=2pt];
\filldraw [thick, fill=blue!60!cyan,opacity=0.99](2.4,3.6) circle [radius=2pt];

\filldraw [thick, fill=blue!60!cyan,opacity=0.99](4.5,3.9) circle [radius=2pt];
\filldraw [thick, fill=blue!60!cyan,opacity=0.99](4.5,4.2) circle [radius=2pt];
\filldraw [thick, fill=blue!60!cyan,opacity=0.99](4.5,4.5) circle [radius=2pt];
\filldraw [thick, fill=blue!60!cyan,opacity=0.99](4.5,4.8) circle [radius=2pt];

\draw[very thick] (0.6,-0.05) -- (0.6,0.05);
\draw[very thick] (1.5,-0.05) -- (1.5,0.05);
\draw[very thick] (2.4,-0.05) -- (2.4,0.05);
\draw[very thick] (4.5,-0.05) -- (4.5,0.05);

\begin{scope}
    \node [label={[xshift=0cm, yshift=-.6cm]$x_{1}$}] at (0.6,0){};
    \node [label={[xshift=0cm, yshift=-.6cm]$x_{2}$}] at (1.5,0){};
    \node [label={[xshift=0.05cm, yshift=-.6cm]$x_{3}$}] at (2.4,0){};
    \node [label={[xshift=0.05cm, yshift=-.6cm]$x_{b}$}] at (4.5,0){};
    \node [label={[xshift=0.05cm, yshift=-.6cm]$\cdots$}] at (3.35,0){};
\end{scope}

\draw [very thick][decorate,decoration={brace,amplitude=4pt,raise=2pt},yshift=0pt]
(-0.1,0.2) -- (-0.1,1.3) node [black,midway,xshift=-0.7cm] {\footnotesize $r+1$};

\end{tikzpicture}
\caption{The $n=b(r+1)$ points in $\mathcal{P}$.}
\label{fig:code-schematics}
\end{center}
\end{figure}

Label the points of $\mathcal{P}$ as $P_1,\dots P_n$. Consider the vector space of polynomials 
\begin{equation}\label{eq:vector-space-v}
V = \left\{\sum_{\ell = 0}^{r-1} a_\ell(x) y^\ell : a_\ell(x)\in \FF_q[x], \deg a_\ell(x) \leq b-2 \right\} \subset \FF_q[x,y],
\end{equation}
and define $\calC$ as the image of the linear evaluation map
\begin{equation}\label{eq:eval}
\begin{tikzcd}[row sep=3em]
\ev_\mathcal{P} \colon &[-3.5em] V \arrow{r} & \FF_q^n,\\[-3.1em]
{} & f(x,y) \arrow[maps to]{r} & (f(P_1),\dots,f(P_n)).
\end{tikzcd}
\end{equation}
It will be convenient to arrange the points $\calP$ in \emph{batches}
\[
A_i \coloneqq \{(x_i,y_{i,1}), \dots , (x_i,y_{i,r+1})\} \subset \calP
\]
for each $1 \leq i \leq b$, whence $\mathcal{P} = \coprod_{i=1}^b A_i$. The set $\{(x_i,y) : y\in \FF_q\} \subset X$ is called the \emph{fiber} above $x_i$ for $1\leq i\leq b$. A \defi{zero-fiber} of a polynomial $f(x,y)\in V$ is a batch $A_i$ where $f(P) = 0$ for 
all $P \in A_i$. A \defi{zero-fiber} of a code word $\mathbf{c} = \ev_{\calP}(f(x,y))$ 
for some $f(x,y)\in V$ is a zero-fiber of $f(x,y)$.

\newpage

\begin{remarks}\label{rema:simplicity-first-construction}
\
\begin{enumerate}[leftmargin=*]
\item For simplicity and clarity we begin by considering points in $\Aff^1_{x} \times \Aff^1_{y}$. We generalize this construction in Section~\ref{sec:construction}, where it is convenient to note that $\Aff_{x}^1 \times \Aff_{y}^1$ can be identified with an open subset of $\PP_{\mathbf{x}}^1 \times \PP_{\mathbf{y}}^1$.
\smallskip
\item The points in $\mathcal{P}$ are in \emph{general position} in the sense defined in Section~\ref{sec:construction}. 
Here, general position reduces to the statement that a nonzero polynomial $g(y) \in \FF_q[y]$ of degree $\leq r - 1$ cannot vanish along the $y$-coordinates of a batch of points. This follows by construction, since no two points in a batch $A_i$ share the same $y$-coordinate. 
This condition is crucial to our computation of the dimension of $\calC$ in Lemma~\ref{lem:injectivity}, our proof that the codes $\calC$ have locality $r$ in Lemma~\ref{lem:locality}, and our proof of optimality for small values of $r$ in Theorem~\ref{theo:line-fiber-optimal}.
\smallskip
\item \label{item:strong-general-position} The points in $\mathcal{P}$ satisfy a stronger condition than being in general position: No two share the same $y$-coordinate. We leverage this stronger condition in the probabilistic argument in Example~\ref{ex:badcodes} and Theorem~\ref{theo:optimal-limit}. These probabilistic arguments do not appear in Section~\ref{sec:construction}, which is why we use the weaker notion of general position there.
\smallskip
\item Our choice of polynomials in $V$ requires that the number $b$ of fibers satisfies $b \geq 2$. 
In Lemma~\ref{lem:exist_weight} below, we will require that $b \geq 3$ to find a code word in $\calC$ of weight $r+3$.
\end{enumerate}
\end{remarks}

\begin{example}\label{exam:small_r_optimal_example}
Let $q = 31$, $b = 4$ and $r = 3$, so that $n = 16$ and $k = 9$. Consider the set
\begin{align*}
    \mathcal{P} = \{ &(1,1),(1,2),(1,3),(1,4),\\
                     &(6,5),(6,6),(6,7),(6,8),\\
                     &(17,9),(17,10),(17,11),(17,12),\\
                     &(23,20),(23,21),(23,22),(23,23) \} \subset \FF^2_q,
\end{align*}
and the vector space of polynomials
\begin{align*}
V = \{ &(a_{00} + a_{01}x + a_{02}x^2) + (a_{10} + a_{11}x + a_{12}x^2)y\\
     + &(a_{20} + a_{21}x + a_{22}x^2)y^2 : a_{ij} \in \mathbb{F}_{31}\, \text{for all}\, 0\leq i,j\leq 2 \}.
\end{align*}
The code $\mathcal{C}$ is the image of the evaluation $\ev_{\mathcal{P}} \colon V \to \FF_{31}^{16}$ where $f(x,y) \mapsto (f(P_1),\dots,f(P_{16}))$. Let
\begin{align*}
f(x,y) &= (x-6)(x-23)(y-4)(y-10) \\
&= x^2y^2+17x^2y+9x^2+2xy^2+3xy+18x+14y^2+21y+2.
\end{align*}
The code word associated to $f(x,y)$ is
\[
\mathbf{c} = \ev_\calP(f(x,y)) = (25, 24, 26, 0, 0, 0, 0, 0, 20, 0, 3, 29, 0, 0, 0, 0).
\]
The sets of points
\[
A_2 \coloneqq \{(6,5),(6,6),(6,7),(6,8)\}\quad\text{and}\quad
A_3 \coloneqq \{(23,20),(23,21),(23,22),(23,23)\}
\]
are zero-fibers for $f(x,y)$ and (equivalently) for $\mathbf{c}$. The weight of $\mathbf{c}$ is $6$, showing that $d \leq 6$ for $\calC$. A {\tt Magma} or {\tt SageMath} calculation shows that this upper bound is sharp, that is, $d = 6$ for this code.
\end{example}

\subsection{Relations to existing codes}
\label{ss:relation-to-literature}

A code $\calC$ as in Section~\ref{ss:set-up} has important similarities and 
differences with the codes considered 
in~\cites{TamoBarg14,SalgadoVarillyAlvaradoVoloch21}. 
In~\cite{SalgadoVarillyAlvaradoVoloch21}*{Section III.B}, the authors consider 
algebro-geometric codes arising from a triple of data 
$(X,\calP,V[N])$ where $X = \Aff^1_x\times\Aff^1_y$, 
like our codes, and where 
\[
V[N] = \left\{\sum_{\ell = 0}^{r-1} a_\ell(x) y^\ell : a_\ell(x)\in \FF_q[x], \deg a_\ell(x) \leq N \right\} \subset \FF_q[x,y],
\]
for a nonnegative integer $N$. Our codes consider mostly the case where $N = b-2$, 
though we look at smaller values of $N$ in Section~\ref{ss:larger-minimum-distance}. 
The key difference between their codes and ours is the choice of points $\calP$ 
for code evaluation. We all consider a set of points $\calP$ partitioned into $b$ batches 
$A_1,\dots,A_b$ consisting of $r+1$ distinct points. 
However, in~\cite{SalgadoVarillyAlvaradoVoloch21}*{Section III.B} 
each batch is additionally 
constrained to satisfy an extra algebraic relation: 
If $A_i = \{(x_i,y_{i,j})\}_{j = 1}^{r+1}$, then there exists $g(x) \in \FF_q[x]$ 
a polynomial of degree $r+1$ such that $y_{i,j} = g(x_i)$ for all $1\leq i\leq b$ 
and $1\leq j\leq r+1$. 
Tamo and Barg's construction in~\cite{TamoBarg14} uses the polynomial 
$g(x) = x^{r+1}$, and more generally discusses the concept of 
\emph{good polynomials} $g(x)$. The polynomial $g(x)$ affords better control of the 
minimum distance of the resulting codes. Without this kind of control, we can only 
prove that the codes constructed in Section~\ref{ss:set-up} are optimal when 
$b \geq 3$ and $r = 1$, $2$ or $3$; see Theorem~\ref{theo:line-fiber-optimal}. 
However, when $r \geq 4$, our codes in Section~\ref{ss:set-up} are provably optimal 
with probability approaching $1$ for uniformly random choices of points 
$\calP$ as $q \to \infty$; see Theorem~\ref{theo:optimal-limit}. 
Although our codes for $r \geq 4$ are not always optimal 
(see Section~\ref{sec:non-optimal}), our investigation suggests that the use 
of good polynomials, like $g(x) = x^{r+1}$, imposes a serious constraint on 
the universe's supply of optimal LRCs.

Our codes also share superficial similarities with those of Munuera and Ten{\'o}rio~\cite{MunueraTenorio18}*{Section 2.2}, which is not surprising, as their codes generalize those in~\cites{TamoBarg14, BargTamoVluadact17}. However, our codes are neither a special case of the Munuera--Ten{\'o}rio construction, nor is there a clear common refinement of both constructions. To mimic our codes in the notation of~\cite{MunueraTenorio18}, one would need to take $m = 2$, $t = 1$, $\phi_1 = x_1 \eqqcolon x$, $\phi_2 = x_2 \eqqcolon y$, $\#\calS = b$, and $V_i = \{a(x) \in \FF_q[x] : \deg a(x) \leq b - 2\}$ for all $0 \leq i \leq r-1$. This would force $r = q-1$ in their set-up (a restriction we do not impose), and the set of evaluation points $\calP \subset \Aff^2_{x,y}(\FF_q)$ would contain $b(r+1) = bq$ points in $b$ batches with $r+1 = q$ overlapping $y$ coordinates. In our construction, it is essential that all the $b(r+1)$ points have distinct $y$ coordinates; see Remarks~\ref{rema:simplicity-first-construction}(\ref{item:strong-general-position}) and Figure~\ref{fig:code-schematics}.

\subsection{Dimension and locality of \texorpdfstring{$\calC$}{the code}}

\begin{lemma}\label{lem:injectivity}
The map $\ev_{\calP}$ in~\eqref{eq:eval} is injective. In particular, $\calC$ has dimension $k = (b-1) r$.
\end{lemma}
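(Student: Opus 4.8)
The plan is to show $\ker \ev_{\calP} = 0$. Suppose $f(x,y) = \sum_{\ell=0}^{r-1} a_\ell(x) y^\ell \in V$ lies in the kernel, so $f$ vanishes at all $n = b(r+1)$ points of $\calP$. I will exploit the fact that within each batch $A_i$ the polynomial $f$ vanishes at $r+1$ points all lying in the fiber $x = x_i$, while $f$ restricted to that fiber is a polynomial in $y$ of degree at most $r-1$.

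First I would fix $i$ with $1 \le i \le b$ and consider the one-variable polynomial $f(x_i, y) = \sum_{\ell=0}^{r-1} a_\ell(x_i)\, y^\ell \in \FF_q[y]$, which has degree at most $r-1$. Since $f$ vanishes on all of $A_i$, this polynomial has the $r+1$ distinct roots $y_{i,1}, \dots, y_{i,r+1}$ (distinct by construction — this is the general position / distinct-$y$-coordinate hypothesis recorded in Remarks~\ref{rema:simplicity-first-construction}). A nonzero polynomial of degree $\le r-1$ can have at most $r-1 < r+1$ roots, so $f(x_i, y)$ is the zero polynomial; hence $a_\ell(x_i) = 0$ for every $\ell = 0, \dots, r-1$. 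Running this over all $i$, each coefficient polynomial $a_\ell(x)$ vanishes at the $b$ distinct points $x_1, \dots, x_b$. But $\deg a_\ell(x) \le b-2 < b$, so again each $a_\ell(x)$ is identically zero, and therefore $f = 0$. This proves injectivity.

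For the dimension statement, injectivity gives $k = \dim_{\FF_q} \calC = \dim_{\FF_q} V$, so it remains to compute $\dim_{\FF_q} V$. The space $V$ has an evident spanning set $\{x^i y^\ell : 0 \le i \le b-2,\ 0 \le \ell \le r-1\}$ of size $(b-1)r$, and these monomials are linearly independent in $\FF_q[x,y]$, so $\dim_{\FF_q} V = (b-1)r$ and hence $k = (b-1)r$.

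The argument is essentially a two-step application of the fact that a nonzero polynomial in one variable of degree $D$ has at most $D$ roots; there is no real obstacle. The only point requiring care is invoking the correct hypothesis at the right moment: the collapse of $f$ on each fiber uses that the $r+1$ points in a batch have \emph{distinct} $y$-coordinates (so that $\deg_y f \le r-1$ genuinely forces vanishing), and the collapse of the $a_\ell(x)$ uses the bound $\deg a_\ell \le b-2$ together with the $b$ distinct base points $x_1, \dots, x_b$. I would state both explicitly so the reader sees where Remarks~\ref{rema:simplicity-first-construction}(2) and the set-up in Section~\ref{ss:set-up} are used.
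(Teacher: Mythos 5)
Your proof is correct and follows essentially the same route as the paper: restrict $f$ to each fiber to get a degree $\leq r-1$ polynomial in $y$ with $r+1$ distinct roots, conclude $a_\ell(x_i)=0$ for all $i$, and then use $\deg a_\ell \leq b-2$ against the $b$ distinct base points to force $a_\ell \equiv 0$. The only (harmless) addition is your explicit verification that $\dim_{\FF_q} V = (b-1)r$ via the monomial basis, which the paper leaves implicit.
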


\begin{proof}
Suppose that $f(x,y) \in \ker(\ev_\mathcal{P})$ for $f(x,y) = \sum_{\ell=0}^{r-1}a_\ell(x) y^\ell$. We show that $f(x,y) \equiv 0$ in $\FF_q[x,y]$. As 
\[
\ev_{\mathcal{P}}(f(x,y)) = (0,\dots,0),
\] 
we have
\[
\sum_{\ell=0}^{r-1} a_\ell(x_i) y_{i,j}^\ell = f(x_i,y_{i,j}) = 0
\]
for all $1\leq i\leq b$ and all $1\leq j\leq r+1$. 
Hence, for all $1\leq i\leq b$ the polynomials $f(x_i,y) \in \FF_q[y]$ of degree $\leq r-1$ have the $r+1$ distinct zeros $y_{i,1},\dots,y_{i,r+1}$, implying $f(x_i,y) = 0$ as an element of $\FF_q[y]$. In turn, this shows that for all $0\leq \ell\leq r-1$ the polynomials $a_\ell(x) \in \FF_q[x]$ of degree $\leq b-2$ have the $b$ distinct zeros $x_1,\dots,x_b$, so $a_\ell(x) = 0$ as an element of $\FF_q[x]$. Hence $f(x,y)$ is the zero polynomial, as claimed, and
\begin{equation*}
\dim_{\FF_q} \calC \coloneqq \dim_{\FF_q} V = (b-1)r = k.
\eqno\qed
\end{equation*}
\hideqed
\end{proof}

\begin{lemma}\label{lem:locality}
The code $\mathcal{C}$ has locality $r$.
\end{lemma}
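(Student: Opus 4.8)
The plan is to exhibit, for every code word $\mathbf{c} = \ev_{\calP}(f(x,y))$ and every coordinate index corresponding to a point $P = (x_i, y_{i,j}) \in A_i$, a recovery set of size $r$ inside the same batch $A_i$. The natural candidate is $R = \{\,\text{indices of the other } r \text{ points } (x_i, y_{i,j'}) \in A_i,\ j' \neq j\,\}$, which has exactly $r$ elements and does not contain the index of $P$. So the content of the lemma is that $f(x_i, y_{i,j})$ is determined by the values $f(x_i, y_{i,j'})$ for $j' \neq j$.

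First I would restrict $f$ to the fiber above $x_i$: set $g(y) \coloneqq f(x_i, y) = \sum_{\ell=0}^{r-1} a_\ell(x_i)\, y^\ell \in \FF_q[y]$, a polynomial of degree $\leq r-1$. The $r+1$ points of the batch have pairwise distinct $y$-coordinates $y_{i,1}, \dots, y_{i,r+1}$ (this is the general position / distinct-$y$-coordinate hypothesis recorded in Remarks~\ref{rema:simplicity-first-construction}), so any $r$ of them, say the $y_{i,j'}$ with $j' \neq j$, are $r$ distinct points, and a polynomial of degree $\leq r-1$ is uniquely determined by its values at $r$ distinct points via Lagrange interpolation. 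Concretely,
\begin{equation*}
f(x_i, y_{i,j}) = g(y_{i,j}) = \sum_{j' \neq j} g(y_{i,j'}) \prod_{\substack{j'' \neq j,\, j''}} \frac{y_{i,j} - y_{i,j''}}{y_{i,j'} - y_{i,j''}},
\end{equation*}
so the missing symbol is an explicit $\FF_q$-linear combination of the $r$ symbols indexed by $R$. This is a genuine function (indeed linear) of those $r$ values, which is exactly what is required.

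I would then note that the recovery sets $R$ constructed this way depend only on the batch $A_i$ containing the coordinate, not on the code word $\mathbf{c}$, so the same sets work uniformly across all of $\calC$; this confirms all-symbol locality with parameter $r$ in the sense of Section~\ref{ss:linear-lrcs}. The only thing to double-check is that the recovery map genuinely factors through the code, i.e.\ that the interpolation recipe above gives the correct value for every $f \in V$ and not merely for a fixed one — but this is immediate since the identity $g(y_{i,j}) = \sum_{j' \neq j} g(y_{i,j'}) L_{j'}(y_{i,j})$ holds for \emph{all} polynomials $g$ of degree $\leq r-1$, and every $f \in V$ restricts to such a $g$ on the fiber above $x_i$ by the shape of $V$ in~\eqref{eq:vector-space-v}. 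There is no real obstacle here; the one subtlety worth stating explicitly is that $\deg_y f \leq r-1$ while each batch has $r+1 > r$ points with distinct $y$-coordinates, so even after discarding the lost coordinate we retain $r \geq \deg_y f + 1$ evaluation points — precisely the margin needed for interpolation, and the reason the construction uses batches of size $r+1$ rather than $r$.
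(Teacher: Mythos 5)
Your proposal is correct and follows essentially the same route as the paper: both restrict $f$ to the fiber over $x_i$, observe that $f(x_i,y)$ has degree $\leq r-1$ in $y$ while the remaining $r$ points of the batch have distinct $y$-coordinates, and recover the missing symbol by interpolation — the paper phrases this as inverting an $r \times r$ Vandermonde matrix, whereas you write the equivalent Lagrange interpolation formula explicitly. (The index set under the product in your display has a small typo — it should exclude both $j$ and $j'$ — but your later statement of the identity makes the intent clear.)
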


\begin{proof}
Let $f(x,y) = \sum_{\ell=0}^{r-1}a_\ell(x) y^\ell$ be an element of $V$, where $a_\ell(x)\in \FF_q[x]$ for all $0\leq \ell \leq r-1$. 
Suppose that a code word $\mathbf{c} = (f(P_1),\dots,f(P_n))$ is missing a symbol $f(P_i)$. 
Without loss of generality, we may assume that $i = 1$, so $P_1 = (x_{1},y_{1,1}) \in A_{1}$, 
where $A_{1} = \{(x_1,y_{1,j})\}_{1\leq j\leq r+1} \subset \mathcal{P}$. Consider the matrices 
\[
M=\begin{bmatrix}
    1 & y_{1,1} & y_{1,1}^2 & \cdots & y_{1,1}^{r-1}\\
    1 & y_{1,2} & y_{1,2}^2 & \cdots & y_{1,2}^{r-1}\\
    \vdots & \vdots & \vdots & \ddots & \vdots\\
    1 & y_{1,r} & y_{1,r}^2 & \cdots & y_{1,r}^{r-1}\\
    1 & y_{1,r+1} & y_{1,r+1}^2 & \cdots & y_{1,r+1}^{r-1}\\
\end{bmatrix}, \quad
\mathbf{a}=\begin{bmatrix}
    a_0(x_1)\\
    a_1(x_1)\\
    \vdots\\
    a_{r-1}(x_1)
\end{bmatrix},\quad
\text{and} \quad
\mathbf{F}=\begin{bmatrix}
    f(x_1,y_{1,1})\\
    f(x_1,y_{1,2})\\
    \vdots\\
    f(x_1,y_{1,r+1})
\end{bmatrix}.
\]
Note that $\mathbf{F} = M \mathbf{a} = \ev_{\mathcal{P}}|_{A_{1}}(f(x,y))$. 
That is, the components of $\mathbf{F}$ are the $r+1$ symbols in the code word obtained when restricting the evaluation to $A_1$. 
Let $M'$ and $\mathbf{F}'$ denote the matrices formed by deleting the first row in $M$ and $\mathbf{F}$, respectively. 
The matrix $M'$ is a square $r \times r$ Vandermonde matrix, with determinant
\[
\det(M') = \prod_{2\leq i < j \leq r+1} (y_{1,j}-y_{1,i}).
\]
Since no two points in $\mathcal{P}$ have the same $y$-coordinate, the matrix $M'$ has a nonzero determinant, and is thus invertible. Hence, we may solve $\mathbf{a} = (M')^{-1} \mathbf{F}'$. 
The components of $\mathbf{a}$ are the $r$ coefficients of the single variable polynomial $f(x_{1},y) \in \FF_q[y]$. The missing symbol is equal to the dot product of the row removed from $M$ and $\mathbf{a}$:
\[
f(P_1) = f(x_{1},y_{1,1}) = \sum_{\ell=0}^{r-1} a_{\ell}(x_{1})y_{1,1}^{\ell}.
\]
Therefore, 
we can recover any symbol of a code word using $r$ other symbols.
\end{proof}

\begin{corollary}
    \label{cor:singleton-type-bound}
    The minimum distance of $\calC$ satisfies $d \leq r + 3$.
\end{corollary}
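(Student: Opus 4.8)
The statement asks for the upper bound $d \leq r+3$ on the minimum distance of $\calC$, and the natural route is simply to apply the Singleton-type bound \eqref{eq:singleton-type_bound} together with the parameter computations already in hand. First I would recall that $\calC$ is locally recoverable with locality $r$ by Lemma~\ref{lem:locality}, so \eqref{eq:singleton-type_bound} applies and gives
\[
d \leq n - k - \left\lceil \frac{k}{r} \right\rceil + 2.
\]
Next I would substitute the known values $n = b(r+1)$ from the set-up and $k = (b-1)r$ from Lemma~\ref{lem:injectivity}. Then $n - k = b(r+1) - (b-1)r = br + b - br + r = b + r$, and $\lceil k/r \rceil = \lceil (b-1)r / r \rceil = b - 1$ since $b-1$ is already an integer. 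Plugging in, $d \leq (b+r) - (b-1) + 2 = r + 3$, which is exactly the claim.

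The only subtlety worth a sentence is that the Singleton-type bound \eqref{eq:singleton-type_bound} was stated in the preliminaries for LRCs in general, so I would make sure the reader sees that its hypotheses are met: $\calC$ is a linear code of length $n$, dimension $k = (b-1)r$ (hence in particular nonzero, using $b \geq 2$), with locality $r$. Once that is in place the bound is immediate. I would present this as a short paragraph of arithmetic rather than a displayed multi-line computation, to avoid any blank-line issues in a display environment.

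There is essentially no obstacle here: the corollary is a formal consequence of two results proved immediately before it, and the ceiling term collapses cleanly because $k$ is an exact multiple of $r$. If one wanted an entirely self-contained argument one could instead invoke the simpler proof of \eqref{eq:singleton-type_bound} from \cite{SalgadoVarillyAlvaradoVoloch21}*{Theorem~I.3}, which applies precisely because our code words are partitioned into the batches $A_1,\dots,A_b$ of size $r+1$ within which recovery takes place; but citing \eqref{eq:singleton-type_bound} directly is the cleanest path. Hence the proof is a one-line substitution, and I would keep it that brief.
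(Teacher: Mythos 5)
Your argument is correct and is essentially the paper's own proof: apply the Singleton-type bound~\eqref{eq:singleton-type_bound} with $n = b(r+1)$ from the construction and $k = (b-1)r$ from Lemma~\ref{lem:injectivity}, noting the ceiling term collapses to $b-1$, which yields $d \leq r+3$. Nothing further is needed.
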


\begin{proof}
    The code length of $\calC$ is $n = b(r+1)$ by construction. By Lemma~\ref{lem:injectivity}, $\calC$ has dimension $k = (b-1)r$. The Singleton-type bound~\eqref{eq:singleton-type_bound} gives the following upper bound for the minimum distance of $\calC$
\begin{equation*}
d \leq n - k - \left\lceil\frac{k}{r}\right\rceil + 2 = b(r+1) - (b-1)r - (b-1) + 2 = r + 3.
\eqno\qed
\end{equation*}
\hideqed
\end{proof}

The following lemma offers an alternative, constructive proof of Corollary~\ref{cor:singleton-type-bound} when $b\geq 3$.

\begin{lemma}\label{lem:exist_weight}
    Let $b \geq 3$. There exists a word in $\mathcal{C}$ that has weight $r+3$.
\end{lemma}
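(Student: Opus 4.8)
The goal is to exhibit an explicit polynomial $f(x,y) \in V$ whose evaluation $\mathbf{c} = \ev_{\calP}(f(x,y))$ has Hamming weight exactly $r+3$. The natural candidate, guided by Example~\ref{exam:small_r_optimal_example}, is a product of linear factors that kills as many batches as possible while keeping the degree constraints of $V$. Concretely, I would take
\[
f(x,y) = \left(\prod_{i=3}^{b} (x - x_i)\right)(y - y_{1,1})(y - y_{2,1}),
\]
so that $\deg_x f = b-2$ and $\deg_y f = 2 \leq r-1$ (using $r \geq 1$; note that when $r = 1$ the factor $(y-y_{1,1})(y-y_{2,1})$ must be replaced — see the obstacle paragraph), placing $f$ inside $V$ as defined in~\eqref{eq:vector-space-v}. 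First I would verify membership in $V$: expanding, $f$ is an $\FF_q$-linear combination of monomials $x^i y^\ell$ with $i \leq b-2$ and $\ell \leq 2 \leq r-1$.

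\textbf{Counting the zero coordinates.} The $x$-factor $\prod_{i=3}^b (x-x_i)$ vanishes identically on every batch $A_i$ with $3 \leq i \leq b$, contributing $(b-2)(r+1)$ zero coordinates. On the remaining batches $A_1$ and $A_2$, the $x$-factor is a nonzero constant, so a coordinate $(x_i, y_{i,j})$ with $i \in \{1,2\}$ is zero precisely when $(y_{i,j} - y_{1,1})(y_{i,j} - y_{2,1}) = 0$. Since all $y$-coordinates in $\calP$ are distinct (Remarks~\ref{rema:simplicity-first-construction}), within $A_1$ the only vanishing coordinate is the one with $y_{1,j} = y_{1,1}$, i.e.\ exactly one; likewise exactly one coordinate of $A_2$ vanishes. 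Hence the total number of zero coordinates is $(b-2)(r+1) + 2$, so the weight of $\mathbf{c}$ is
\[
n - \big((b-2)(r+1) + 2\big) = b(r+1) - (b-2)(r+1) - 2 = 2(r+1) - 2 = r + 3 \cdot \frac{?}{?},
\]
wait — recompute: $2(r+1) - 2 = 2r$, which is not $r+3$ in general. So the product of \emph{two} $y$-factors overshoots: I should instead use a \emph{single} $y$-factor to knock out one coordinate each in two different batches only if... Let me reconsider: with one factor $(y - y_{1,1})$ only, it vanishes in $A_1$ (one coordinate) and in any other batch $A_i$ only if some $y_{i,j} = y_{1,1}$, which never happens. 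That gives weight $n - (b-2)(r+1) - 1 = 2(r+1) - 1 = 2r+1$, still wrong. The correct fix is to reduce the $x$-degree: take $f(x,y) = \big(\prod_{i=4}^{b}(x-x_i)\big)(x - x_1)(y - y_{2,1})(y - y_{3,1})$ so three batches $A_1, A_4, \dots, A_b$ vanish wait — this is getting delicate, which is exactly the point of the obstacle below.

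\textbf{The main obstacle.} The real subtlety is bookkeeping the interplay between the $x$-degree budget $b-2$, the $y$-degree budget $r-1$, and how many coordinates each surviving batch retains, so that the arithmetic lands \emph{exactly} at $r+3$ rather than merely $\leq r+3$. The clean choice is: let $f(x,y) = \big(\prod_{i=3}^{b}(x - x_i)\big) \cdot g(y)$ where $g(y) \in \FF_q[y]$ has degree $\min(r-1, 2)$ and is chosen to vanish at one $y$-coordinate of $A_1$ and one of $A_2$ when $r \geq 2$ wait, again that is $2r$. The resolution that actually works (matching $d \leq r+3$ from Corollary~\ref{cor:singleton-type-bound}): use $\prod_{i=2}^{b}(x-x_i)$, which has degree $b-1 > b-2$ — too big. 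So one must be cleverer: $f(x,y) = \big(\prod_{i=3}^b (x-x_i)\big)\big(\ell(x,y)\big)$ where $\ell$ is a single polynomial of $x$-degree $1$ and $y$-degree $\leq r-1$ vanishing on all of $A_1$ (impossible with $y$-degree $\leq r-1$ by general position) — hence one must let $\ell$ vanish on $A_1$'s line $x = x_1$ entirely, i.e.\ $\ell = (x - x_1)h(y)$, recovering $\prod_{i\neq 2}(x-x_i)$ of degree $b-1$. I expect the correct construction in the paper uses $\deg_x a_\ell \leq b-2$ to kill $b-2$ full fibers and then a carefully chosen $y$-polynomial of degree $\leq r-1$ hitting \emph{one} point in each of the two surviving fibers plus possibly none — yielding weight $2(r+1) - (\text{number of hits})$; to reach $r+3$ one needs $r-1$ hits total across the two fibers, which requires $r - 1 \leq 2(r+1)$, always true, and is achievable by choosing $g(y)$ of degree $r-1$ with $r-1$ roots among the $2(r+1)$ distinct surviving $y$-values, e.g.\ $g(y) = \prod_{j=1}^{r-1}(y - y_{1,j})$, giving exactly $r-1$ hits in $A_1$ and $0$ in $A_2$, weight $(r+1) + (r+1) - (r-1) = r+3$. \emph{That} is the construction, and the crux is verifying $g$ hits exactly $r-1$ coordinates (immediate from distinctness of $y$-values and $\deg g = r-1 \leq r+1$) and that $f \in V$ (immediate from $\deg_x = b-2$, $\deg_y = r-1$). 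I would present it as: set $f(x,y) = \big(\prod_{i=3}^{b}(x - x_i)\big)\big(\prod_{j=1}^{r-1}(y - y_{1,j})\big)$, check $f \in V$, then count: zeros $= (b-2)(r+1) + (r-1)$, weight $= b(r+1) - (b-2)(r+1) - (r-1) = 2(r+1) - (r-1) = r+3$.
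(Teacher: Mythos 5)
Your final construction $f(x,y)=\bigl(\prod_{i=3}^{b}(x-x_i)\bigr)\bigl(\prod_{j=1}^{r-1}(y-y_{1,j})\bigr)$, together with the count of zeros $(b-2)(r+1)+(r-1)$ and weight $n-\bigl((b-2)(r+1)+(r-1)\bigr)=r+3$, is exactly the paper's proof up to relabeling (the paper kills the fibers $A_1,\dots,A_{b-2}$ and takes the $r-1$ roots in $y$ from the batch $A_{b-1}$), and your verification of membership in $V$ and of the exact zero count via distinctness of the $y$-coordinates matches the paper's. The exploratory false starts earlier in your write-up should simply be deleted; the closing paragraph alone is a complete and correct argument.
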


\begin{proof}
We exhibit a nonzero code word $\mathbf{c}$ with $r+3$ nonzero entries. As $b \geq 3$, we can define
\[
f(x,y) \coloneqq (x-x_1)(x-x_2)\cdots (x-x_{b-2})(y-y_{b-1,1})(y-y_{b-1,2})\cdots (y-y_{b-1,r-1}),
\]
and set $\mathbf{c} \coloneqq \ev_{\mathcal{P}}(f(x,y))$. The batches $A_1,\dots,A_{b-2}$ are zero-fibers for $f(x,y)$, so $f(x_i,y) \equiv 0$ in $\FF_q[y]$ for all $1\leq i\leq b-2$. We also have $f(x,y_{b-1,j}) \equiv 0$ in $\FF_q[x]$ for all $1\leq j\leq r-1$. 
Moreover 
\begin{align*}
&f(x_{b-1},y_{b-1,r}) \neq 0,\\
&f(x_{b-1},y_{b-1,r+1}) \neq 0,
\end{align*}
and
\[
f(x_b,y_{b,j}) \neq 0
\]
for $1\leq j\leq r+1$ because the $x$-coordinates are distinct, and the $y$-coordinates in the batch $A_{b-1}$ are distinct. 
Therefore $f(P) = 0$ for exacly $(b-2)(r+1)+(r-1)$ points $P\in \calP$. The number of nonzero entries of $\mathbf{c}$ is then 
\begin{equation*}
    n - ((b-2)(r+1) + (r-1)) = b(r+1) - (b-2)(r + 1) - (r-1) = r + 3.
    \eqno\qed
\end{equation*}
\hideqed
\end{proof}

\begin{example}\label{exam:small_r_optimal_example_bis}
The code word $\mathbf{c}$ in Example~\ref{exam:small_r_optimal_example} was constructed using the proof of Lemma~\ref{lem:exist_weight}.
\end{example}

\subsection{Minimum distance of \texorpdfstring{$\calC$}{the code} for small locality}

This subsection culminates in Theorem~\ref{theo:line-fiber-optimal}, where we show that, 
for small locality, the minimum distance achieves the Singleton-type bound in 
Corollary~\ref{cor:singleton-type-bound}. The constructed code $\calC$ is thus optimal.

We begin with an odd but remarkably useful observation.

\begin{observation}\label{obs:too-many-zeros}
If the set $A_i\subset \calP$ is a zero-fiber for a polynomial $f(x,y)\in V$, 
then $f(x_i,y)\in \FF_q[y]$ is a polynomial of degree $\leq r-1$ with $r+1$ zeros, 
whence $f(x_i,y) \equiv 0$ in $\FF_q[y]$. Consequently $f(x_i,y_{v,w}) = 0$ for all $1\leq v\leq b$ and all $1\leq w\leq r+1$.
\end{observation}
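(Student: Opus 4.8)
The plan is to unwind the two definitions in play — that of the vector space $V$ in \eqref{eq:vector-space-v} and that of a zero-fiber — and then invoke the elementary fact that a nonzero univariate polynomial of degree $\le r-1$ over a field has at most $r-1$ roots.

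Concretely, I would first write a member of $V$ as $f(x,y) = \sum_{\ell = 0}^{r-1} a_\ell(x) y^\ell$ with each $a_\ell(x) \in \FF_q[x]$; specializing the first variable to $x_i$ produces the univariate polynomial $f(x_i, y) = \sum_{\ell=0}^{r-1} a_\ell(x_i) y^\ell \in \FF_q[y]$, whose degree is at most $r-1$. Next I would use the hypothesis that $A_i$ is a zero-fiber of $f$: by definition this means $f(x_i, y_{i,j}) = 0$ for every $1 \le j \le r+1$. By construction no two points of $\calP$ share a $y$-coordinate (in particular the $r+1$ points of the batch $A_i$ have pairwise distinct $y$-coordinates), so $y_{i,1}, \dots, y_{i,r+1}$ are $r+1$ distinct roots of a polynomial of degree $\le r-1 < r+1$; therefore $f(x_i, y)$ is the zero polynomial in $\FF_q[y]$, i.e.\ $a_\ell(x_i) = 0$ for all $0 \le \ell \le r-1$. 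Finally, for arbitrary $1 \le v \le b$ and $1 \le w \le r+1$ I would simply substitute: $f(x_i, y_{v,w}) = \sum_{\ell=0}^{r-1} a_\ell(x_i)\, y_{v,w}^\ell = 0$, which is the last claim.

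There is essentially no obstacle here; the one point that deserves explicit mention is \emph{why} the batch $A_i$ supplies $r+1$ distinct values at which to evaluate, namely the distinct-$y$-coordinate hypothesis on $\calP$ recorded in Remarks~\ref{rema:simplicity-first-construction}(\ref{item:strong-general-position}) — it is precisely this that lets the crude count $r+1 > r-1$ force $f(x_i, y) \equiv 0$. In fact only the weaker general-position statement, that a nonzero $g(y) \in \FF_q[y]$ of degree $\le r-1$ cannot vanish on the $y$-coordinates of a batch, is needed, which matches the spirit of Remarks~\ref{rema:simplicity-first-construction}.
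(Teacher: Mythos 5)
Your argument is correct and matches the paper's own reasoning, which the observation carries in-line: specialize to $x_i$, note that $f(x_i,y)$ has degree $\leq r-1$ but vanishes at the $r+1$ distinct $y$-coordinates of the batch $A_i$, conclude $f(x_i,y)\equiv 0$ in $\FF_q[y]$, and then evaluate at any $y_{v,w}$. Your explicit mention that the distinct-$y$-coordinate condition on $\calP$ is what supplies the $r+1$ distinct roots is exactly the intended justification, so there is nothing to add.
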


\begin{lemma}[Fiber Vanishing Lemma]\label{lem:max_number_zeros}
     Let $b \geq 3$ and let $f(x,y)\in V$ be a nonzero polynomial. Then $f(x,y)$ has $\leq b-2$ zero-fibers.
\end{lemma}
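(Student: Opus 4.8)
The plan is to argue by contradiction via a dimension/degree count on the polynomial $f(x,y)$. Suppose $f(x,y) = \sum_{\ell=0}^{r-1} a_\ell(x) y^\ell$ is a nonzero element of $V$ with at least $b-1$ zero-fibers, say the batches $A_{i_1}, \dots, A_{i_{b-1}}$. By Observation~\ref{obs:too-many-zeros}, for each such index $i_s$ we have $f(x_{i_s}, y) \equiv 0$ in $\FF_q[y]$, which forces $a_\ell(x_{i_s}) = 0$ for every $0 \leq \ell \leq r-1$. Thus each coefficient polynomial $a_\ell(x) \in \FF_q[x]$ has at least $b-1$ distinct roots, namely $x_{i_1}, \dots, x_{i_{b-1}}$. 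But by the definition of $V$ in~\eqref{eq:vector-space-v}, each $a_\ell(x)$ has degree $\leq b-2 < b-1$. A nonzero polynomial of degree $\leq b-2$ cannot have $b-1$ distinct roots, so $a_\ell(x) \equiv 0$ for all $\ell$, whence $f(x,y) \equiv 0$ in $\FF_q[x,y]$, contradicting the hypothesis that $f$ is nonzero.

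I expect this argument to be essentially immediate, and the only point requiring care is the first step: extracting from "$A_{i_s}$ is a zero-fiber" the strong conclusion that the \emph{full} single-variable polynomial $f(x_{i_s}, y)$ vanishes identically, rather than merely at $r+1$ prescribed points. This is precisely what Observation~\ref{obs:too-many-zeros} supplies, and it in turn rests on the general-position hypothesis recorded in Remarks~\ref{rema:simplicity-first-construction}(2): the $r+1$ points in a batch have distinct $y$-coordinates, so a degree $\leq r-1$ polynomial vanishing on all of them must be identically zero. The hypothesis $b \geq 3$ guarantees that $b-2 \geq 1$, so the vector space $V$ is nontrivial and the statement is not vacuous; more importantly it is not actually needed for the contradiction itself (which only uses $b-1 > b-2$), but it is the standing assumption in this subsection and matches the companion result Lemma~\ref{lem:exist_weight}.

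If one prefers a slightly more structural phrasing, the same proof can be recast as follows: the subspace of $V$ consisting of polynomials for which a fixed batch $A_i$ is a zero-fiber is, by Observation~\ref{obs:too-many-zeros}, exactly the subspace of $f$ with $(x - x_i) \mid a_\ell(x)$ for all $\ell$; imposing this for $b-1$ distinct fibers forces each $a_\ell(x)$ to be divisible by $\prod_{s=1}^{b-1}(x - x_{i_s})$, a polynomial of degree $b-1$ exceeding the allowed degree bound $b-2$, hence $a_\ell = 0$. Either way the lemma follows, and the bound $b-2$ on the number of zero-fibers is visibly sharp by the explicit code word constructed in the proof of Lemma~\ref{lem:exist_weight}, which has exactly $b-2$ zero-fibers.
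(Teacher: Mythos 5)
Your proof is correct and follows essentially the same route as the paper's: Observation~\ref{obs:too-many-zeros} converts each zero-fiber into the vanishing of every coefficient $a_\ell(x)$ at the corresponding $x_i$, and the degree bound $\deg a_\ell(x) \leq b-2$ caps the number of such fibers at $b-2$. The only (harmless) difference is that the paper first isolates a single nonzero coefficient $a_m(x)$ via a point where $f$ does not vanish and contradicts its root count, whereas you conclude $a_\ell(x) \equiv 0$ for all $\ell$ and contradict $f \neq 0$ directly --- a slightly more streamlined phrasing of the same argument.
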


\begin{proof}
Let $f(x,y) = \sum_{\ell=0}^{r-1}a_\ell(x) y^\ell$ be an element of $V$, where $a_{\ell}(x)\in \FF_q[x]$ and $\deg a_{\ell}(x)\leq b-2$ for all $0\leq \ell\leq r-1$. 
Since $f(x,y) \not\equiv 0$ in $\FF_q[x,y]$, the code word $\mathbf{c} \coloneqq \mathrm{ev}_{\mathcal{P}}(f(x,y))$ is 
not zero by Lemma~\ref{lem:injectivity}. Without loss of generality, we may assume $f(P_n) \neq 0$ where $P_n = (x_b, y_{b,r+1})$. Then
\[
0 \neq f(P_n) = f(x_b,y_{b,r+1}) = \sum_{\ell=0}^{r-1} a_\ell(x_b) y_{b,r+1}^\ell
\]
so there exists an $m\in \{0,\dots,r-1\}$ such that $a_m(x_b) \neq 0$. In particular $a_m(x) \not\equiv 0$ in $\FF_q[x]$, so it has $\leq b-2$ zeros in $\FF_q$ because it has degree $\leq b-2$.

We now prove the claim by contradiction. Assume there exist $b-1$ values $i\in \{1, \dots, b\}$ such that $f(P) = 0$ for all $P\in A_i$. Since $f(x_b,y_{b,r+1}) \neq 0$ it must be that $f(x_i,y_{i,j}) = 0$ for all $1\leq i\leq b-1$ and all $1\leq j\leq r+1$. By Observation~\ref{obs:too-many-zeros} we have $f(x_i,y_{v,w}) = 0$ for all $1\leq i\leq b-1$, all $1\leq v\leq b$, and all $1\leq w\leq r+1$. For $1\leq i\leq b-1$ consider
\[
f(x_i,y) = \sum_{\ell=0}^{r-1} a_\ell(x_i) y^\ell \in \FF_q[y].
\]
These are polynomials of degree $\leq r-1$ with at least $\#\{y_{v,w}\}_{1\leq v \leq b}^{1\leq w\leq r+1} = b(r+1)$ zeros. Since $b(r+1) > r-1$ then $f(x_i,y) = 0$ as an element of $\FF_q[y]$ for all $1\leq i\leq b-1$. Thus the coefficients $a_\ell(x_i) = 0$ for all $1\leq i\leq b-1$ and all $0\leq \ell\leq r-1$. In particular $a_m(x)$ has $b-1$ zeros. This contradicts that $a_m(x)$ has $\leq b-2$ zeros, finishing the proof.
\end{proof}

We now establish an upper bound for the number of zeros of a given code word. We will subtract this from the length of the code $n$ to give a lower bound for the minimum distance. 

\begin{lemma}\label{lem:maximum-zeros-poly}
Let $b\geq 3$, let $f(x,y) \in V$ be a nonzero polynomial, 
and let $s\in \ZZ_{\geq 0}$ be the number of zero-fibers of $f(x,y)$. 
\begin{enumerate}
\item Then $f(P) = 0$ for $\leq b (r-1) + 2 s$ points $P\in \mathcal{P}$.

\item If $s = b-2$ then $f(P) = 0$ for $\leq (b-2)(r+1) + (r-1)$ points $P\in \mathcal{P}$.

\item Regardless of the value of $s$, if $r = 1, 2, 3$ then $f(P) = 0$ for $\leq (b-2)(r+1) + (r-1)$ points $P\in \mathcal{P}$.
\end{enumerate}
\end{lemma}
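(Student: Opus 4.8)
The plan is to prove (1) and (2) directly, and then deduce (3) by combining them with the Fiber Vanishing Lemma (Lemma~\ref{lem:max_number_zeros}). Throughout, write $f(x,y) = \sum_{\ell=0}^{r-1} a_\ell(x) y^\ell$ with $a_\ell(x) \in \FF_q[x]$ of degree $\le b-2$. For part (1), I would count the zeros of $f$ lying in $\calP$ one batch at a time. A zero-fiber $A_i$ contributes all $r+1$ of its points. If $A_i$ is \emph{not} a zero-fiber, then $f(x_i, y) = \sum_{\ell=0}^{r-1} a_\ell(x_i) y^\ell$ is a nonzero element of $\FF_q[y]$ of degree $\le r-1$, hence has at most $r-1$ roots in $\FF_q$; since the coordinates $y_{i,1}, \dots, y_{i,r+1}$ are distinct, at most $r-1$ of the points of $A_i$ are zeros of $f$. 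Summing over the $s$ zero-fibers and the $b-s$ non-zero-fibers gives at most $s(r+1) + (b-s)(r-1) = b(r-1) + 2s$ zeros, which is (1).

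For part (2), assume $s = b-2$ and relabel the batches so that $A_1, \dots, A_{b-2}$ are the zero-fibers. By Observation~\ref{obs:too-many-zeros}, $f(x_i, y) \equiv 0$ in $\FF_q[y]$ for $1 \le i \le b-2$, so $a_\ell(x_i) = 0$ for all $\ell$ and all such $i$. Each $a_\ell$ then has degree $\le b-2$ with the $b-2$ distinct roots $x_1, \dots, x_{b-2}$, forcing $a_\ell(x) = c_\ell \prod_{i=1}^{b-2}(x-x_i)$ for a constant $c_\ell \in \FF_q$; hence $f(x,y) = \big(\prod_{i=1}^{b-2}(x-x_i)\big) g(y)$ with $g(y) = \sum_{\ell=0}^{r-1} c_\ell y^\ell$ nonzero (since $f \not\equiv 0$) of degree $\le r-1$. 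For a point $(x_i, y_{i,j})$ with $i \in \{b-1,b\}$ the factor $\prod_{i'=1}^{b-2}(x_i - x_{i'})$ does not vanish, so $f$ vanishes there if and only if $g(y_{i,j}) = 0$. The $2(r+1)$ coordinates appearing in $A_{b-1} \cup A_b$ are pairwise distinct — here I invoke the full hypothesis that \emph{all} $y$-coordinates of $\calP$ are distinct, cf.\ Remark~\ref{rema:simplicity-first-construction}(\ref{item:strong-general-position}), rather than mere general position — and $g$ has at most $r-1$ roots, so at most $r-1$ points of $A_{b-1} \cup A_b$ are zeros of $f$. Adding the $(b-2)(r+1)$ points of the zero-fibers yields the bound in (2).

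For part (3), the Fiber Vanishing Lemma gives $s \le b-2$. If $s = b-2$, part (2) is exactly the assertion. If $s \le b-3$, part (1) bounds the number of zeros by $b(r-1) + 2s \le b(r-1) + 2(b-3)$, and a direct computation gives $(b-2)(r+1) + (r-1) - \big(b(r-1) + 2(b-3)\big) = 3-r$, which is nonnegative precisely when $r \le 3$. Hence for $r = 1, 2, 3$ the bound $(b-2)(r+1) + (r-1)$ holds for every value of $s$, proving (3).

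The only step I expect to require genuine care is the factorization in part (2): one must observe that the vanishing of each $a_\ell$ along $x_1, \dots, x_{b-2}$, together with the degree constraint $\deg a_\ell \le b-2$, pins $f$ down up to the single-variable factor $g(y)$, and then that controlling $g$ on the two remaining batches forces us to use that the coordinates there are distinct — this is precisely where the construction's ``all $y$-coordinates distinct'' hypothesis, as opposed to just general position, is indispensable. Everything else is elementary zero-counting and arithmetic bookkeeping.
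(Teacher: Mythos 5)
Your proposal is correct and follows essentially the same route as the paper's proof: the same batch-by-batch count giving $s(r+1)+(b-s)(r-1)$ for part (1), the same factorization $f(x,y)=\bigl(\prod_{i=1}^{b-2}(x-x_i)\bigr)g(y)$ with $\deg g\le r-1$ for part (2), and the same comparison $b(r-1)+2s\le (b-2)(r+1)+(r-1)$ for $s\le b-3$ when $r\le 3$ for part (3). Your explicit remark that part (2) uses the distinctness of all $y$-coordinates across $A_{b-1}\cup A_b$ is a point the paper leaves implicit, but the argument is the same.
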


\begin{proof}
By Lemmas~\ref{lem:injectivity} and~\ref{lem:max_number_zeros}, since $f(x,y)$ is not zero, we have $0\leq s\leq b-2$. Fix $i \in \{1,\dots,b\}$. If there exists a point $P \in A_i$ such that $f(P) \neq 0$, since $f(x_i,y)\in \FF_q[y]$ has degree $\leq r-1$, then $f(x,y)$ can vanish on $\leq r-1$ points in $A_i$. Thus the number of points in $\mathcal{P}$ where $f(x,y)$ vanishes is at most
\[
s (r+1) + (b-s) (r-1) = b (r-1) + 2s,
\]
which increases as $s$ increases. If $s = b - 2$, then, without loss of generality $f(x,y)$ vanishes at $A_1\amalg \cdots \amalg A_{b-2}$. In particular, for $1\leq i\leq b-2$, the single-variable polynomial
\[
f(x_i,y) = \sum_{\ell=1}^{r-1}a_\ell(x_i)y^\ell \in \FF_q[y]
\]
of degree $\leq r-1$ vanishes at $y_{i,1},\dots,y_{i,r+1}$, which implies that $a_\ell(x)$ vanishes at $x_1,\dots,x_{b-2}$ for all $1\leq j\leq r+1$, whence
\[
a_\ell(x) = a'_\ell(x-x_1) (x-x_2) \cdots (x-x_{b-2})
\]
for some $a'_\ell \in \FF_q$, and thus we have the factorization
\[
f(x,y) = (x-x_1) (x-x_2) \cdots (x-x_{b-2}) p(y)
\]
for some polynomial $p(y)\in \FF_q[y]$ of degree $\leq r-1$. Since $(x-x_1)\cdots (x-x_{b-2}) \neq 0$ for $x \in \{x_{b-1},x_b\}$, the polynomial $f(x,y)$ can vanish on $\leq r-1$ points of $A_{b-1} \amalg A_{b}$. Thus, the number of points in $\mathcal{P}$ where $f(x,y)$ vanishes is at most
\[
(b-2)(r+1) + (r-1).
\]
The last part of the claim follows by observing that when $r = 1, 2, 3$ then for all $0\leq s\leq b - 3$ we have
\begin{equation}
\label{eq:r=123}
b (r-1) + 2s \leq (b-2)(r+1) + (r-1).
\hfill\qedhere
\end{equation}
\hideqed
\end{proof}

\begin{remark}
Note that, when $s = b - 2$, we have $(b-2)(r+1) + (r-1) \leq b (r-1) + 2 s$.
\end{remark}

Lemma~\ref{lem:maximum-zeros-poly} allows us to improve the bound in the Fiber 
Vanishing Lemma of zero-fibers for code words in $\calC$ whose weight is less 
than the Singleton bound $r+3$, as follows.

\begin{corollary}\label{coro:low-weight-zero-fibers}
Let $b \geq 3$ and $r\geq 4$. Let $\mathbf{c} \coloneqq \ev_{\calP}(f(x,y)) \in \calC$ be a nonzero code 
word of weight $\leq r + 2$. Then the number $s$ of zero-fibers of $\mathbf{c}$ satisfies $s \leq b-3$.
\end{corollary}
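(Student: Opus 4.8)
The plan is to show that the extremal value $s = b-2$ permitted by the Fiber Vanishing Lemma is incompatible with a code word of weight $\leq r+2$, so that the only remaining possibilities are $s \leq b-3$.

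First I would record the setup: by Lemma~\ref{lem:injectivity} the polynomial $f(x,y)$ is the unique preimage of $\mathbf{c}$ under $\ev_\calP$, and since $\mathbf{c}$ is nonzero so is $f$; hence the Fiber Vanishing Lemma (Lemma~\ref{lem:max_number_zeros}) applies and gives $0 \leq s \leq b-2$. It therefore suffices to exclude the case $s = b-2$.

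Next I would argue by contradiction. Suppose $s = b-2$. Then Lemma~\ref{lem:maximum-zeros-poly}(2) bounds the number of points $P \in \calP$ with $f(P) = 0$ by $(b-2)(r+1) + (r-1)$. Consequently the weight of $\mathbf{c}$, which equals $n$ minus this count of zeros, is at least
\[
n - \bigl((b-2)(r+1) + (r-1)\bigr) = b(r+1) - (b-2)(r+1) - (r-1) = 2(r+1) - (r-1) = r+3,
\]
contradicting the hypothesis that $\mathbf{c}$ has weight $\leq r+2$. Hence $s \neq b-2$, so $s \leq b-3$.

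I do not anticipate a genuine obstacle: the substantive work was already carried out in Lemma~\ref{lem:maximum-zeros-poly}, and the corollary is essentially a bookkeeping consequence of part~(2) of that lemma together with the Fiber Vanishing Lemma. The only point worth double-checking is that the bound $(b-2)(r+1) + (r-1)$ is exactly tight enough to force the weight strictly above $r+2$; it forces weight $\geq r+3$, which is precisely the Singleton threshold of Corollary~\ref{cor:singleton-type-bound}. I would also remark that the hypothesis $r \geq 4$ is not actually used in this argument — it works verbatim for all $r \geq 1$ — but it is the regime of interest, since for $r = 1, 2, 3$ the code has minimum distance $r+3$ by Theorem~\ref{theo:line-fiber-optimal} and hence admits no code word of weight $\leq r+2$ at all, making the statement vacuous there.
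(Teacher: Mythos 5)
Your proof is correct and is essentially identical to the paper's: both invoke the Fiber Vanishing Lemma to reduce to excluding $s = b-2$, then apply Lemma~\ref{lem:maximum-zeros-poly}(2) to force weight $\geq r+3$, contradicting the hypothesis. Your side remark that $r \geq 4$ is not actually used (the statement being vacuous for $r \leq 3$ by Theorem~\ref{theo:line-fiber-optimal}) is also accurate.
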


\begin{proof}
    We already know from the Fiber Vanishing Lemma that $s \leq b - 2$ for any nonzero code word $\mathbf{c}$.  Suppose that $s = b-2$. By Lemma~\ref{lem:maximum-zeros-poly}, $\mathbf{c}$ has weight at least
    \[
    n - ((b-2)(r+1) + (r-1)) = r + 3.
    \eqno\qed
    \]
    \hideqed
\end{proof}

Finally, we show that the code $\calC$ is optimal when the locality is small.

\begin{theorem}\label{theo:line-fiber-optimal}
Let $b \geq 3$ and $r = 1$, $2$, or $3$. Then the code $\mathcal{C}$ is an optimal LRC with parameters $[n,k,d;r]_q$, where
\begin{align*}
n &= b (r+1),\\
k &= (b-1) r,\\
d &= r+3.
\end{align*}
The code has information rate
\[
\frac{k}{n} = \frac{b-1}{b}\cdot \frac{r}{r+1}.
\]
\end{theorem}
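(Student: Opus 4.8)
The plan is to combine the upper bound for the minimum distance from Corollary~\ref{cor:singleton-type-bound} (or the explicit weight-$(r+3)$ code word from Lemma~\ref{lem:exist_weight}) with a matching lower bound $d \geq r+3$, which is where all the work lies. The dimension $k = (b-1)r$ is already settled in Lemma~\ref{lem:injectivity}, the locality~$r$ in Lemma~\ref{lem:locality}, and the length $n = b(r+1)$ is immediate from the construction; optimality is then the statement that equality holds in the Singleton-type bound~\eqref{eq:singleton-type_bound}, which Corollary~\ref{cor:singleton-type-bound} already shows forces $d \leq r+3$. So the theorem reduces to: \emph{every nonzero code word $\mathbf{c} = \ev_{\calP}(f(x,y))$ with $f \in V$ nonzero has weight at least $r+3$}, equivalently $f$ vanishes on at most $n - (r+3) = b(r+1) - r - 3 = (b-2)(r+1) + (r-1)$ points of $\calP$.

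First I would invoke Lemma~\ref{lem:maximum-zeros-poly}(3): precisely because $r \in \{1,2,3\}$, that lemma already bounds the number of zeros of any nonzero $f \in V$ on $\calP$ by $(b-2)(r+1) + (r-1)$, \emph{regardless} of the number $s$ of zero-fibers. (The content of part~(3) is the elementary inequality~\eqref{eq:r=123}, $b(r-1) + 2s \leq (b-2)(r+1) + (r-1)$ for $0 \leq s \leq b-3$, together with the $s = b-2$ case handled separately in part~(2).) Subtracting this from $n$ gives weight $\geq r+3$ for every nonzero code word, hence $d \geq r+3$.

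Combining the two bounds yields $d = r+3$, and one checks directly that this meets the Singleton-type bound: $n - k - \lceil k/r \rceil + 2 = b(r+1) - (b-1)r - (b-1) + 2 = r+3$, exactly as computed in Corollary~\ref{cor:singleton-type-bound}. Therefore $\calC$ is an optimal LRC with the stated parameters $[n,k,d;r]_q$, and the information rate is
\[
\frac{k}{n} = \frac{(b-1)r}{b(r+1)} = \frac{b-1}{b} \cdot \frac{r}{r+1}.
\]

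Honestly, for this particular statement there is no serious obstacle left: the heavy lifting has all been front-loaded into Lemma~\ref{lem:max_number_zeros} (the Fiber Vanishing Lemma) and Lemma~\ref{lem:maximum-zeros-poly}, whose proof already required the key factorization argument showing that $s = b-2$ zero-fibers forces $f(x,y) = (x-x_1)\cdots(x-x_{b-2})p(y)$ with $\deg p \leq r-1$. If I were proving Lemma~\ref{lem:maximum-zeros-poly} from scratch, the delicate point would be verifying that the crude count $s(r+1) + (b-s)(r-1)$ is dominated by the $s = b-2$ count precisely when $r \leq 3$ — this is exactly why the theorem is restricted to $r = 1, 2, 3$, and it is the conceptual heart of why larger locality behaves differently (cf.\ Section~\ref{sec:non-optimal}). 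Given those lemmas, the proof of Theorem~\ref{theo:line-fiber-optimal} is a two-line assembly.
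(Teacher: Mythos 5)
Your proposal is correct and follows essentially the same route as the paper's proof: it assembles $n$, $k$, and the locality from the construction, Lemma~\ref{lem:injectivity}, and Lemma~\ref{lem:locality}, takes the upper bound $d\leq r+3$ from Corollary~\ref{cor:singleton-type-bound}, and obtains the matching lower bound from Lemma~\ref{lem:maximum-zeros-poly}(3) (whose restriction to $r=1,2,3$ via inequality~\eqref{eq:r=123} is, as you note, exactly where the hypothesis enters). No gaps.
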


\begin{proof}
The code $\calC$ has the claimed parameters $n$, $k$, and $r$ by~\eqref{eq:eval}, Lemma~\ref{lem:injectivity}, and Lemma~\ref{lem:locality}, respectively. In Corollary~\ref{cor:singleton-type-bound} we noted that the Singleton-type bound is $d\leq r+3$. By Lemma~\ref{lem:maximum-zeros-poly} a code word in $\mathcal{C}$ will have weight at least
\[
n - ((b-2)(r+1) + (r-1)) = b(r+1) - (b-2)(r + 1) - (r-1) = r+3
\]
whence $d\geq r+3$. Thus $d = r+3$ and $\calC$ is optimal because it reaches the Singleton-type bound.
\end{proof}

\subsection{Minimum distance of \texorpdfstring{$\calC$}{the code} for localities \texorpdfstring{$r \geq 4$}{greater than or equal to 4}}

The inequality~\eqref{eq:r=123} in the proof of Lemma~\ref{lem:maximum-zeros-poly} does not hold when $r \geq 4$. We thus lose control of the lower bound on the minimum distance of $\calC$, opening up the possible existence of nonoptimal codes $\calC$ when $r \geq 4$. We exhibit one such code in detail in Section~\ref{sec:non-optimal}. Nevertheless, we can still prove that, given a large alphabet, most choices of points for $\calC$ yield optimal LRCs.

Suppose that $\mathbf{c} \in \calC$ is a nonzero code word of weight $\leq r+2$, with 
$s$ zero-fibers; permuting indices if necessary, we may assume that the (disjoint) union of these 
zero fibers is $A_1\amalg \cdots \amalg A_s$. In particular, for $1\leq i\leq s$, 
the single-variable polynomial
$
    f(x_i,y) = \sum_{\ell=0}^{r-1}a_\ell(x_i)y^\ell \in \FF_q[y]
$
of degree $\leq r-1$ vanishes at $y_{i,1},\dots,y_{i,r+1}$. Thus $a_\ell(x)$ vanishes at $x_1,\dots,x_{s}$ for all $0\leq \ell \leq r-1$. Hence
\[
a_\ell(x) = g_\ell(x)(x-x_1) (x-x_2) \cdots (x-x_{s})
\]
for some $g_\ell(x) \in \FF_q[x]$ of degree $\leq (b - 2) - s$, so we have the factorization
\begin{equation}
\label{eq:specialf}
f(x,y) = (x-x_1) (x-x_2) \cdots (x-x_{s}) \sum_{\ell = 0}^{r-1} g_\ell(x) y^\ell.
\end{equation}
Now, the code word $\mathbf{c} = \ev_\calP(f(x,y))$ is supported on the fibers above $x_{s+1}, \dots,x_b$.  Since
\[
(x - x_1)\cdots(x - x_{s}) \neq 0
\]
for $x \in \{x_{s+1},\cdots,x_b\}$, a point $(x_i,y_{i,j})$ in the fibers above $x_{s+1}, \dots, x_b$ such that $f(x_i,y_{i,j}) = 0$ must satisfy
\[
\sum_{\ell = 0}^{r-1} g_\ell(x_i) y_{i,j}^\ell = 0.
\]
Thus, to have a code word of weight $\leq r + 2$, the polynomial 
\[
g(x,y) \coloneqq \sum_{\ell = 0}^{r-1} g_\ell(x) y^\ell \in \FF_q[x,y]
\]
must vanish on at least 
\[
(b-s)(r+1) - (r+2) = (b-s-1)r + (b - s - 2)
\]
points in the $b-s$ fibers over $x_{s+1}, \dots, x_b$. However, $g(x,y)$ has only $(b-s-1)r$ coefficients as a polynomial in $\FF_q[x,y]$. On the other hand, by Corollary~\ref{coro:low-weight-zero-fibers}, we know that $s \leq b-3$, so that $b - s - 2 \geq 1$. Thus, if we think of the coefficients of $g(x,y)$ as $(b-s-1)r$ unknowns satisfying $(b-s-1)r + (b - s - 2)$ linear relations of the form $g(x_i,y_{i,j}) = 0$, a code word $\mathbf{c} \in \calC$ of weight $\leq r+2$ solves a seemingly overconstrained linear system of equations. We should expect that, for most choices of points $\calP$ with distinct $x$- and $y$-coordinates, it is not possible to solve this system of linear equations. 
However, there may be polynomials of the form~\eqref{eq:specialf} with $s$ zero-fibers in multiple ways, and we must take into account polynomials whose zero-fibers are in arbitrary positions (not just over $x_1,\dots,x_{s}$). We can construct these by varying the roots of the largest factor of $f(x,y)$ that depends only on $x$ and changing the $g(x,y)$, while the collection of points $\calP$ remains \emph{fixed}. This gives new opportunities for $(b-s-1)r + (b - s - 2)$ of the remaining $(b-s)(r+1)$ points to interpolate the polynomial $g(x,y)$. 
More precisely, there are
\begin{equation}
\label{eq:badchancescount}
\sum_{s = 0}^{b-3}\binom{b}{s} \binom{(b-s)(r+1)}{(b-s-1)r + (b - s - 2)}
\end{equation}
choices of subsets in $\calP$ consisting of $s$ fibers and $(b-s-1)r + (b - s - 2)$ points in the remaining $(b-s)$ fibers. We must now estimate the probability that, given one such subset, there exists a nonzero polynomial $f(x,y) \in V$ vanishing along the subset, giving a code word of weight $\leq r + 2$. This leads to an estimate of the expected number of code words in $\calC$ of weight $\leq r + 2$.  To help fix ideas, we first illustrate this estimate in Example~\ref{ex:badcodes}.

\begin{example}\label{ex:badcodes}
    Let $b = 4$ and $r=4$, let $q > n = 20$ be a prime power, 
    and let $\calC$ be a code as in Section~\ref{ss:set-up}. 
    Suppose $f(x,y) = (x - x_1)g(x,y)$ gives rise to a code word of weight $\leq r + 2 = 6$, and that
    \[
    \{(x_1,y_{1,1}), (x_1,y_{1,2}), (x_1,y_{1,3}), (x_1,y_{1,4}),(x_1,y_{1,5})\}
    \]
    is the unique zero-fiber of $f(x,y)$, so $s = b-3 = 1$ in this case. Then the polynomial $g(x,y)$ takes the form
    \[
    \sum_{\ell = 0}^{3} g_\ell(x)y^\ell = (a_0 + a_1x) + (a_2 + a_3x)y + (a_4 + a_5x)y^2 + (a_6 + a_7x)y^3.
    \]
    If $g(x,y)$ were to pass through the $(b-s-1)r + (b - s - 2) = 9$ points 
    \begin{equation}
    \label{eq:coords}
    \begin{split}
    &(x_2,y_{2,1}), (x_2,y_{2,2}), (x_2,y_{2,3}), \\
    &(x_3,y_{3,1}), (x_3,y_{3,2}), (x_3,y_{3,3}), \\
    &(x_4,y_{4,1}), (x_4,y_{4,2}), (x_4,y_{4,3}),
    \end{split}
    \end{equation}
    in the remaining three fibers, we would have the following equality.
    \begin{equation} \label{eq:explicit-vdm}
    \begin{bmatrix}
        1 & x_2 & y_{2,1} & x_2y_{2,1} & y_{2,1}^2 & x_2y_{2,1}^2 & y_{2,1}^3 & x_2y_{2,1}^3\\
        1 & x_2 & y_{2,2} & x_2y_{2,2} & y_{2,2}^2 & x_2y_{2,2}^2 & y_{2,2}^3 & x_2y_{2,2}^3\\
        1 & x_2 & y_{2,3} & x_2y_{2,3} & y_{2,3}^2 & x_2y_{2,3}^2 & y_{2,3}^3 & x_2y_{2,3}^3\\
        1 & x_3 & y_{3,1} & x_3y_{3,1} & y_{3,1}^2 & x_3y_{3,1}^2 & y_{3,1}^3 & x_3y_{3,1}^3\\
        1 & x_3 & y_{3,2} & x_3y_{3,2} & y_{3,2}^2 & x_3y_{3,2}^2 & y_{3,2}^3 & x_3y_{3,2}^3\\
        1 & x_3 & y_{3,3} & x_3y_{3,3} & y_{3,3}^2 & x_3y_{3,3}^2 & y_{3,3}^3 & x_3y_{3,3}^3\\
        1 & x_4 & y_{4,1} & x_4y_{4,1} & y_{4,1}^2 & x_4y_{4,1}^2 & y_{4,1}^3 & x_4y_{4,1}^3\\
        1 & x_4 & y_{4,2} & x_4y_{4,2} & y_{4,2}^2 & x_4y_{4,2}^2 & y_{4,2}^3 & x_4y_{4,2}^3\\
        1 & x_4 & y_{4,3} & x_4y_{4,3} & y_{4,3}^2 & x_4y_{4,3}^2 & y_{4,3}^3 & x_4y_{4,3}^3
    \end{bmatrix}
    \begin{bmatrix}
        a_0\\
        a_1\\
        a_2\\
        a_3\\
        a_4\\
        a_5\\
        a_6\\
        a_7
    \end{bmatrix} = 
    \begin{bmatrix}
        0\\
        0\\
        0\\
        0\\
        0\\
        0\\
        0\\
        0\\
        0
    \end{bmatrix}
    \end{equation}
    This means that the above $9\times 8$ matrix's nine maximal minors vanish. Pick one such minor, say the one corresponding to the first row of the matrix, and consider it as a homogeneous polynomial in the $11$ variables 
    \[
    x_2,x_3,x_4,y_{2,2},y_{2,3},y_{3,1},y_{3,2},y_{3,3},y_{4,1},y_{4,2},y_{4,3}
    \]
    appearing in~\eqref{eq:coords}; the variable $y_{2,1}$ is missing because it appears only in the first row of the above matrix. These variables give rise to homogeneous coordinates in a projective space $\PP^{10}_{\FF_q}$.  The minor we selected defines a hypersurface $X \subset \PP^{10}_{\FF_q}$, a possibly reducible projective variety of dimension $9$ whose degree is independent of $q$. 
    By the Lang--Weil estimate (Theorem~\ref{theo:lang-weil-estimate}), there is a constant $A$, depending on the dimension and the degree of $X$, but not on $X$ itself, such that
    \[
    \# X(\FF_q) \leq A q^{9}.
    \]
    The space $\PP_{\FF_q}^{10}$ has $q^{10} + q^{9} + \cdots + q + 1$ rational points, and we would like to pick eleven coordinates $x_2,x_3,x_4$, $y_{2,1},\dots,y_{4,1}$ uniformly at 
    random. However, some care is required, because these eleven coordinates have restrictions; the $x_i$'s and the 
    $y_{i,j}$'s must be distinct. So, we must remove several hyperplanes from 
    $\PP^{10}_{\FF_q}$ before we draw our coordinates, such as the hyperplane 
    $x_2 = x_3$. There are $31$ such hyperplanes to be removed. 
    Each hyperplane is a 
    $\PP^{9}_{\FF_q}$, and thus contains only $q^{9} + q^{8} + \cdots + q + 1$ points. 
    The complement $U \subset \PP^{10}_{\FF_q}$ of all these hyperplanes has 
    $q^{10} + O(q^9)$ points as $q \to \infty$. Thus, 
    the probability that a point in $U$ chosen uniformly at random lies on $X$ is bounded above by
    \[
    \frac{\#X(\FF_q)}{\#U(\FF_q)} \leq \frac{Aq^9}{q^{10} + O(q^9)}\underset{q\to \infty}{\xrightarrow{\hspace*{2em}}} \frac{A}{q}.
    \]
    We deduce that, as $q\to\infty$, the expected number of code words in $\calC$ of weight $\leq 6 = r + 2$ is on the order of
    \begin{equation}
    \label{eq:prob_estimate}
    \left(\binom{4}{0} \binom{20}{15} + \binom{4}{1} \binom{15}{9}\right) \frac{A}{q}.
    \end{equation}
    This estimate is coarse. For example, it ignores the remaining eight minors. Nevertheless, it approaches $0$ as $q \to \infty$. Thus, as $q \to \infty$, the code $\calC$ will contain no words of weight $\leq 6$ with probability $1$.
\end{example}

We generalize Example~\ref{ex:badcodes} to prove one of our main results.

\begin{theorem}\label{theo:optimal-limit}
Let $b \geq 3$ and $r\geq 4$. There exists $q_0 = q_{0}(r,b)\in \NN$ such that if $q \geq q_0$, then 
for most choices of points $\calP$ there are no code words in $\calC$ of weight $\leq r + 2$. 
That is, the minimum distance of $\calC$ is $d \geq r+3$. Consequently, for most choices of 
points $\calP$ the code $\calC$ is optimal and locally recoverable with locality $r$. Moreover, 
as $q\to \infty$, choosing points $\calP$ uniformly at random yields an optimal code $\calC$ with probability $1$.
\end{theorem}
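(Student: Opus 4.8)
The plan is a first-moment argument over the finitely many combinatorial ``shapes'' a hypothetical code word of weight $\le r+2$ could take, bounding the frequency of each shape by the Lang--Weil estimate (Theorem~\ref{theo:lang-weil-estimate}), exactly in the spirit of Example~\ref{ex:badcodes}.

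\textbf{Reduction to shapes.} Suppose $\mathbf c=\ev_\calP(f)\in\calC$ is a nonzero code word of weight $\le r+2$; by Lemma~\ref{lem:injectivity} there is a unique nonzero $f\in V$, and by Corollary~\ref{coro:low-weight-zero-fibers} it has $s\le b-3$ zero-fibers. As in~\eqref{eq:specialf}, relabeling so the zero-fibers lie over $x_1,\dots,x_s$, we get $f=(x-x_1)\cdots(x-x_s)\,g(x,y)$ with $g=\sum_{\ell=0}^{r-1}g_\ell(x)y^\ell$, $\deg g_\ell\le b-s-2$, and $g$ has no zero-fiber among the remaining $b-s$ fibers (else $f$ would have one too); hence $g(x_j,\cdot)\not\equiv 0$ there, so $g$ vanishes on at most $r-1$ points of each such fiber and on at least $(b-s)(r+1)-(r+2)=(b-s-1)r+(b-s-2)$ of them in total. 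Fixing a subset $S_0$ of exactly $(b-s-1)r+(b-s-2)$ of these vanishing points (with $\le r-1$ per fiber), the combinatorial data $\tau=(s,\ \text{zero-fibers},\ S_0)$ is one of finitely many shapes, and the coefficient vector of $g$ is a \emph{nonzero} solution of $M_\tau(\calP)\mathbf g=0$, where $M_\tau(\calP)$ is the $\bigl((b-s-1)r+(b-s-2)\bigr)\times\bigl((b-s-1)r\bigr)$ matrix of evaluations at the points of $S_0$ of the monomials $x^ay^\ell$ ($0\le a\le b-s-2$, $0\le\ell\le r-1$). So every ``bad'' $\calP$ lies, for some shape $\tau$, in the locus where $M_\tau(\calP)$ has non-full column rank, i.e.\ where all its maximal minors vanish.

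\textbf{Per-shape estimate and conclusion.} Admissible configurations $\calP$ (distinct $x_i$'s and distinct $y_{i,j}$'s) form a dense open subset of $\Aff^D$ with $D=b(r+2)$, so there are $q^D+O(q^{D-1})$ of them. The main input (the Main Lemma below) is that for every shape $\tau$ at least one maximal minor $\Delta_\tau$ of $M_\tau$ is a nonzero polynomial in the coordinates, of degree bounded in terms of $b,r$ only. Granting this, $\{\Delta_\tau=0\}$ is a hypersurface, so by Lang--Weil the number of admissible $\calP$ on which $M_\tau$ drops rank is $\le A_\tau(b,r)\,q^{D-1}$; summing over the finitely many shapes, the number of bad $\calP$ is $\le A(b,r)\,q^{D-1}$, hence the bad fraction is $O_{b,r}(1/q)$. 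Choosing $q_0(b,r)$ so this fraction is $<\tfrac12$ gives the statement for $q\ge q_0$, and letting $q\to\infty$ shows a uniformly random admissible $\calP$ is bad with probability $\to 0$. For a good $\calP$ we have $d\ge r+3$, which with Corollary~\ref{cor:singleton-type-bound} forces $d=r+3$, meeting the Singleton-type bound~\eqref{eq:singleton-type_bound}; combined with Lemmas~\ref{lem:injectivity} and~\ref{lem:locality}, $\calC$ is an optimal LRC with locality $r$.

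\textbf{The main obstacle.} I expect the Main Lemma to be the crux: for each admissible shape $\tau$, $M_\tau(\calP)$ has full column rank for at least one admissible $\calP$ (equivalently over $\overline{\FF_q}$, or with the coordinates as indeterminates). Write $A=b-s-2\ge 1$ and $B=r-1\ge 3$, so $W_s$ is the space of bidegree-$(A,B)$ polynomials, $\dim W_s=(A+1)(B+1)$, and $|S_0|=(A+1)(B+1)+A$. A weight-$\le r+2$ code word with $s$ zero-fibers forces $\le r-1$ vanishing points per fiber, which is possible only when $2(b-s)\le r+2$, i.e.\ $B\ge 2A+1$; so one may choose the configuration with $A+1$ of the $A+2$ fibers carrying exactly $B$ of the $S_0$-points each. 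On such a fiber $x_j$, any $g\in W_s$ vanishing at its $B$ points must equal $g_B(x_j)\prod_i(y-y_{j,i})$; since $\deg g_B\le A$ and there are $A+1$ such fibers, $g$ is then determined linearly by the scalars $g_B(x_1),\dots,g_B(x_{A+1})$, so $g$ ranges over an $(A+1)$-dimensional space $\Gamma$ with basis $G_j(x,y)=L_j(x)\prod_i(y-y_{j,i})$, $L_j$ the Lagrange basis for nodes $x_1,\dots,x_{A+1}$. Vanishing at the remaining $2A+1\ge A+1$ points of the last fiber then amounts to requiring that the $(2A+1)\times(A+1)$ matrix $\bigl(L_j(x_{A+2})\prod_i(y_{A+2,k}-y_{j,i})\bigr)_{k,j}$ have rank $A+1$; dividing out the nonzero $L_j(x_{A+2})$, this is the evaluation matrix of the monic degree-$B$ polynomials $q_j(y)=\prod_i(y-y_{j,i})$ at $2A+1\le B+1$ points, which has rank $A+1$ for generic (hence some) admissible coordinates since the $q_j$ are generically linearly independent. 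This settles the balanced shape; since there are finitely many shapes for fixed $b,r$, the remaining ones are a finite check of the same flavour (alternatively one can bypass shape-by-shape bookkeeping via an incidence-variety dimension count for $\{(\calP,[g]):g|_{S_0(\calP)}=0\}\subset U\times\PP(W_s)$, whose fibres over $\PP(W_s)$ have dimension $\le b+r(s+1)+s+2$, giving total dimension $<D$).
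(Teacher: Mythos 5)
Your proposal is essentially the paper's own proof: the same reduction through the factorization \eqref{eq:specialf} and Corollary~\ref{coro:low-weight-zero-fibers}, the same reformulation as a rank drop of the evaluation matrix whose maximal minors must vanish, the same Lang--Weil bound per shape, and the same union bound over the finitely many shapes as in \eqref{eq:badchancescount}, differing only cosmetically (affine configuration space $\Aff^{b(r+2)}$ versus the paper's projectivized coordinates). The ``Main Lemma'' you isolate---that for each shape some maximal minor is not the zero polynomial---is precisely the point the paper passes over when it asserts that each maximal minor ``defines a hypersurface,'' so you are not missing anything the paper supplies; your extra attention to it is welcome, but be aware that your own treatment is incomplete as written: only the balanced shape is argued, ``a finite check of the same flavour'' is not a uniform argument in $b$ and $r$, and the parenthetical incidence-variety fiber bound $b+r(s+1)+s+2$ is not valid on the strata where $g$ vanishes identically on a designated fiber (the designated $y$'s there become free), so the incidence correspondence must carry the no-extra-zero-fiber condition from your reduction for that dimension count to go through.
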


\begin{proof}
    Our discussion so far shows that if $f(x,y) \in V$ gives rise to a nonzero code word $\mathbf{c} \in \calC$ of weight $\leq r + 2$, then $f(x,y)$ vanishes along $s \leq b-3$ zero-fibers, and along at least $(b - s - 1)r + (b - s - 2)$ points in the remaining $b - s$ fibers. The expression~\eqref{eq:badchancescount} quantifies the number of subsets of $\calP$ along which such an $f(x,y)$ might vanish. Let $\calP'$ be one of these subsets, partitioned as $\calP' = \calP_1 \amalg \calP_2$, where $\calP_1 = A_{i_1} \amalg \cdots \amalg A_{i_s}$ are the $s$ zero-fibers above $x_{i_1},\dots,x_{i_s}$ and $\calP_2$ contains $(b - s - 1)r + (b - s - 2)$ points in the remaining $b-s$ fibers. Note that $f(x,y)$ is allowed to vanish at points of $\calP \setminus \calP'$, as long is it does not vanish along zero-fibers not already contained in $\calP_1$. Then
    \[
    f(x,y) = (x-x_{i_1}) (x-x_{i_2}) \cdots (x-x_{i_s}) g(x,y),
    \]
    where
    \[
    g(x,y) = \sum_{\ell = 0}^{r-1} g_\ell(x) y^\ell
    \]
    for some $g_\ell(x) \in \FF_q[x]$ of degree $\leq b - s - 2$ and $0 \leq \ell \leq r-1$. The number of distinct $x$-coordinates among the points in $\calP_2$ is $u \coloneqq b - s$, by definition of $s$, and the number of distinct $y$-coordinates is $v \coloneqq \#\calP_2 = (b - s - 1)r + (b - s - 2)$. Just as in~\eqref{eq:explicit-vdm}, the condition that $g(P) = 0$ for all $P \in \calP_2$ can be written as a matrix equation
    \begin{equation*}
    \label{eq:zero-force}
    M \mathbf{a} = 0,
    \end{equation*}
    where $M$ is a 
    \[
    ((b - s - 1)r + (b - s - 2)) \times (b - s -1)r
    \]
    matrix in $u+v = (b - s - 1)(r + 2)$ variables, and $\mathbf{a}$ encodes the coefficients of $g(x,y)$. If $f(x,y)$ is not the zero polynomial, the matrix $M$ must have a nontrivial kernel, which means that all its $\binom{(b - s - 1)r + (b - s - 2)}{(b - s - 2)}$ maximal minors 
    must vanish. Each minor is a homogeneous polynomial in $N \coloneqq u + v - (b - s - 2)$ variables, because each row removed from $M$ to obtain a maximal minor reduces the total number of variables by one. Thus, each maximal minor defines a hypersurface $X \subset \PP^{N-1}_{\FF_q}$. The Lang--Weil estimate (Theorem~\ref{theo:lang-weil-estimate}) implies that
    \[
    \#X(\FF_q) \leq A q^{N-2},
    \]
    where $A$ is a constant the depends on the degree of $X$ and on $N$, but not on $X$ itself. On the other hand,
    \[
    \#\PP^{N-1}(\FF_q) = \frac{q^N - 1}{q - 1} = q^{N-1} + q^{N-2} + \cdots + q + 1.
    \]
    We want to choose the points in $\calP$ uniformly at random, but we must be careful to choose distinct $x$- and $y$-coordinates. This means that among all points in $\PP^{N-1}(\FF_q)$, we must avoid hyperplanes of the form $x_i - x_j = 0$ for distinct $i,j \in \{1,\dots,n\}\setminus \{i_1,\dots,i_s\}$ and $y_{i,j} - y_{\ell,m} = 0$ 
    for distinct pairs of $y$-coordinates among the points in $\calP_2$. 
    Each such hyperplane is a $\PP^{N-2}_{\FF_q}$, and thus contains $q^{N-2} + \cdots + q + 1$ rational points. The total number $B$ of bad hyperplanes depends on $b$, $r$, and $s$, but not on $q$. Setting
    \[
    U \coloneqq \PP^{N-1}_{\FF_q} \setminus \{\text{bad hyperplanes}\},
    \]
    we deduce that 
    \[
    \#U(\FF_q) \geq \frac{q^N - 1}{q - 1} - B\cdot \frac{q^{N-1} - 1}{q - 1} = q^{N-1} + O(q^{N-2})
    \]
    as $q\to\infty$. Putting all this together, we see that if we select a point in $U$ uniformly at random, then the probability that one maximal minor of $M$ vanishes is
    \begin{equation}
    \label{eq:prob-estimate-general}
    \frac{\#X(\FF_q)}{\#U(\FF_q)} \leq \frac{Aq^{N-2}}{q^{N-1} + O(q^{N-2})} \underset{q\to \infty}{\xrightarrow{\hspace*{2em}}} \frac{A}{q}.
    \end{equation}
    Using the count~\eqref{eq:badchancescount}, the expected number of code words of weight $\leq r + 2$ is bounded above by
    \[
    \frac{\#X(\FF_q)}{\#U(\FF_q)} 
    \sum_{s = 0}^{b-3}\binom{b}{s} \binom{(b-s)(r+1)}{(b-s-1)r + (b - s - 2)}.
    \]
    Now~\eqref{eq:prob-estimate-general} guarantees that this quantity approaches $0$ as $q \to \infty$, which finishes the proof.
\end{proof}

\begin{remarks}\label{rema:effective-crutch}
\ 
\begin{enumerate}[leftmargin=*]
    \item It would be interesting to make the proof of Theorem~\ref{theo:optimal-limit} effective. That is, for a given pair of thresholds $0 < \gamma_1,\gamma_2 < 1$, obtain an explicit estimate of how large $q_0 = q_0(\gamma_1,\gamma_2)$ must be so that a proportion $\geq \gamma_1$ of the choices for $\calP$ yields an optimal LRC with probability $\geq \gamma_2$.
    \smallskip
    \item The optimal LRCs constructed by Tamo and Barg in~\cite{TamoBarg14} are special cases of the codes $\calC$ constructed in this section. They arise when points are chosen carefully to lie along the affine curve $y = g(x)$, where $g(x)$ is a \emph{good polynomial} such as $g(x) = x^{r+1}$. We refer the reader to~\cite{SalgadoVarillyAlvaradoVoloch21}*{Section III.A} for more details. In this context, Theorem~\ref{theo:optimal-limit} says that if one is willing to take $q$ very large, then with high probability one does not need to constrain the points $\calP$ to lie on such a curve to obtain an optimal LRC. As in~\cite{TamoBargFrolov16}, requiring a large alphabet is a mild restriction.
\end{enumerate}
\end{remarks}

\subsection{A detour through matroids}
\label{ss:matroids}

A natural question arising from our construction is whether the choice of points on the base alters the minimum distance of the resulting code. In this subsection, we review basic results on matroids and answer the question in the negative (see Theorem~\ref{theo:IsomMatroids}). 
Along the way we establish Corollary~\ref{coro:distance-iso-matroids}, a result of independent interest between matroids and codes.

A \emph{matroid} $\mathsf{M}$ is a pair $(M,\mathrm{rank}_{M})$ where $M$ is a finite set and $\mathrm{rank}_{M}:2^{M} \to \NN$ is a function of sets with
\begin{enumerate}
\item $0 \leq \mathrm{rank}_{M}(L) \leq \#L$ for all $L\subset M$.
\smallskip
\item $\mathrm{rank}_{M}(K) \leq \mathrm{rank}_{M}(L)$ for all $K\subset L\subset M$.
\smallskip
\item $\mathrm{rank}_{M}(K\cup L) + \mathrm{rank}_{M}(K\cap L) \leq \mathrm{rank}_{M}(K) + \mathrm{rank}_{M}(L)$ for all $K\subset L\subset M$.
\end{enumerate}
We say that a set $L\subset M$ is a \emph{circuit} when $\mathrm{rank}_{M}(L) \neq \#L$ 
and $\mathrm{rank}_{M}(K) = \#K$ for all $K\subsetneq L$. 
Two matroids $\mathsf{M} = (M,\mathrm{rank}_{M})$ and $\mathsf{N} = (N,\mathrm{rank}_{N})$ are 
\emph{isomorphic} if there exists a bijection of sets $\varphi\colon M\to N$ such that 
$\mathrm{rank}_{M}(L) = \#L$ if and only if $\mathrm{rank}_{N}(\varphi(L)) = \#\varphi(L)$ for all 
$L \subset M$. 
Given a matroid $\mathsf{M} = (M,\mathrm{rank}_{M})$, its \emph{dual} matroid 
$\mathsf{M}^{\bot} = (M^{\bot},\mathrm{rank}_{M^{\bot}})$ has $M^{\bot} \coloneqq M$ and 
$\mathrm{rank}_{M^{\bot}}(L) \coloneqq \#L + \mathrm{rank}_{M}(M\setminus L) - \mathrm{rank}_{M}(M)$ 
for all $L\subset M$.

\begin{remark}\label{rema:iso-circuit}
The rank of a circuit $L$ in a matroid $\mathsf{M}$ is $\#L-1$. Given $\varphi \colon \mathsf{M} \to \mathsf{N}$ an isomorphism of matroids, then $L$ is a circuit in $\mathsf{M}$ if and only if $\varphi(L)$ is a circuit in $\mathsf{N}$.
\end{remark}

The matroids we consider come exclusively from generator matrices of codes.

\begin{example}
Let $\calC$ be a code with generator matrix $G$, let $M$ be the set of columns of $G$, and let $\mathrm{rank}_{M}(L)$ be the rank of the submatrix of $G$ formed by the columns in $L\subset M$. Then $\mathsf{M}_{\calC} = (M,\mathrm{rank}_{M})$ is a matroid.
\end{example}

We will exploit the following observation.

\begin{lemma}[see~\cite{TamoPapailiopoulosDimakis13}*{Section III.B}]\label{lem:d-min-matroid}
The minimum distance of a code coincides with the cardinality of the smallest circuit in the matroid represented by its parity check matrix. \qed
\end{lemma}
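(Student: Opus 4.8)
The plan is to identify, for a fixed parity-check matrix $H$ of $\calC$ with columns $H_1,\dots,H_n$, the circuits of the matroid $\mathsf{M}_H$ with the minimal supports of nonzero code words. Recall that $\calC = \{\mathbf{c}\in\FF_q^n : \sum_i c_i H_i = 0\}$, that the ground set of $\mathsf{M}_H$ is $M=\{1,\dots,n\}$, and that $\mathrm{rank}_M(L)$ is the rank of the column-submatrix of $H$ indexed by $L$; this is a matroid by the same computation as in the Example above (submodularity of matrix rank).

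First I would record the dictionary between supports and dependent sets: a subset $L\subseteq M$ is dependent in $\mathsf{M}_H$ — that is, $\mathrm{rank}_M(L) < \#L$ — if and only if there is a nonzero code word whose support is contained in $L$. The forward direction takes a nontrivial dependence $\sum_{i\in L}\lambda_i H_i = 0$ and reads off the code word with entries $\lambda_i$ on $L$ and $0$ elsewhere; the backward direction reverses this. Next I would upgrade this to the claim that the circuits of $\mathsf{M}_H$ are precisely the minimal supports of nonzero code words. Indeed, if $L$ is a circuit then it is dependent, hence the support of some nonzero code word; and it is minimal as a support, since a smaller support would give a dependent proper subset of $L$. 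Conversely, a minimal support is a dependent set, hence contains a circuit, and that circuit is itself a support, so minimality forces it to equal $L$.

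Finally I would conclude by chasing cardinalities. The minimum distance $d$ is the least weight of a nonzero code word, i.e.\ the least cardinality of the support of a nonzero code word; every such support contains a minimal support of no larger size, so $d$ equals the least cardinality of a minimal support, which by the previous step is the least cardinality of a circuit of $\mathsf{M}_H$. The one point I would be careful to verify is that a \emph{minimal} dependent set carries no zero coefficients in its dependence relation — otherwise that relation would witness the dependence of a proper subset — since this is exactly what makes ``circuit'' align with ``support of a code word'' rather than merely ``set containing such a support''. It is also worth remarking that $\mathsf{M}_H$ does not depend on the choice of $H$, as elementary row operations on $H$ leave the linear relations among its columns unchanged, so the statement is well-posed; the degenerate cases (for instance $\calC = \{0\}$, where there are neither nonzero code words nor circuits) cause no difficulty under the usual conventions.
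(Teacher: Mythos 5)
Your proof is correct and complete. Note that the paper itself does not prove this lemma at all: it is stated with a reference to Tamo--Papailiopoulos--Dimakis and closed with \verb|\qed|, so there is no internal argument to compare against. What you supply is exactly the standard dictionary underlying that citation: a set $L$ is dependent in the column matroid of the parity-check matrix $H$ if and only if some nonzero code word has support contained in $L$, circuits are precisely the minimal supports, and hence the minimum weight equals the smallest circuit size. You also nail the one point where a careless version of this argument slips, namely that a dependence relation supported on a circuit must have all coefficients nonzero (otherwise a proper subset would already be dependent), which is what makes ``circuit'' coincide with ``support of a code word'' rather than merely ``set containing a support.'' Your remarks that the matroid is independent of the choice of $H$ and that indexing the ground set by coordinate positions (rather than by the literal set of columns, which could collapse repeated columns) makes the statement well-posed are both sensible and consistent with how the paper uses the lemma in Corollary~\ref{coro:distance-iso-matroids}.
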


\begin{corollary}\label{coro:distance-iso-matroids}
Let $\calC$ and $\calC'$ be codes with associated matroids $\mathsf{M}_{\calC}$ and $\mathsf{M}_{\calC'}$, respectively. If $\mathsf{M}_{\calC}$ and $\mathsf{M}_{\calC'}$ are isomorphic as matroids then the minimum distance of $\calC$ coincides with the minimum distance of $\calC'$.
\end{corollary}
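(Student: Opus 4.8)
The plan is to reduce the statement about codes to the statement about matroids plus Lemma~\ref{lem:d-min-matroid}. The key observation is that Lemma~\ref{lem:d-min-matroid} expresses the minimum distance of a code $\calC$ as the size of the smallest circuit of the matroid represented by its \emph{parity check matrix}, i.e.\ the matroid $\mathsf{M}_{\calC^{\bot}}$ associated to the dual code. So the first step is to upgrade the hypothesis ``$\mathsf{M}_{\calC} \cong \mathsf{M}_{\calC'}$'' to ``$\mathsf{M}_{\calC^{\bot}} \cong \mathsf{M}_{(\calC')^{\bot}}$''. This follows from the standard fact that the matroid of a parity check matrix of $\calC$ is the dual matroid of $\mathsf{M}_{\calC}$, that is, $\mathsf{M}_{\calC^{\bot}} = (\mathsf{M}_{\calC})^{\bot}$, together with the fact that duality of matroids respects isomorphism: if $\varphi\colon \mathsf{M}_{\calC} \to \mathsf{M}_{\calC'}$ is an isomorphism, the same underlying bijection $\varphi$ is an isomorphism $(\mathsf{M}_{\calC})^{\bot} \to (\mathsf{M}_{\calC'})^{\bot}$, because the defining condition ``$\mathrm{rank}(L) = \#L$'' of the dual is phrased purely in terms of which subsets have full rank in the primal, and $\varphi$ preserves exactly that data (using $\mathrm{rank}_M(M\setminus L)$ and $\mathrm{rank}_M(M)$, both of which transfer since $\varphi$ is a bijection preserving full-rank subsets).

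Second, once we know $\mathsf{M}_{\calC^{\bot}} \cong \mathsf{M}_{(\calC')^{\bot}}$, I would invoke Remark~\ref{rema:iso-circuit}: an isomorphism of matroids carries circuits to circuits bijectively, hence it carries the collection of circuits of $\mathsf{M}_{\calC^{\bot}}$ onto the collection of circuits of $\mathsf{M}_{(\calC')^{\bot}}$ preserving cardinality. Therefore the smallest circuit size is the same for both matroids. Finally, applying Lemma~\ref{lem:d-min-matroid} to each code gives that the minimum distance of $\calC$ equals this common smallest-circuit size, which equals the minimum distance of $\calC'$. That closes the argument.

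The step I expect to require the most care — though it is not deep — is establishing that the matroid of a parity check matrix of $\calC$ is precisely the dual of $\mathsf{M}_{\calC}$ (the matroid of a generator matrix of $\calC$). This is a classical fact in matroid theory (the matroid of a code and the matroid of its dual code are dual matroids), but since the paper has only introduced the bare axioms and the abstract dual, I would either cite it or spell out the short verification: a generator matrix of $\calC^{\bot}$ is a parity check matrix of $\calC$, and for a representable matroid, row-space duality of the representing matrices matches the combinatorial dual defined by $\mathrm{rank}_{M^{\bot}}(L) = \#L + \mathrm{rank}_M(M\setminus L) - \mathrm{rank}_M(M)$. Everything else is bookkeeping: checking that a set-bijection preserving full-rank subsets also preserves the derived dual-rank function, and chaining the three equalities of cardinalities. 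One subtlety worth flagging explicitly in the writeup is that isomorphism of matroids as defined here only records which subsets are independent (full rank), not the rank function pointwise — but that is exactly enough, both for duality and for Remark~\ref{rema:iso-circuit}, so no extra hypothesis is needed.
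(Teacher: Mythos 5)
Your proposal is correct and follows essentially the same route as the paper's proof: both pass from the hypothesis $\mathsf{M}_{\calC}\cong\mathsf{M}_{\calC'}$ to $\mathsf{M}_{\calC^{\bot}}\cong\mathsf{M}_{\calC'^{\bot}}$ via matroid duality and the identification of the parity-check matroid with the dual matroid (which the paper simply cites, to Jurrius--Pellikaan), then conclude with Remark~\ref{rema:iso-circuit} and Lemma~\ref{lem:d-min-matroid} by comparing smallest circuit sizes.
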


\begin{proof}
Set $d_{\calC}$ and $d_{\calC'}$ the minimum distances of $\calC$ and $\calC'$, respectively. By Lemma~\ref{lem:d-min-matroid} there are circuits $L^{\bot}$ in $\mathsf{M}_{\calC^{\bot}}$ and $L'^{\bot}$ in $\mathsf{M}_{\calC'^{\bot}}$ such that $d_{\calC} = \#L^{\bot}$ and $d_{\calC'} = \#L'^{\bot}$. The isomorphism of matroids $\mathsf{M}_{\calC} \cong \mathsf{M}_{\calC'}$ induces an isomorphism between the dual matroids $(\mathsf{M}_{\calC})^{\bot} \cong (\mathsf{M}_{\calC'})^{\bot}$. 
This induces an isomorphism $\varphi : \mathsf{M}_{\calC^{\bot}} \cong (\mathsf{M}_{\calC})^{\bot} \cong (\mathsf{M}_{\calC'})^{\bot} \cong \mathsf{M}_{\calC'^{\bot}}$ where the first and third isomorphisms occur by \cite{JurriusPellikaan13}*{p.\ 269}. Thus $\varphi^{-1}(L'^{\bot})$ is a circuit in $\mathsf{M}_{\calC^{\bot}}$ and $\varphi(L^{\bot})$ is a circuit in $\mathsf{M}_{\calC'^{\bot}}$ by Remark~\ref{rema:iso-circuit}. Since $L^{\bot}$ and $L'^{\bot}$ have the smallest cardinality among the circuits in $\mathsf{M}_{\calC^{\bot}}$ and $\mathsf{M}_{\calC'^{\bot}}$ respectively, then $\#L^{\bot} \leq \#\varphi^{-1}(L'^{\bot}) = \#L'^{\bot}$ and $\#L'^{\bot} \leq \#\varphi(L^{\bot}) = \#L^{\bot}$. Thus $d_{\calC} = \#L^{\bot} = \#L'^{\bot} = d_{\calC'}$.
\end{proof}

We now consider two of the codes we constructed in Section~\ref{ss:set-up} that differ only in the choice of points on the base. Namely, fix positive integers $b$, $r$, and a prime power $q$, set $n = b(r+1)$, and set $\calC \coloneqq \mathrm{im} \ev_{\calP}$ and $\calC' \coloneqq \mathrm{im} \ev_{\calP'}$ for $\calP = \left\{ (x_i,y_{i,j}) \right\}_{1 \leq i \leq b}^{1 \leq j \leq r+1}$ and $\calP = \left\{ (x'_i,y_{i,j}) \right\}_{1 \leq i \leq b}^{1 \leq j \leq r+1}$ subsets of $\Aff_x^1(\FF_q)\times \Aff_y^1(\FF_q)$. We can relate Vandermonde-like matrices constructed from the points on the base of $\calC$ and $\calC'$.

\begin{lemma}\label{lem:KeyRowEquiv}
Consider the sets $S = \{x_1, \dots, x_b\}$ and $S' = \{x'_1, \dots, x'_{b}\}$. Denote by
\begin{equation*}
M = \begin{bmatrix}
  1 & 1 & \dots & 1\\
  x_1 & x_2 & \dots & x_b\\
  \vdots & \vdots & \ddots &\vdots\\
  x_1^{b-2} & x_2^{b-2} & \cdots & x_b^{b-2}
\end{bmatrix}\quad \text{and}\quad M' = \begin{bmatrix}
  1 & 1 & \dots & 1\\
  x'_{1} & x'_{2} & \dots & x'_{b}\\
  \vdots & \vdots & \ddots &\vdots\\
  (x'_{1})^{b-2} & (x'_{2})^{b-2} & \cdots & (x'_{b})^{b-2}
\end{bmatrix}
\end{equation*}
the $(b-1) \times b$ Vandermode-type matrices coming from $S$ and $S'$ respectively. There exists a $(b-1) \times (b-1)$ invertible matrix $A$ and a $b\times b$ invertible diagonal matrix $D$ such that $A M D = M'$.
\end{lemma}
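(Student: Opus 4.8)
The plan is to view $M$ and $M'$ as surjective linear maps $\FF_q^{b}\to\FF_q^{b-1}$ and to match them after a diagonal rescaling $D$ of the source coordinates followed by an invertible change of basis $A$ on the target. The first step is to identify the kernels of the two Vandermonde-type blocks. Since the $x_i$ are distinct, $M$ has full row rank $b-1$, so $\ker M$ is a line in $\FF_q^{b}$; writing $w_i\coloneqq\prod_{j\ne i}(x_i-x_j)\ne 0$, I claim $\ker M=\langle\kappa\rangle$ with $\kappa\coloneqq(w_1^{-1},\dots,w_b^{-1})^{T}$. Indeed $(M\kappa)_{\ell}=\sum_{i=1}^{b}x_i^{\ell}/w_i$, and $\sum_{i}x_i^{\ell}/w_i$ is the coefficient of $x^{b-1}$ in the interpolating polynomial of degree $\le b-1$ through the data $x_i\mapsto x_i^{\ell}$; for $0\le\ell\le b-2$ that interpolant is the monomial $x^{\ell}$, whose $x^{b-1}$-coefficient vanishes, so $M\kappa=0$. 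The same computation gives $\ker M'=\langle\kappa'\rangle$ with $\kappa'\coloneqq((w'_1)^{-1},\dots,(w'_b)^{-1})^{T}$ and $w'_i\coloneqq\prod_{j\ne i}(x'_i-x'_j)$.

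Next I would choose the diagonal matrix to carry one kernel onto the other. Set $D\coloneqq\diag(d_1,\dots,d_b)$ with $d_i\coloneqq w'_i/w_i$; these entries are nonzero, so $D\in\GL_b(\FF_q)$, and $D\kappa'=(d_1/w'_1,\dots,d_b/w'_b)^{T}=\kappa$, whence $D(\ker M')=\ker M$, equivalently $\ker(MD)=D^{-1}\ker M=\ker M'$. Finally I would produce $A$ by passing to the quotient: $MD$ and $M'$ are both surjections $\FF_q^{b}\to\FF_q^{b-1}$ with the common kernel $K\coloneqq\ker M'$, so each induces an isomorphism $\FF_q^{b}/K\xrightarrow{\sim}\FF_q^{b-1}$, say $\phi$ coming from $MD$ and $\phi'$ coming from $M'$; then $A\coloneqq\phi'\circ\phi^{-1}\in\GL_{b-1}(\FF_q)$ satisfies $AMD=M'$ by construction.

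The only substantive ingredient is the divided-difference identity $\sum_{i}x_i^{\ell}/w_i=0$ for $0\le\ell\le b-2$ that pins down $\ker M$; this is classical, so I do not anticipate a genuine obstacle, only the bookkeeping of assembling $A$ and $D$. If one prefers to avoid quotients, there is a dual route with the same $D$: the row space of $M$ is the hyperplane $\{v\in\FF_q^{b}:\sum_i v_i/w_i=0\}$, and one checks that $M'D^{-1}$ has exactly this row space; since $M$ has full row rank $b-1$, it follows that $M'D^{-1}=AM$ for a unique $A$, which is invertible by comparing ranks.
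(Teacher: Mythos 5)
Your proposal is correct and follows essentially the same route as the paper: you identify the one-dimensional kernels of $M$ and $M'$ via the Lagrange-type vectors $(1/w_i)$ and $(1/w'_i)$, rescale by the same diagonal matrix $D = \operatorname{diag}(w'_i/w_i)$ so that $\ker(MD) = \ker M'$, and conclude that an invertible $A$ with $AMD = M'$ exists because two full-rank $(b-1)\times b$ matrices with the same kernel differ by an invertible matrix on the left. The only cosmetic differences are that you justify the kernel computation explicitly by the interpolation identity $\sum_i x_i^{\ell}/w_i = 0$ (which the paper leaves as ``a calculation'') and you package the final step as induced isomorphisms on the quotient rather than elementary row operations.
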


\begin{proof}
The rank of $M$ and $M'$ is $b-1$, so each of their kernels is one dimensional. Set
\begin{equation*}
\mathbf{v} = \begin{bmatrix}
  \prod_{i\in \{1,\dots,b\}\setminus \{1\}} \frac{1}{x_1-x_i}\\
  \vdots\\
  \prod_{i\in \{1,\dots,b\}\setminus \{j\}} \frac{1}{x_j-x_i}\\
  \vdots\\
  \prod_{i\in \{1,\dots,b\}\setminus \{b\}} \frac{1}{x_b-x_i}
\end{bmatrix}\quad \text{and}\quad \mathbf{v}' = \begin{bmatrix}
  \prod_{i\in \{1,\dots,b\}\setminus \{1\}} \frac{1}{x'_1-x'_i}\\
  \vdots\\
  \prod_{i\in \{1,\dots,b\}\setminus \{j\}} \frac{1}{x'_j-x'_i}\\
  \vdots\\
  \prod_{i\in \{1,\dots,b\}\setminus \{b\}} \frac{1}{x'_b-x'_i}
\end{bmatrix},
\end{equation*}
a calculation yields $\ker M = \mathrm{span}(\mathbf{v})$ and $\ker M' = \mathrm{span}(\mathbf{v}')$. Set
\begin{equation*}
D = \mathrm{diag}\left[ \prod_{i\in \{1,\dots,b\}\setminus \{1\}} \frac{x'_1-x'_i}{x_1-x_i}, \dots, \prod_{i\in \{1,\dots,b\}\setminus \{j\}} \frac{x'_j-x'_i}{x_j-x_i}, \dots, \prod_{i\in \{1,\dots,b\}\setminus \{b\}} \frac{x_b-x_i}{x'_b-x'_i} \right]
\end{equation*}
so that $MD \mathbf{v}' = M \mathbf{v} = 0$. Thus $\ker MD = \ker M'$, so the row spaces of $MD$ and $M'$ coincide, so by doing elementary row operations to $MD$ we can reach $M'$. The matrix $A$ encoding those elementary row operations satisfies $AMD = M'$ as claimed.
\end{proof}

For $V$ as in~\eqref{eq:vector-space-v} fix the ordered basis
\[
\{1, x, \dots, x^{b-2}, y, yx, \dots, yx^{b-2}, \dots, y^{r-1}, y^{r-1}x, \dots, y^{r-1}x^{b-2}\}
\]
and write $G$ and $G'$ for the generator matrices of the codes $\calC$ and $\calC'$, respectively, with respect to this basis. Denote by $\mathsf{M}_{\calC}$ and $\mathsf{M}_{\calC'}$ their corresponding matroids.

\begin{theorem}\label{theo:IsomMatroids}
The matroids $\mathsf{M}_{\calC}$ and $\mathsf{M}_{\calC'}$ are isomorphic. Indexing the columns of $G$ and $G'$ in order by the elements $\{1,\dots,n\}$, the identity bijection $\mathrm{id}_{\{1,\dots,n\}} : \{1,\dots,n\} \to \{1,\dots,n\}$ gives a matroid isomorphism from $\mathsf{M}_{\calC}$ to $\mathsf{M}_{\calC'}$. In particular, the minimum distance of $\calC$ coincides with the minimum distance of $\calC'$.
\end{theorem}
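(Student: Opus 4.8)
The plan is to realize the claimed matroid isomorphism by an explicit pair of matrices. Concretely, I would produce an invertible $(b-1)r\times(b-1)r$ matrix $T$ and an invertible $n\times n$ diagonal matrix $S$ for which $G' = TGS$. This would suffice: for any subset $L\subseteq\{1,\dots,n\}$ of columns, the corresponding submatrix of $G'$ is $T\,G_L\,S_L$, where $S_L$ is an invertible diagonal matrix, so it has the same rank as $G_L$. Hence the rank functions of $\mathsf{M}_{\calC}$ and $\mathsf{M}_{\calC'}$ coincide and $\mathrm{id}_{\{1,\dots,n\}}$ is a matroid isomorphism $\mathsf{M}_{\calC}\to\mathsf{M}_{\calC'}$. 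The equality of minimum distances is then immediate from Corollary~\ref{coro:distance-iso-matroids}.

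First I would record the column structure of the generator matrices. With respect to the fixed ordered basis (in which the power of $y$ varies slowest), the column of $G$ attached to the point $(x_i,y_{i,m})$ is the Kronecker product $\mathbf{v}(y_{i,m})\otimes\mathbf{w}(x_i)$, where $\mathbf{v}(t) = (1,t,\dots,t^{r-1})^{\mathsf{T}}$ and $\mathbf{w}(t) = (1,t,\dots,t^{b-2})^{\mathsf{T}}$; this is just the statement that the basis monomial $y^\ell x^j$ takes the value $y_{i,m}^\ell x_i^j$ at that point. Likewise the column of $G'$ attached to $(x_i',y_{i,m})$ is $\mathbf{v}(y_{i,m})\otimes\mathbf{w}(x_i')$, since $\calP'$ keeps the same $y$-coordinates. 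Note that the $i$-th column of the Vandermonde-type matrix $M$ of Lemma~\ref{lem:KeyRowEquiv} is exactly $\mathbf{w}(x_i)$, and the $i$-th column of $M'$ is $\mathbf{w}(x_i')$.

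Then I would invoke Lemma~\ref{lem:KeyRowEquiv} to obtain an invertible $(b-1)\times(b-1)$ matrix $A$ and an invertible diagonal matrix $D = \diag(D_1,\dots,D_b)$ with $AMD = M'$; reading this off column by column gives $A\mathbf{w}(x_i) = D_i^{-1}\mathbf{w}(x_i')$ for each $1\le i\le b$. Set $T \coloneqq I_r\otimes A$, which is invertible, and let $S$ be the $n\times n$ invertible diagonal matrix that is constant equal to $D_i$ on the block of $r+1$ columns indexed by the batch $A_i$. Using the mixed-product property of the Kronecker product, the column of $TGS$ attached to $(x_i,y_{i,m})$ equals
\[
D_i\cdot(I_r\otimes A)\bigl(\mathbf{v}(y_{i,m})\otimes\mathbf{w}(x_i)\bigr) = \mathbf{v}(y_{i,m})\otimes\bigl(D_i\,A\mathbf{w}(x_i)\bigr) = \mathbf{v}(y_{i,m})\otimes\mathbf{w}(x_i'),
\]
which is precisely the matching column of $G'$. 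Hence $G' = TGS$ (note $G$ and $G'$ are genuine generator matrices, of full row rank, by Lemma~\ref{lem:injectivity}), and the reduction of the first paragraph finishes the argument. The only place requiring care is keeping the batch/Kronecker bookkeeping aligned with the fixed basis ordering; the genuinely substantive input — that the base-point Vandermonde matrices satisfy $AMD = M'$ — has already been isolated in Lemma~\ref{lem:KeyRowEquiv}, so I do not anticipate a further obstacle.
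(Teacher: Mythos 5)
Your proposal is correct and follows essentially the same route as the paper: both invoke Lemma~\ref{lem:KeyRowEquiv} to get $A$ and $D$, then form the block matrices $T = I_r\otimes A$ and the batchwise-constant diagonal column scaling (the paper's $R$, your $S$) and conclude $TGR = G'$, hence equal rank functions and equal minimum distances via Corollary~\ref{coro:distance-iso-matroids}. The only difference is that you spell out, via the Kronecker-product column structure and the mixed-product property, the verification that the paper compresses into ``a calculation confirms that $TGR = G'$,'' and that bookkeeping is carried out correctly.
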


\begin{proof}
    To show that the identity map on groundsets induces an isomorphism of matroids between 
    $\mathsf{M}_{\calC}$ and $\mathsf{M}_{\calC'}$ it suffices to show that we can get from 
    $G$ to $G'$ via a sequence of row operations and column scaling. 
    That is, we want to find an invertible $k\times k$ matrix $T$ and 
    a diagonal $n\times n$ matrix $R$ satisfying $T G R  = G'$. 
    Let $S = \{x_1, \dots, x_b\}$ and $S' = \{x'_1, \dots, x'_{b}\}$, 
    and let $A$ and $D = \mathrm{diag}[D_1, \dots, D_b]$ be as in Lemma~\ref{lem:KeyRowEquiv}. 
    Consider 
\begin{equation*}
T = \begin{bmatrix}
  A & \mathbf{0} & \dots & \mathbf{0}\\
  \mathbf{0} & A & \dots & \mathbf{0}\\
  \vdots & \vdots & \ddots & \vdots\\
  \mathbf{0} & \mathbf{0} & \dots & A
\end{bmatrix}\quad \text{and}\quad R = \begin{bmatrix}
  D_{1} I_{r+1} & \mathbf{0} & \dots & \mathbf{0}\\
  \mathbf{0} & D_{2} I_{r+1} & \dots & \mathbf{0}\\
  \vdots & \vdots & \ddots & \vdots\\
  \mathbf{0} & \mathbf{0} & \dots & D_{b} I_{r+1}
\end{bmatrix}
\end{equation*}
where $T$ is formed by $r^2$ blocks of size $(b-1) \times (b-1)$, and $I_{r+1}$ is the $(r+1) \times (r+1)$ identity matrix. A calculation confirms that $T G R  = G'$. Since $\mathsf{M}_{\calC} \cong \mathsf{M}_{\calC'}$, the final claim follows from Corollary~\ref{coro:distance-iso-matroids}.
\end{proof}

\subsection{Constructing codes with larger minimum distance}
\label{ss:larger-minimum-distance}

Recall that an algebro-geometric code $\calC$ is determined by a triple $(X,\calP,V)$ as in Section~\ref{ss:ag-codes}. It is interesting to study how a code changes when we vary these triples in a reasonable family. In Section~\ref{ss:set-up}, we chose $X = \Aff^1_x\times \Aff^1_y$, a collection of points $\calP$ broken up into $b$ batches of $r+1$ points, and the vector space $V$ in~\eqref{eq:vector-space-v}. Fix a nonnegative integer $z \in \ZZ_{\geq 0}$ with $0 \leq z \leq b-2$. In this subsection, we provide some numerical ruminations by altering the vector space $V$ to 
\[
V_z \coloneqq \left\{\sum_{\ell = 0}^{r-1} a_\ell(x) y^\ell : a_\ell(x)\in \FF_q[x], \deg a_\ell(x) \leq b-2-z \right\} \subset \FF_q[x,y].
\]
We denote the code arising from the construction in Section~\ref{ss:set-up} by
\[
\calC_z = \calC_z(X,\calP,V_z) \coloneqq \im\left(\ev_\calP\colon V_z \to \FF_q^n\right),
\]
where $n = \#\calP = b(r+1)$, as before. The codes from Section~\ref{ss:set-up} comprise the special 
case $z = 0$. 
The reader is invited to check that the proofs of Lemmas~\ref{lem:injectivity} and~\ref{lem:locality} go through in this new setting so that $\dim_{\FF_q} \calC_z = \dim_{\FF_q} V_z = (b-1-z)r$ and $\calC_z$ is an LRC with locality $r$. The Singleton-type bound for $\calC_z$ now gives the upper bound
\[
d_z \leq d_{\opt} \coloneqq (r + 1)(z + 1) + 2
\]
for the minimum distance of $\calC_z$. This bound increases with $z$, allowing for possible codes that have larger minimum distance than the ones we have studied so far.

\begin{example}
    \label{exam:rumination-1}
    Let $b=6$, $r=3$, and $q=31$. Then $n=6(3 + 1) = 24$, and we choose the set of points $\calP$ in six batches of four points as follows:
    \begin{align*}
        \mathcal{P}= \{&(1,1), (1,2), (1,3), (1,4), \\
        &(2,6), (2,7), (2,8),(2,9), \\
        &(3,11), (3,12), (3,13), (3,14), \\
        &(4,16), (4,17), (4,18), (4,19), \\
        &(5,21), (5,22), (5,23), (5,24), \\
        &  (6,25), (6,26), (6,27), (6,28) \}.
    \end{align*}
    A {\tt Magma} or {\tt SageMath} computation exhibits the following parameters for 
    the resulting code $\calC_z$ for $0\leq z\leq 3$:
    \begin{center}
    \begin{tabular}{c c c c}
        \hline
            $z$ & $[n,k,d]_q$ & $d_z$ & $d_{\opt}$ \\
            \hline
            0 & $[24, 15, 6]_{31}$ & $6$ & $6$  \\
            1 & $[24, 12, 9]_{31}$ & $9$ & $10$  \\
            2 & $[24, 9, 12]_{31}$ & $12$ & $14$ \\
            3 & $[24, 6, 16]_{31}$ & $16$ & $18$ \\
            \hline
    \end{tabular}
    \end{center}

    When $z=0$, we obtain an optimal code with minimum distance $6$, as expected in light of 
    Theorem~\ref{theo:line-fiber-optimal}.  We observe that as $z$ increases, the dimension 
    of the code decreases and its minimum distance increases. However, the Singleton-type 
    bound also increases, and these codes are not optimal when $1\leq z\leq 3$.
\end{example}

\begin{example}
    \label{exam:rumination-2}
    Let $b=10$, $r=2$, and $q = 37$. Then $n = 10(2 + 1) =30$, and we choose the set of points $\calP$ in ten batches of three points as follows:
    \begin{align*}
        \mathcal{P}=\{&(1,1),(1,2),(1,3),\\
        &(2,4), (2,5), (2,6), \\
        &(3,7), (3,8), (3,9), \\
        &(4,10), (4,11), (4,12), \\
        &(5,13), (5,14), (5,15)\\
        & (6,16), (6,17), (6,18), \\
        &(7,20), (7,21), (7,22), \\
        &(8,26), (8,27), (8,28), \\
        &(9,32), (9,33), (9,34),\\
        & (10,35), (10,36), (10,37)\}.
    \end{align*}
    A {\tt Magma} or {\tt SageMath} computation exhibits the following parameters for 
    the resulting code $\calC_z$ for $0\leq z\leq 7$:
    \begin{center}
        \begin{tabular}{c c c c}
           \hline
              $z$ & $[n,k,d]_q$ & $d_z$ & $d_{\opt}$ \\
              \hline
              0 & $[30, 18, 5]_{37}$ & $5$ & $5$  \\
              1 & $[30, 16, 8]_{37}$ & $8$ & $8$  \\
              2 & $[30, 14, 10]_{37}$ & $10$ & $11$ \\
              3 & $[30, 12, 12]_{37}$ & $12$ & $14$ \\
              4 & $[30, 10, 14]_{37}$ & $14$ & $17$ \\
              5 & $[30, 8, 17]_{37}$ & $17$ & $20$ \\
              6 & $[30, 6, 20]_{37}$ & $20$ & $23$ \\
              7 & $[30, 4, 23]_{37}$ & $23$ & $26$ \\
              \hline
        \end{tabular}
    \end{center}

    Observe that for all values of $z \geq 1$, the codes have a minimum distance greater than $r+3=5$, but for $2\leq z\leq 7$, the codes are not optimal.
\end{example}

In all cases above, the defect between the Singleton-type bound $d_{\opt}$ and the minimum 
distance of $\calC_z$ is small relative to $d_{\opt}$. In this sense, Examples~\ref{exam:rumination-1} 
and~\ref{exam:rumination-2} tantalizingly suggest that, even if the LRCs $\calC_z$ are not 
optimal, they are not too far from optimal. It would be interesting to further study these codes.

\section{Exploring nonoptimal codes}\label{sec:non-optimal}

In this section we build on Example~\ref{ex:badcodes} to showcase nonoptimal codes arising from the construction in Section~\ref{ss:set-up} when the alphabet size $q$ is small, and the locality $r$ exceeds $3$.

Fix fibers $b = 4$, locality $r = 4$, and an alphabet of size $q = 37$. Then the code $\mathcal{C}$ constructed in Section~\ref{ss:set-up} has length $n = b(r+1) = 20$; it arises by evaluating a set of points $\mathcal{P} = \{(x_i,y_{i,j})\}_{1\leq i\leq 4}^{1\leq j\leq 5}$ on a vector space of polynomials $V$ of dimension $k = (b-1)r = 12$. Since the exact values $x_2, x_3, x_4$ of the points on the base of the fibers do not affect the optimality of $\calC$ by Theorem~\ref{theo:IsomMatroids}, we now fix three distinct $x_2$, $x_3$, and $x_4$ in $\FF_{37}$ for the rest of the analysis.

\begin{figure}[ht]
\begin{center}
\begin{tikzpicture}
\draw[->][very thick] (0,0) -- (5.8,0);
\draw[->][very thick] (0,0) -- (0,6.4);
\node [label={[xshift=0.25cm, yshift=-0.35cm]$x$}] at (5.8,0){};
\node [label={[xshift=0.cm, yshift=-0.15cm]$y$}] at (0,6.4){};

\filldraw [thick, fill=blue!60!cyan,opacity=0.99](1.1,0.3) circle [radius=2pt];
\filldraw [thick, fill=blue!60!cyan,opacity=0.99](1.1,0.6) circle [radius=2pt];
\filldraw [thick, fill=blue!60!cyan,opacity=0.99](1.1,0.9) circle [radius=2pt];
\filldraw [thick, fill=blue!60!cyan,opacity=0.99](1.1,1.2) circle [radius=2pt];
\filldraw [thick, fill=blue!60!cyan,opacity=0.99](1.1,1.5) circle [radius=2pt];

\filldraw [thick, fill=blue!60!cyan,opacity=0.99](2.2,1.8) circle [radius=2pt];
\filldraw [thick, fill=blue!60!cyan,opacity=0.99](2.2,2.1) circle [radius=2pt];
\filldraw [thick, fill=blue!60!cyan,opacity=0.99](2.2,2.4) circle [radius=2pt];
\filldraw [thick, fill=blue!60!cyan,opacity=0.99](2.2,2.7) circle [radius=2pt];
\filldraw [thick, fill=blue!60!cyan,opacity=0.99](2.2,3) circle [radius=2pt];

\filldraw [thick, fill=blue!60!cyan,opacity=0.99](3.3,3.3) circle [radius=2pt];
\filldraw [thick, fill=blue!60!cyan,opacity=0.99](3.3,3.6) circle [radius=2pt];
\filldraw [thick, fill=blue!60!cyan,opacity=0.99](3.3,3.9) circle [radius=2pt];
\filldraw [thick, fill=blue!60!cyan,opacity=0.99](3.3,4.2) circle [radius=2pt];
\filldraw [thick, fill=blue!60!cyan,opacity=0.99](3.3,4.5) circle [radius=2pt];

\filldraw [thick, fill=blue!60!cyan,opacity=0.99](4.5,4.8) circle [radius=2pt];
\filldraw [thick, fill=blue!60!cyan,opacity=0.99](4.5,5.1) circle [radius=2pt];
\filldraw [thick, fill=blue!60!cyan,opacity=0.99](4.5,5.4) circle [radius=2pt];
\filldraw [thick, fill=blue!60!cyan,opacity=0.99](4.5,5.7) circle [radius=2pt];
\filldraw [thick, fill=blue!60!cyan,opacity=0.99](4.5,6) circle [radius=2pt];

\draw[very thick] (1.1,-0.05) -- (1.1,0.05);
\draw[very thick] (2.2,-0.05) -- (2.2,0.05);
\draw[very thick] (3.3,-0.05) -- (3.3,0.05);
\draw[very thick] (4.5,-0.05) -- (4.5,0.05);

\begin{scope}
    \node [label={[xshift=0cm, yshift=-.6cm]$x_{1}$}] at (1.1,0){};
    \node [label={[xshift=0cm, yshift=-.6cm]$x_{2}$}] at (2.2,0){};
    \node [label={[xshift=0.05cm, yshift=-.6cm]$x_{3}$}] at (3.3,0){};
    \node [label={[xshift=0.05cm, yshift=-.6cm]$x_{4}$}] at (4.5,0){};
\end{scope}

\draw[thick][-] (1.1,6.1) .. controls (1.1,3) and (1.1,1) .. (1.1,0.3); 
\draw[thick][-] (1.1,0.3) .. controls (1.3,-0.2) and (1.48,0.8) .. (1.5,0.9);	
\draw[thick][-] (1.5,0.9) .. controls (1.6,1.2) and (1.8,2) .. (2.2,2.1);	
\draw[thick][-] (2.2,2.1) .. controls (3.8,2.5) and (4,3.9) .. (3.3,3.9); 
\draw[thick][-] (3.3,3.9) .. controls (2.1,4) and (3.8,5.7) .. (4.5,5.1);	
\draw[thick][-] (4.5,5.1) .. controls (5.1,4.8) and (4.7,4.2) .. (4.6,3.9);
\draw[thick][-] (4.6,3.9) .. controls (4,1.7) and (4.4,0.8) .. (5.4,0.3);

\end{tikzpicture}
\caption{Polynomial vanishing in the first batch.}
\label{fig:poly-first-batch-zero}
\end{center}
\end{figure}
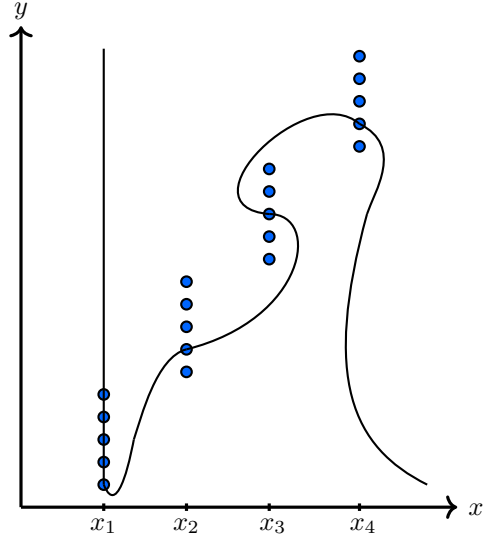

Example~\ref{ex:badcodes} suggests that if we want to find a nonzero code 
word $\mathbf{c} = \ev_\calP(f(x,y))$ in $\calC$ of length $\leq r + 2 = 6$, then we should take
\[
f(x,y) = (x - x_1)\sum_{\ell = 0}^3 g_\ell(x) y^\ell,
\]
where $g_\ell(x)$ has degree $\leq 1$ for $0\leq \ell\leq 3$. Write
\[
g(x,y) \coloneqq \sum_{\ell = 0}^3 g_\ell(x) y^\ell = 
(a_0+a_1x) + (a_2+a_3x)y + (a_4+a_5x)y^2 + (a_6+a_7x)y^3.
\]
The polynomial $f(x,y)$ vanishes along the five points 
\[
(x_1,y_{1,1}),(x_1,y_{1,2}),(x_1,y_{1,3}),(x_1,y_{1,4}),(x_1,y_{1,5})
\]
of $\calP$, as illustrated in Figure~\ref{fig:poly-first-batch-zero}. We want the factor $g(x,y)$ to vanish along nine more points among the remaining fifteen elements of~$\calP$. Without loss of generality, say $g(x,y)$ vanishes at the nine points 
\[
(x_2,y_{2,1}), (x_2,y_{2,2}), (x_2,y_{2,3}), (x_3,y_{3,1}), (x_3,y_{3,2}), (x_3,y_{3,3}),
    (x_4,y_{4,1}), (x_4,y_{4,2}), (x_4,y_{4,3}).
\]

The vanishing of $g(x,y)$ at the nine points can be written in matrix form as (compare with~\eqref{eq:explicit-vdm}):
\begin{equation*}
    \begin{bmatrix}
        1 & x_2 & y_{2,1} & x_2y_{2,1} & y_{2,1}^2 & x_2y_{2,1}^2 & y_{2,1}^3 & x_2y_{2,1}^3\\
        1 & x_2 & y_{2,2} & x_2y_{2,2} & y_{2,2}^2 & x_2y_{2,2}^2 & y_{2,2}^3 & x_2y_{2,2}^3\\
        1 & x_2 & y_{2,3} & x_2y_{2,3} & y_{2,3}^2 & x_2y_{2,3}^2 & y_{2,3}^3 & x_2y_{2,3}^3\\
        1 & x_3 & y_{3,1} & x_3y_{3,1} & y_{3,1}^2 & x_3y_{3,1}^2 & y_{3,1}^3 & x_3y_{3,1}^3\\
        1 & x_3 & y_{3,2} & x_3y_{3,2} & y_{3,2}^2 & x_3y_{3,2}^2 & y_{3,2}^3 & x_3y_{3,2}^3\\
        1 & x_3 & y_{3,3} & x_3y_{3,3} & y_{3,3}^2 & x_3y_{3,3}^2 & y_{3,3}^3 & x_3y_{3,3}^3\\
        1 & x_4 & y_{4,1} & x_4y_{4,1} & y_{4,1}^2 & x_4y_{4,1}^2 & y_{4,1}^3 & x_4y_{4,1}^3\\
        1 & x_4 & y_{4,2} & x_4y_{4,2} & y_{4,2}^2 & x_4y_{4,2}^2 & y_{4,2}^3 & x_4y_{4,2}^3\\
        1 & x_4 & y_{4,3} & x_4y_{4,3} & y_{4,3}^2 & x_4y_{4,3}^2 & y_{4,3}^3 & x_4y_{4,3}^3
    \end{bmatrix}
    \begin{bmatrix}
        a_0\\
        a_1\\
        a_2\\
        a_3\\
        a_4\\
        a_5\\
        a_6\\
        a_7
    \end{bmatrix} = 
    \begin{bmatrix}
        0\\
        0\\
        0\\
        0\\
        0\\
        0\\
        0\\
        0\\
        0
    \end{bmatrix}.
\end{equation*}

Let $M_{i,j}$ be the $8 \times 8$ matrix obtained by deleting the $3i + j - 6$ 
row of the above $9 \times 8$ matrix for $2\leq i\leq 4$ and $1\leq j\leq 3$. 
For example, we can check that
\begin{align*}
    \det M_{2,1}=   &(x_2 - x_3)(x_2 - x_4) (x_3 - x_4)^2 \\
                    &(y_{2,2}-y_{2,3})(y_{3,1}-y_{3,2})(y_{3,1}-y_{3,3})(y_{3,2}-y_{3,3})
                    (y_{4,1}-y_{4,2})(y_{4,1}-y_{4,3})(y_{4,2}-y_{4,3})\\
                    &r_{2,1}(y_{2,1},y_{2,2},y_{2,3},y_{3,1},y_{3,2},y_{3,3},y_{4,1},y_{4,2},y_{4,3})
\end{align*}
where (omitting the variables to avoid clutter) 
\begin{align*}
    r_{2,1} &= y_{2,2}^2 y_{2,3}^2 \sum_{j=1}^3(y_{3,j} - y_{4,j}) 
                + (y_{2,2}^2 + y_{2,3}^2) (y_{3,1}y_{3,2}y_{3,3}-y_{4,1}y_{4,2}y_{4,3})\\
                &-(y_{2,2}^2 y_{2,3}+y_{2,2} y_{2,3}^2) \sum_{1 \leq i < j \leq 3} (y_{3,i}y_{3,j} - y_{4,i}y_{4,j}) \\
                &+ y_{2,2} y_{2,3} \left( y_{3,1}y_{3,2}y_{3,3} - y_{4,1}y_{4,2}y_{4,3} + \sum_{\substack{1 \leq i < j \leq 3 \\ 1 \leq k \leq 3}} (y_{3,i}y_{3,j}y_{4,k} - y_{3,k}y_{4,i}y_{4,j}) \right)\\
                &-(y_{2,2}+y_{2,3}) \sum_{j=1}^3 (y_{3,1}y_{3,2}y_{3,3} y_{4,j} - y_{3,j}y_{4,1}y_{4,2} y_{4,3} ) \\
                &+y_{3,1}y_{3,2}y_{3,3}\sum_{1 \leq i < j \leq 3}y_{4,i}y_{4,j}  - y_{4,1}y_{4,2}y_{4,3}\sum_{1 \leq i < j \leq 3} y_{3,i}y_{3,j}.
\end{align*}
In general (omitting again some variables) 
\[
\det M_{i,j} = \left(\prod_{2 \leq \ell < m \leq 4 } (x_{\ell} - x_{m}) \right) (x_{i_1}-x_{i_2}) \left(\prod_{2 \leq u\leq 4} \prod_{1 \leq v < w \leq 3} (y_{u,v}-y_{u,w})\right) \frac{r_{i,j}(y_{2,1},\dots,y_{4,3})}{(y_{i,j}-y_{i,j_1}) (y_{i,j} - y_{i,j_2})},
\]
where $i_1 < i_2$; $i_1,i_2 \not = i$; $j_1 < j_2$; $j_1,j_2 \not = j$; and 
$r_{i,j}\in \FF_q[y_{2,1},y_{2,2},y_{2,3},y_{3,1},y_{3,2},y_{3,3},y_{4,1},y_{4,2},y_{4,3}]$ are 
homogeneous polynomials of degree $5$.

Since the $x_i$ are distinct and the $y_{u,v}$ are distinct for all $1\leq i,u\leq 4$ and all $1\leq v\leq 4$, the $9\times 8$ matrix above is singular precisely when the $9$ polynomials $r_{i,j}$ with $2\leq i\leq 4$ and $1\leq j\leq 3$ 
simultaneously vanish. Since the polynomials are homogeneous, their simultaneous vanishing defines a projective variety
\[
Z \coloneqq \{r_{2,1} = \cdots = r_{4,3} = 0\} \subset \PP^8_{\FF_{37}}.
\]
Rational points on $Z$ will now give rise to nonoptimal codes $\calC$. The variety $Z$ has dimension $6$, 
so one can improve the estimate~\eqref{eq:prob_estimate} to $A'q^6/q^8 = A'/q^2$, where $A'$ is the Lang--Weil constant for $Z$. For the convenience of the reader, we present the above calculations using \texttt{Magma} and \texttt{SageMath} in~\cite{AAAORAVA24}.

\begin{example}
    The point
    \[
    [ 17, 34, 14, 11, 8, 2, 36, 19, 1 ]\in \PP^8_{\FF_{37}}
    \]
    lies on the variety $Z$. We use this point to construct the set
    \[
    \begin{split}
    \calP = \{  &(4,3),(4,7),(4,28),(4,12),(4,21), \\
                &(9,\textbf{17}),(9,\textbf{34}),(9,\textbf{14}),(9,13),(9,22), \\
                &(16,\textbf{11}),(16,\textbf{8}),(16,\textbf{2}),(16,16),(16,23), \\
                &(25,\textbf{36}),(25,\textbf{19}),(25,\textbf{1}),(25,15),(25,26)\} \subset \FF_{37}^2.
    \end{split}
    \]
    Interpolating the points $(9,17), (9,34), (9,14), (16,11), (16,8), (16,2), (25,36), (25,19), (25,1)$, we construct
    \[
    f(x,y) = (x - 4) \left((1 + 26x) + (19 + 33x)y + (25 + 7x)y^2 + (8 + 34x)y^3\right)
    \]
    and compute
    \[
    \ev_\calP(f(x,y)) = (0, 0, 0, 0, 0, 0, 0, 0, 25, 16, 0, 0, 0, 5, 6, 0, 0, 0, 8, 11).
    \]
    This is a code word of length $6 < 7 = r + 3$ in the code 
    $\calC \coloneqq \mathrm{im} \ev_\calP$, witnessing the nonoptimality of $\calC$.
\end{example}

\section{Codes from projective space bundles}\label{sec:construction}

In this section we establish a general framework encompassing the codes of Section~\ref{sec:line-baseline-codes}. Those can be recovered by setting $m = 1$, $t = 1$, $\alpha = b-2$, and $\beta = r-1$ in the construction that follows.

\begin{figure}[ht]
\begin{center}
\begin{tikzpicture}
\draw[->][very thick] (-7,-0.3) -- (5.1,-0.8);
\draw[->][very thick] (-7,-0.3) -- (-7,6.5);

\begin{scope}
    \node [label={[xshift=0.4cm, yshift=-0.4cm]$\mathbb{P}^{1}$}] at (5.1,-0.8){};
    \node [label={[xshift=0.1cm, yshift=-.1cm]$ \mathbb{P}^{m}$}] at (-7,6.5){};
\end{scope} 

\filldraw [thick, fill=blue!10!white,opacity=0.20]  (-6,-1) -- (-6,5) -- (-4.7,6.3) -- (-4.7,0.3)--cycle;
\filldraw [thick, fill=blue!10!white,opacity=0.20] (-3.5,-1.1) -- (-3.5,4.9) -- (-2.4,6.2) -- (-2.4,0.2)--cycle;
\filldraw [thick, fill=blue!10!white,opacity=0.20] (0.5,-1.2) -- (0.5,4.8) -- (1.51,6.03) -- (1.51,0.03)--cycle;
\filldraw [thick, fill=blue!10!white,opacity=0.20] (3,-1.3) -- (3,4.7) -- (3.99,5.99) -- (3.99,-0.01)--cycle;

\draw[very thick] (-5.36,-0.42) -- (-5.36,-0.32);
\draw[very thick] (-2.97,-0.52) -- (-2.97,-0.42);
\draw[very thick] (0.96,-0.69) -- (0.96,-0.59);
\draw[very thick] (3.42,-0.8) -- (3.42,-0.7);

\begin{scope}
    \node [label={[xshift=0cm, yshift=-.6cm]$x_{1}$}] at (-5.36,-0.41){};
    \node [label={[xshift=0.05cm, yshift=-.6cm]$x_{2}$}] at (-2.97,-0.47){};
    \node [label={[xshift=0.24cm, yshift=-.6cm]$x_{b-1}$}] at (0.96,-0.64){};
    \node [label={[xshift=0.05cm, yshift=-.6cm]$x_{b}$}] at (3.42,-0.75){};
    \node [label={[xshift=0cm, yshift=-.6cm]$\cdots$}] at (-1.3,-0.54){};
\end{scope}

\filldraw [thick, fill=blue!99!white,opacity=0.99](-5.36,5.2) circle [radius=2pt];
\filldraw [thick, fill=blue!99!white,opacity=0.99](-5.7,4) circle [radius=2pt];
\filldraw [thick, fill=blue!99!white,opacity=0.99](-4.9,4.4) circle [radius=2pt];
\filldraw [thick, fill=blue!99!white,opacity=0.99](-5.36,2.37) circle [radius=2pt];
\filldraw [thick, fill=blue!99!white,opacity=0.99](-4.9,1) circle [radius=2pt];
\filldraw [thick, fill=blue!99!white,opacity=0.99](-5.7,1.8) circle [radius=2pt];

\filldraw [thick, fill=blue!99!white,opacity=0.99](-3.1,4.5) circle [radius=2pt];
\filldraw [thick, fill=blue!99!white,opacity=0.99](-2.6,4.9) circle [radius=2pt];
\filldraw [thick, fill=blue!99!white,opacity=0.99](-2.66,2.37) circle [radius=2pt];
\filldraw [thick, fill=blue!99!white,opacity=0.99](-2.96,0.87) circle [radius=2pt];
\filldraw [thick, fill=blue!99!white,opacity=0.99](-2.9,0.5) circle [radius=2pt];
\filldraw [thick, fill=blue!99!white,opacity=0.99](-2.9,3) circle [radius=2pt];

\filldraw [thick, fill=blue!99!white,opacity=0.99](0.8,4.5) circle [radius=2pt];
\filldraw [thick, fill=blue!99!white,opacity=0.99](1.3,3.2) circle [radius=2pt];
\filldraw [thick, fill=blue!99!white,opacity=0.99](1.1,3.4) circle [radius=2pt];
\filldraw [thick, fill=blue!99!white,opacity=0.99](1.16,2.37) circle [radius=2pt];
\filldraw [thick, fill=blue!99!white,opacity=0.99](0.8,1.4) circle [radius=2pt];
\filldraw [thick, fill=blue!99!white,opacity=0.99](0.99,0.6) circle [radius=2pt];

\filldraw [thick, fill=blue!99!white,opacity=0.99](3.36,3.37) circle [radius=2pt];
\filldraw [thick, fill=blue!99!white,opacity=0.99](3.8,4) circle [radius=2pt];
\filldraw [thick, fill=blue!99!white,opacity=0.99](3.4,3.) circle [radius=2pt];
\filldraw [thick, fill=blue!99!white,opacity=0.99](3.4,2) circle [radius=2pt];
\filldraw [thick, fill=blue!99!white,opacity=0.99](3.35,0.3) circle [radius=2pt];
\filldraw [thick, fill=blue!99!white,opacity=0.99](3.7,0.6) circle [radius=2pt];

\draw [thick][decorate,decoration={brace,amplitude=10pt,mirror,raise=4pt},yshift=0pt]
(3.81,0.5) -- (3.81,4.1) node [black,midway,xshift=0.99cm] {\footnotesize $tr+1$};

\end{tikzpicture}
\caption{The $b(tr+1)$ points in $\mathbb{P}^{1}\times \mathbb{P}^{m}$.}
\label{fig:generalcode}
\end{center}
\end{figure}
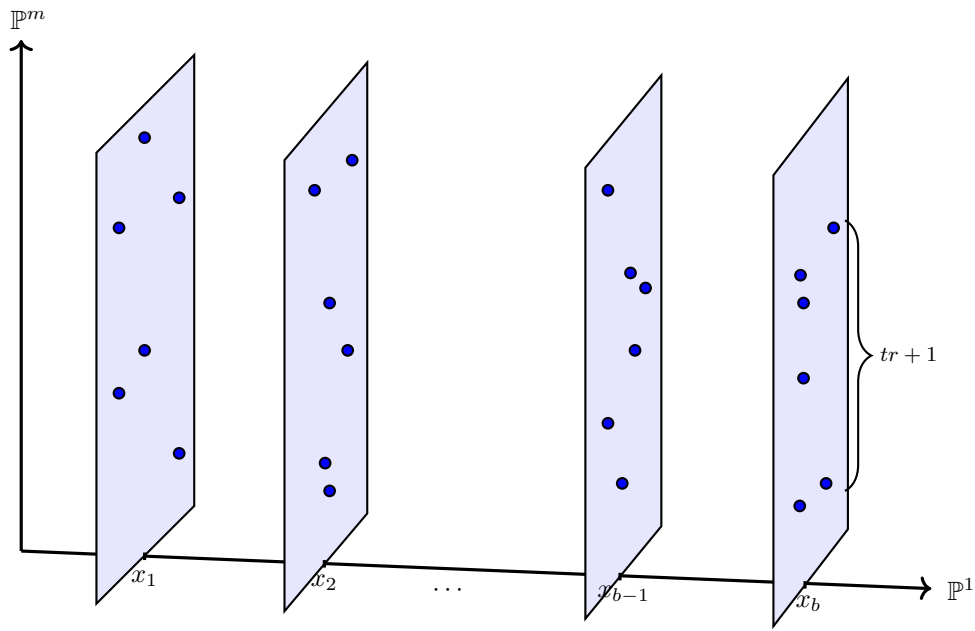

We fix positive integers $m,b,t, \alpha,\beta$ with $b \geq \alpha + 1$ and $q$ a prime power, and define 
\[
r\coloneqq \binom{\beta + m}{m} = \binom{\beta + m}{\beta}.
\]
Let $X \coloneqq \PP_x^1 \times \PP_{\mathbf{y}}^m$ and denote by $\pi_1\colon X\to \PP^1$ and $\pi_2\colon X\to \PP^m$ the projections onto each factor. Consider the vector space $V \coloneqq \Gamma(X,\OO_X(\alpha,\beta))$, which can be identified as the vector space of bi-homogeneous polynomials of bi-degree $(\alpha,\beta)$ with coefficients in $\mathbb{F}_q$. Note that
\[
\dim_{\mathbb{F}_q} V = (\alpha+1) \binom{\beta + m}{m} = (\alpha+1) \binom{\beta + m}{\beta} = (\alpha + 1)r,
\]
Pick $b$ points $x_1,\dots,x_b$ in $\PP^1$, and for each $1\leq i\leq b$ pick $t r + 1$ distinct points in the fiber $\pi_1^{-1}(x_i)$. For $1\leq i\leq b$ we denote by $A_i$ the batch of points picked in the fiber $\pi_1^{-1}(x_i)$. Furthermore, we will assume that the points of each fiber are in \emph{general position} inside $\PP^m$. This means that if $g(\mathbf{y})$ is a homogeneous polynomial of degree $\beta$ in $m+1$ variables, and $\mathbf{z}_1,\dots,\mathbf{z}_{r}\in A_i$ for some $i\in\{1,\dots,b\}$, then there exists at least one $\ell\in\{1,\dots,r\}$ such that $g(\mathbf{z}_{\ell}) \neq 0$. The disjoint union of these batches gives the set
\begin{equation*}
    \mathcal{P} = \coprod_{i=1}^{b}{A_i} = \{ (x_i,\mathbf{y}_{i,j}) \}^{1\leq j\leq tr+1}_{1\leq i\leq b}
\end{equation*}
whose points can be labeled as $P_1,\dots P_n$ because its cardinality is
\begin{equation*}
    n \coloneqq \#\mathcal{P} = b(tr+1).
\end{equation*}
Figure~\ref{fig:generalcode} illustrates this set-up geometrically. 

\begin{remark}
The condition of having all the points within each fiber in general position is very mild, it essentially says that we are using all the space available to us within the variety $\PP^m$.
\end{remark}

Let $\calC_m$ be the image of the evaluation map:
\begin{equation*}
\begin{tikzcd}[row sep=3em]
\ev_\mathcal{P} \colon &[-3.5em] V \arrow{r} & \FF_q^n,\\[-3.1em]
{} & f(x,y) \arrow[maps to]{r} & (f(P_1),\dots,f(P_n)).
\end{tikzcd}
\end{equation*}
The code $\mathcal{C}_m$ has code length $n$. We will show that $\mathcal{C}_m$ has dimension
\begin{equation*}
    k = (\alpha + 1) r,
\end{equation*}
and minimum distance
\[
    d \geq (b-\alpha) \left( (t-1) r + 2\right).
\]
In addition, we will show that $\mathcal{C}_m$ is locally recoverable, with locality $r$ and availability $t$. Let $\mathscr{C} = \{\calC_m\}_{m=1}^\infty$ be the family of codes parametrized by $m$; in this family, all codes have the same parameters $b$, $t$, $\alpha$, $\beta$, and $q$. Once the above claims about the code parameters of $\calC_m$ are established, it is straightforward to show that these codes are asymptotically good.

\begin{theorem}\label{theo:asymptotically-good-codes}
The family of codes $\mathscr{C}$ is asymptotically good.
\end{theorem}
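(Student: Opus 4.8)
The plan is to feed the parameters of $\calC_m$ into the definition of asymptotically good codes from Section~\ref{ss:linear-lrcs} and pass to the limit $m \to \infty$; no idea beyond the parameter computations of this section (recorded in Theorem~\ref{theo:higher-dimensional-summary}) is required.

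Write $r_m \coloneqq \binom{\beta+m}{\beta}$, so that, as established for $\calC_m$ above, the length is $n_m = b(t r_m + 1)$, the dimension is $k_m = (\alpha+1) r_m$, and the minimum distance satisfies $d_m \geq (b-\alpha)\big((t-1)r_m + 2\big)$. The one analytic ingredient is that $r_m \to \infty$ as $m \to \infty$: indeed
\[
\binom{\beta+m}{\beta} = \frac{1}{\beta!}\,(m+1)(m+2)\cdots(m+\beta)
\]
is a polynomial in $m$ of degree $\beta \geq 1$, so $r_m \to \infty$ and $1/r_m \to 0$.

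Next I would divide numerator and denominator of each ratio by $r_m$ and take limits. For the information rate,
\[
\frac{k_m}{n_m} = \frac{(\alpha+1) r_m}{b(t r_m + 1)} = \frac{\alpha+1}{bt + b/r_m} \;\xrightarrow[m\to\infty]{}\; \frac{\alpha+1}{bt} > 0,
\]
a positive rational since $\alpha, b, t \geq 1$. For the relative distance, the lower bound on $d_m$ gives
\[
\frac{d_m}{n_m} \;\geq\; \frac{(b-\alpha)\big((t-1)r_m + 2\big)}{b(t r_m + 1)} = \frac{(b-\alpha)(t-1) + 2(b-\alpha)/r_m}{bt + b/r_m} \;\xrightarrow[m\to\infty]{}\; \frac{(b-\alpha)(t-1)}{bt},
\]
and since $b \geq \alpha+1$ we have $b - \alpha \geq 1$, so this limit is strictly positive whenever $t \geq 2$. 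Hence $\liminf_{m\to\infty} d_m/n_m > 0$ and $\lim_{m\to\infty} k_m/n_m > 0$, which is exactly the assertion that $\mathscr{C}$ is asymptotically good.

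There is essentially no obstacle to overcome here: the content has all been quarantined into establishing the three parameter formulas for $\calC_m$ (the dimension count and, especially, the minimum-distance lower bound), and the present theorem is the short corollary promised in the paragraph preceding it. The only point worth flagging is that the stated lower bound for $d_m$ grows with $m$ precisely when $t \geq 2$; for $t = 1$ it collapses to the constant $2(b-\alpha)$, so the asymptotically-good conclusion is informative in the availability regime $t \geq 2$, where both $k_m/n_m$ and $d_m/n_m$ stay bounded away from $0$ uniformly in $m$.
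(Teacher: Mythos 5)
Your proposal is correct and follows essentially the same route as the paper: substitute the parameter formulas $n_m = b(tr_m+1)$, $k_m = (\alpha+1)r_m$, $d_m \geq (b-\alpha)((t-1)r_m+2)$ and let $m \to \infty$, obtaining the limits $(\alpha+1)/(bt)$ and $(b-\alpha)(t-1)/(bt)$. Your closing caveat is apt and in fact slightly more careful than the paper's own proof, which asserts $(b-\alpha)(t-1)/(bt) > 0$ without restriction even though this positivity (and hence the asymptotically-good conclusion via this bound) requires $t \geq 2$; your use of $\liminf$ for the distance ratio is also the cleaner formulation, since only a lower bound on $d_m$ is available.
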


\begin{proof}
    Note that with $b$, $t$, $\alpha$, and $\beta$ fixed, then
    \[
    \lim_{n \to \infty}\frac{d(\calC_m)}{n(\calC_m)} = \lim_{m \to \infty}\frac{d(\calC_m)}{n(\calC_m)} = \lim_{m\to \infty}\frac{(b-\alpha) \left( (t-1) r + 2\right)}{b \left(t r + 1 \right)} = \frac{(b-\alpha) (t-1)}{b t} > 0,
    \]
    and
    \[
    \lim_{n \to \infty}\frac{k(\calC_m)}{n(\calC_m)} 
    = \lim_{m \to \infty}\frac{k(\calC_m)}{n(\calC_m)} 
    = \lim_{n\to \infty}\frac{(\alpha + 1) r}{b \left(t r + 1 \right)} 
    = \frac{\alpha + 1}{b t} > 0. 
    \eqno\qed
    \]
\hideqed
\end{proof}

To show that the evaluation map $\ev_\calP \colon V \to \FF_q^n$ is injective, we begin with two auxiliary results. In plain terms, Lemma~\ref{lemma:fiber-vanishing} says that if a polynomial is zero on all the points of a fiber of our code then the polynomial restricted to that fiber is identically zero, and Lemma~\ref{lem:alpha-fiber-vanishing} says that a polynomial cannot be zero in more than $\alpha$ fibers; compare this result with Lemma~\ref{lem:max_number_zeros}.

\begin{lemma}\label{lemma:fiber-vanishing}
    Let $f(x,\mathbf{y}) \in V$ and fix $i\in \{1,\dots,b\}$. If $f(x_i,\mathbf{y}_{i,j}) = 0$ for all $\mathbf{y}_{i,j} \in A_i$, then $f(x_i,\mathbf{y})$ is the zero polynomial in $\mathbb{F}_q[y_0,\dots,y_m]$.
\end{lemma}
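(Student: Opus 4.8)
The plan is to reduce the statement to the general position hypothesis imposed on the batch $A_i$, exactly as Observation~\ref{obs:too-many-zeros} reduced the one-dimensional situation to the fact that a univariate polynomial of degree $\leq r-1$ cannot have $r$ distinct roots without vanishing. First I would pin down the meaning of $f(x_i,\mathbf{y})$: since $f \in V = \Gamma(X,\OO_X(\alpha,\beta))$ is bi-homogeneous of bi-degree $(\alpha,\beta)$, substituting a chosen nonzero representative of the point $x_i \in \PP^1$ into the $x$-variables produces a homogeneous polynomial $g(\mathbf{y}) \coloneqq f(x_i,\mathbf{y})$ of degree $\beta$ in the variables $y_0,\dots,y_m$; a different choice of representative only multiplies $g$ by a nonzero scalar, so the assertion ``$g$ is the zero polynomial'' is well defined, and it is what we must prove.

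Next I would exploit the size of the batch. By construction $A_i$ consists of $tr+1$ distinct points in the fiber $\pi_1^{-1}(x_i)$, and since $t\geq 1$ we have $\#A_i = tr+1 > r$; in particular $A_i$ contains a subset $\{\mathbf{z}_1,\dots,\mathbf{z}_r\}$ of $r$ distinct points. The hypothesis $f(x_i,\mathbf{y}_{i,j}) = 0$ for all $\mathbf{y}_{i,j}\in A_i$ says precisely that $g$ vanishes at every point of $A_i$, hence in particular $g(\mathbf{z}_1) = \cdots = g(\mathbf{z}_r) = 0$.

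Finally I would invoke the general position assumption on $A_i$, which states that no nonzero homogeneous polynomial of degree $\beta$ in $m+1$ variables vanishes at $r$ of the points of $A_i$. Since $g$ is a homogeneous degree-$\beta$ form that vanishes at the $r$ distinct points $\mathbf{z}_1,\dots,\mathbf{z}_r \in A_i$, it follows that $g = f(x_i,\mathbf{y}) = 0$, as claimed. I do not expect a genuine obstacle here: the only things requiring care are the bookkeeping around what $f(x_i,\mathbf{y})$ denotes (the projective scalar ambiguity) and the trivial inequality $\#A_i = tr+1 \geq r$ that makes the general position hypothesis applicable. All of the mathematical content sits in the definition of general position; this lemma merely records that vanishing on an entire fiber batch is far more than enough to trigger it, playing the role that Observation~\ref{obs:too-many-zeros} played in Section~\ref{sec:line-baseline-codes}.
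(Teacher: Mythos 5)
Your proposal is correct and follows essentially the same argument as the paper: since $\#A_i = tr+1 > r$, the degree-$\beta$ form $f(x_i,\mathbf{y})$ vanishes at more points of $A_i$ than the general position condition allows for a nonzero form, so it must be identically zero. The extra remarks about the scalar ambiguity in choosing a representative of $x_i$ are fine but not needed beyond bookkeeping.
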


\begin{proof}
    Let $f(x,\mathbf{y}) = \sum_{\#I = \beta}{a_{I}(x)\mathbf{y}^I}$ be a polynomial in $V$. Since $f(x_i,\mathbf{y}_{i,j}) = 0$ for all $\mathbf{y}_{i,j} \in A_i$, the homogeneous polynomial $f(x_i,\mathbf{y}) \in \mathbb{F}_q[y_0,\dots,y_m]$ is of degree $\beta$ and has at least $\#A_i = tr + 1$ zeros in general position. Since this is more than $r - 1$ zeros, the polynomial $f(x_i,\mathbf{y})$ is identically zero, by definition of general position.
\end{proof}

\begin{lemma}[Generalized Fiber Vanishing Lemma] \label{lem:alpha-fiber-vanishing}
    Let $f(x,\mathbf{y}) \in V$ be a nonzero polynomial. Then $f(x_i,\mathbf{y}_{i,j}) = 0$ for all $\mathbf{y}_{i,j} \in A_i$ for $\leq \alpha$ values amongst $x_1,\dots,x_{b}$.
\end{lemma}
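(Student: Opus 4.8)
The plan is to adapt the argument used for the Fiber Vanishing Lemma (Lemma~\ref{lem:max_number_zeros}) in the base case $m = 1$, with one essential change: that proof invoked injectivity of the evaluation map (Lemma~\ref{lem:injectivity}), but injectivity is \emph{not} yet available here --- indeed Lemma~\ref{lem:alpha-fiber-vanishing} is one of the ingredients used to prove it. So instead of passing through codewords, I would work directly with the coefficients of $f$ viewed as a bi-homogeneous form, together with Lemma~\ref{lemma:fiber-vanishing}.

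Concretely, write $f(x,\mathbf{y}) = \sum_{\#I = \beta} a_I(\mathbf{x})\,\mathbf{y}^I$, where $\mathbf{x} = (x_0,x_1)$ are the homogeneous coordinates on $\PP^1$ and each $a_I(\mathbf{x})$ is a homogeneous polynomial of degree $\alpha$ in two variables. Since $f \not\equiv 0$, there is a multi-index $I_0$ with $a_{I_0} \not\equiv 0$. Now argue by contradiction: suppose $f(x_i,\mathbf{y}_{i,j}) = 0$ for all $\mathbf{y}_{i,j} \in A_i$ for $\alpha + 1$ of the base points, which after relabeling we may take to be the distinct points $x_1,\dots,x_{\alpha+1}$. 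Applying Lemma~\ref{lemma:fiber-vanishing} at each of these shows that $f(x_i,\mathbf{y})$ is the zero polynomial in $\FF_q[y_0,\dots,y_m]$ for $1 \leq i \leq \alpha+1$, hence every coefficient vanishes: $a_I(x_i) = 0$ for all $I$ and all $1 \leq i \leq \alpha+1$. In particular $a_{I_0}$ vanishes at the $\alpha+1$ distinct points $x_1,\dots,x_{\alpha+1}$ of $\PP^1$. But a nonzero homogeneous polynomial of degree $\alpha$ in two variables factors into $\alpha$ linear forms over $\overline{\FF}_q$ and therefore has at most $\alpha$ distinct zeros in $\PP^1$; this contradicts $a_{I_0} \not\equiv 0$. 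Hence $f$ can vanish on the full fiber above at most $\alpha$ of the base points.

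There is no serious obstacle here --- the statement is the projective-bundle analogue of "a degree-$\alpha$ polynomial has at most $\alpha$ roots" --- but two points deserve care. First, one must make sure the argument genuinely avoids injectivity of $\ev_\calP$, which is precisely why it is phrased in terms of the coefficient forms $a_I$ rather than codewords. Second, one should note that the expression $a_I(x_i) = 0$ is well defined for $x_i \in \PP^1$ because $a_I$ is homogeneous, so that no choice of representative is needed. General position enters only indirectly, through the appeal to Lemma~\ref{lemma:fiber-vanishing}.
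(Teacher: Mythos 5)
Your proof is correct, and after the shared first step it finishes by a different route than the paper. Both arguments proceed by contradiction, relabel so that the $\alpha+1$ alleged zero-fibers lie over $x_1,\dots,x_{\alpha+1}$, and invoke Lemma~\ref{lemma:fiber-vanishing} to get $f(x_i,\mathbf{y})\equiv 0$ in $\FF_q[y_0,\dots,y_m]$ for each such $i$. From there the paper expands $f$ in the other direction, $f=\sum_{\ell=0}^{\alpha}F_\ell(\mathbf{y})x^\ell$, observes (as a higher-dimensional analogue of Observation~\ref{obs:too-many-zeros}) that $f(x_i,\mathbf{y}_{i',j})=0$ for all $i,i'\leq \alpha+1$, inverts a Vandermonde matrix in $x_1,\dots,x_{\alpha+1}$ to conclude $F_\ell(\mathbf{y}_{i,j})=0$ at all points of the first $\alpha+1$ batches, and then uses general position a second time to force $F_\ell\equiv 0$ and hence $f\equiv 0$. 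You instead expand $f=\sum_{\#I=\beta}a_I(\mathbf{x})\mathbf{y}^I$, read off $a_I(x_i)=0$ for all $I$ directly from $f(x_i,\mathbf{y})\equiv 0$, and conclude by the fact that a nonzero binary form of degree $\alpha$ has at most $\alpha$ zeros on $\PP^1$. Your version is shorter, avoids the Vandermonde computation and the second appeal to general position, works with homogeneous coordinates so the point at infinity on $\PP^1$ causes no trouble, and, as you note, correctly sidesteps injectivity of $\ev_\calP$ (which is proved afterwards, in Corollary~\ref{cor:general-injectivity}, using this lemma); the paper's version has the mild structural advantage of reusing the Vandermonde mechanism that also drives its locality arguments, but mathematically the two proofs establish the same statement and your shortcut is sound.
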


\begin{proof}
    Let $f(x,\mathbf{y}) = \sum_{\ell = 0}^{\alpha} F_{\ell}(\mathbf{y}) x^{\ell}$ be a nonzero polynomial in $V$. We will prove the result by contrapositive. Assume that $f(x,\mathbf{y})$ vanishes in $\geq \alpha+1$ fibers. Without loss of generality, assume that $\alpha+1$ of those fibers are $x_1,\dots,x_{\alpha+1}$. Then $f(x_i,\mathbf{y}) \equiv 0$ in $V$ for all $1\leq i\leq \alpha+1$ by Lemma~\ref{lemma:fiber-vanishing}. Consequently $f(x_i,\mathbf{y}_{i',j}) = 0$ for all $i,i' \in \{1,\dots,\alpha + 1\}$ and all $\mathbf{y}_{i',j}\in A_{i'}$; this is a higher-dimensional analogue of Observation~\ref{obs:too-many-zeros}. Thus the following equality holds for all $1\leq i\leq \alpha + 1$ and all $\mathbf{y}_{i,j_1},\dots, \mathbf{y}_{i,j_{\alpha + 1}} \in A_i$.
    \begin{equation*}
    \label{eq:vandermonde-fiber-vanishing}
    \begin{bmatrix}
        1 & x_1 & x_1^2 & \dots & x_1^{\alpha}\\
        1 & x_2 & x_2^2 & \dots & x_2^{\alpha}\\
        \vdots & \vdots & \vdots & \ddots & \vdots\\
        1 & x_{\alpha+1} & x_{\alpha+1}^2 & \dots & x_{\alpha+1}^{\alpha}
    \end{bmatrix}\begin{bmatrix}
        F_0(\mathbf{y}_{i,j_1})\\
        F_1(\mathbf{y}_{i,j_2})\\
        \vdots\\
        F_{\alpha}(\mathbf{y}_{i,j_{\alpha + 1}})
    \end{bmatrix} = \begin{bmatrix}
        0\\
        0\\
        \vdots\\
        0
    \end{bmatrix}
    \end{equation*}
Since the leftmost matrix is invertible, we must have
\begin{equation*}
    \begin{bmatrix}
        F_0(\mathbf{y}_{i,j_1})\\
        F_1(\mathbf{y}_{i,j_2})\\
        \vdots\\
        F_{\alpha}(\mathbf{y}_{i,j_{\alpha + 1}})
    \end{bmatrix} = \begin{bmatrix}
        0\\
        0\\
        \vdots\\
        0
    \end{bmatrix}.
\end{equation*}
Thus, for all $0\leq \ell\leq \alpha$, all $1\leq i\leq \alpha + 1$, and all $\mathbf{y}_{i,j}\in A_i$ 
we have $F_{\ell}(\mathbf{y}_{i,j}) = 0$. Since $F_{\ell}(\mathbf{y})$ is a homogeneous polynomial of 
degree $\beta$ in $m+1$ variables vanishing at $\#A_1+\cdots+\#A_{\alpha+1} = (\alpha+1)(tr+1)$ points in general 
position, we must have $F_{\ell}(\mathbf{y}) \equiv 0$. This implies that 
$f(x,\mathbf{y}) \equiv 0$ in $V$, a contradiction.
\end{proof}

\begin{corollary}\label{cor:general-injectivity}
The map $\ev_\calP \colon V \to \FF_q^n$ is injective. In particular, the code $\calC_m$ has dimension $k = (\alpha + 1) r$.
\end{corollary}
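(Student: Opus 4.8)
The plan is to read off injectivity immediately from the Generalized Fiber Vanishing Lemma (Lemma~\ref{lem:alpha-fiber-vanishing}), which has already absorbed all the real work. First I would take an arbitrary element $f(x,\mathbf{y}) \in \ker \ev_\calP$. By definition of the evaluation map this means $f(P_j) = 0$ for every $j \in \{1,\dots,n\}$; equivalently, $f$ vanishes at every point of every batch $A_1,\dots,A_b$, i.e.\ $f(x_i,\mathbf{y}_{i,j}) = 0$ for all $1 \leq i \leq b$ and all $\mathbf{y}_{i,j} \in A_i$.

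Next I would invoke the standing hypothesis $b \geq \alpha + 1$. Since $f$ vanishes on all of $A_i$ for each of the $b \geq \alpha+1$ values $x_1,\dots,x_b$, the polynomial $f$ is zero on the sampled points of at least $\alpha+1$ fibers. If $f$ were nonzero, this would directly contradict Lemma~\ref{lem:alpha-fiber-vanishing}, which asserts that a nonzero $f \in V$ can have $f(x_i,\mathbf{y}_{i,j}) = 0$ for all $\mathbf{y}_{i,j} \in A_i$ for at most $\alpha$ of the values $x_1,\dots,x_b$. (If one prefers, one can first apply Lemma~\ref{lemma:fiber-vanishing} to conclude that $f$ restricts to the zero polynomial on each of these $\geq \alpha+1$ fibers before citing Lemma~\ref{lem:alpha-fiber-vanishing}, but this is not strictly needed for the contradiction.) Hence $f \equiv 0$ in $V$, and $\ev_\calP$ is injective.

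Finally, injectivity of $\ev_\calP$ gives $k = \dim_{\FF_q}\calC_m = \dim_{\FF_q} V$, and $\dim_{\FF_q} V = \dim_{\FF_q}\Gamma(X,\OO_X(\alpha,\beta)) = (\alpha+1)\binom{\beta+m}{\beta} = (\alpha+1)r$ was recorded in the set-up, which yields the claimed formula $k = (\alpha+1)r$.

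I expect no genuine obstacle here: the corollary is a two-line consequence of the two lemmas immediately preceding it, exactly mirroring how Lemma~\ref{lem:injectivity} (the $m=1$ case) follows from the one-variable Vandermonde argument. The only points requiring mild care are bookkeeping ones — confirming that the general-position hypothesis is precisely what powers the ``$tr+1 > r-1$ zeros forces vanishing'' step inside Lemma~\ref{lemma:fiber-vanishing}, and that the inequality $b \geq \alpha+1$ is used exactly once, namely to overshoot the ``$\leq \alpha$ fibers'' bound.
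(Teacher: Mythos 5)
Your proposal is correct and follows exactly the paper's argument: an element of $\ker \ev_\calP$ vanishes on all $b \geq \alpha+1$ batches, so the Generalized Fiber Vanishing Lemma (Lemma~\ref{lem:alpha-fiber-vanishing}) forces $f \equiv 0$, and injectivity together with $\dim_{\FF_q} V = (\alpha+1)r$ gives the dimension. The only difference is cosmetic: you make explicit the use of $b \geq \alpha+1$ and the dimension bookkeeping, which the paper leaves implicit.
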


\begin{proof}
Let $f(x,\mathbf{y}) \in \ker(\ev_\mathcal{P})$. Then $f(x_i,\mathbf{y}_{i,j}) = 0$ for all $1\leq i\leq b$ and all $\mathbf{y}_{i,j}\in A_i$. Thus $f(x,\mathbf{y}) \equiv 0$ in $V$ by Lemma~\ref{lem:alpha-fiber-vanishing}.
\end{proof}

\begin{lemma}
\label{lem:general-injectivity}
The code $\mathcal{C}_m$ has locality $r$ and availability $t$.
\end{lemma}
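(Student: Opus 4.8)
\textbf{Proof proposal for Lemma~\ref{lem:general-injectivity}.}
The plan is to run the natural generalization of the argument proving Lemma~\ref{lem:locality}, replacing the space of univariate polynomials of degree $\le r-1$ by the space of degree-$\beta$ forms in $m+1$ variables, and exploiting the fact that each batch now contains $tr+1$ points rather than $r+1$. Fix a code word $\mathbf{c} = \ev_\calP(f(x,\mathbf{y}))$ with $f\in V$ and suppose the symbol attached to a point $P = (x_i,\mathbf{y}_{i,j_0})\in A_i$ is erased. The restriction $f(x_i,\mathbf{y})$ is a homogeneous polynomial of degree $\beta$ in $y_0,\dots,y_m$, hence an element of the $\FF_q$-vector space $W$ of such forms, and $\dim_{\FF_q} W = \binom{\beta+m}{m} = r$. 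The goal is to recover the element $f(x_i,\mathbf{y})\in W$ (and thus its value at $\mathbf{y}_{i,j_0}$) from the values of $f$ at any $r$ of the remaining points of $A_i$.

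The first step records the linear-algebra content of the general position hypothesis: for any $r$ distinct points $\mathbf{z}_1,\dots,\mathbf{z}_r$ of $A_i$, the evaluation map $W \to \FF_q^{\,r}$, $g \mapsto (g(\mathbf{z}_1),\dots,g(\mathbf{z}_r))$ — taken with respect to the fixed affine representatives of the $\mathbf{z}_k$ supplied by the chart conventions used throughout the paper, so that this map is genuinely $\FF_q$-linear — has trivial kernel, since a nonzero $g$ in the kernel would vanish at all of $\mathbf{z}_1,\dots,\mathbf{z}_r$, contradicting general position. As $\dim_{\FF_q} W = r$, this evaluation map is therefore an isomorphism. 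Consequently $f(x_i,\mathbf{y})$, and in particular the erased value $f(P) = f(x_i,\mathbf{y}_{i,j_0})$, is an $\FF_q$-linear function of the $r$ symbols $f(x_i,\mathbf{z}_1),\dots,f(x_i,\mathbf{z}_r)$.

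The second step produces the $t$ recovery sets. Since $\#A_i = tr+1$, deleting $P$ leaves $tr$ points of $A_i$, which we partition arbitrarily into $t$ pairwise disjoint subsets $R_{i,1},\dots,R_{i,t}$, each of cardinality exactly $r$. By the previous paragraph, each $R_{i,\ell}$ is a recovery set for the symbol at $P$: its $r$ points are in general position, so $f(P)$ is a function of $\{f(Q)\}_{Q\in R_{i,\ell}}$. As the $R_{i,\ell}$ are disjoint, do not contain $P$, and each has size $r$, this exhibits locality $r$ together with availability $t$, as claimed.

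I do not expect a genuine obstacle here; the only point deserving a word of care is that the value of a homogeneous polynomial at a point of $\PP^m$ depends on the chosen homogeneous representative, which is precisely why the argument is phrased in affine charts, exactly as in Section~\ref{ss:set-up}: under those conventions the evaluation maps above are $\FF_q$-linear and the general position hypothesis translates verbatim into the invertibility needed. No input beyond general position and the equality $\dim_{\FF_q} W = r$ is required.
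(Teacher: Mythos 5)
Your proposal is correct and follows essentially the same route as the paper: restrict $f$ to the fiber, use general position to see that the values at any $r$ points of the batch determine the degree-$\beta$ form (the paper phrases this as an overdetermined consistent linear system in the $\binom{\beta+m}{\beta}$ coefficients, you as an isomorphism of the evaluation map on $W$), and then partition the remaining $tr$ points of $A_i$ into $t$ disjoint recovery sets of size $r$. Your extra remark on fixing affine representatives so evaluation is well-defined and linear is a harmless clarification of the paper's chart conventions, not a different argument.
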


\begin{proof}
Let $\mathbf{c} \in \mathcal{C}_m$ be a code word. Let $f(x,\mathbf{y}) = \sum_{\ell = 0}^{\alpha} F_{\ell}(\mathbf{y}) x^{\ell}$ be the polynomial in $V$ such that $\mathbf{c} = (f(P))_{ P \in \calP}$. Suppose that $\mathbf{c}$ is missing a symbol $c$; we may assume without loss of generality that $c = c_1$ is the evaluation at the point $(x_{1},\mathbf{y}_{1,1}) \in A_{1}$. Since $f(x_{1},\mathbf{y}) = \sum_{\ell = 0}^{\alpha} F_{\ell}(\mathbf{y}) x_{1}^{\ell}$ is a homogeneous polynomial of degree $\beta$ in $m+1$ variables, viewing its $\binom{\beta + m}{\beta}$ coefficients as unknowns we can set up the following overdetermined consistent system of equations
\begin{equation*}
\begin{cases}
\sum_{\ell = 0}^{\alpha} F_{\ell}(\mathbf{y}_{1,2}) x_{1}^{\ell} = f(x_1,\mathbf{y}_{1,2}) = c_2\\
\sum_{\ell = 0}^{\alpha} F_{\ell}(\mathbf{y}_{1,3}) x_{1}^{\ell} = f(x_1,\mathbf{y}_{1,3}) = c_3\\
\vdots\\
\sum_{\ell = 0}^{\alpha} F_{\ell}(\mathbf{y}_{1,tr}) x_{1}^{\ell} = f(x_1,\mathbf{y}_{1,tr}) = c_{tr}\\
\sum_{\ell = 0}^{\alpha} F_{\ell}(\mathbf{y}_{1,tr+1}) x_{1}^{\ell}  = f(x_1,\mathbf{y}_{1,tr+1}) = c_{tr+1}
\end{cases}
\end{equation*}
where $c_{j}$ for $j\in\{2,\dots,tr+1\}$ is the known value of $f(x,\mathbf{y})$ evaluated at $(x_{1},\mathbf{y}_{i,j}) \in A_1$. Since the points in $A_{1}$ are in general position, any choice of $r$ of the above equations suffices to solve the system, unequivocally determining the polynomial $f(x_{1},\mathbf{y})$. The missing symbol can then be recovered by evaluating $c_1 = f(x_1,\mathbf{y}_{1,1})$. Finally, note that there are $t$ disjoint sets of $r$ equations determining the polynomial $f(x_{1},\mathbf{y})$.
\end{proof}

\begin{proposition}
\label{prop:bound-minimal-distance}
    The minimum distance $d$ of the code $\mathcal{C}_m$ satisfies
    \begin{equation*}
        d \geq (b-\alpha) \left( (t-1) r + 2\right).
    \end{equation*}
\end{proposition}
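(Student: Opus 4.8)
The plan is to bound, for an arbitrary nonzero $f(x,\mathbf{y}) \in V$, the number of points of $\calP$ at which $f$ vanishes; subtracting this count from $n = b(tr+1)$ yields a lower bound on the Hamming weight of the associated code word $\mathbf{c} = \ev_\calP(f)$, and since $f$ is arbitrary this bounds $d$ from below. First I would let $s$ denote the number of \emph{zero-fibers} of $f$, that is, the number of indices $i \in \{1,\dots,b\}$ for which $f(x_i,\mathbf{y}) \equiv 0$ in $\FF_q[y_0,\dots,y_m]$; by Lemma~\ref{lemma:fiber-vanishing} this coincides with the number of $i$ for which $f$ vanishes at every point of $A_i$, and by the Generalized Fiber Vanishing Lemma (Lemma~\ref{lem:alpha-fiber-vanishing}) we have $s \leq \alpha$.

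Next I would estimate the contribution of each fiber to the zero count. After reindexing we may assume the zero-fibers lie above $x_1,\dots,x_s$; these contribute exactly $s(tr+1)$ points. For each remaining index $i \in \{s+1,\dots,b\}$, the restriction $f(x_i,\mathbf{y}) = \sum_{\ell=0}^{\alpha} F_\ell(\mathbf{y}) x_i^\ell$ is a nonzero homogeneous polynomial of degree $\beta$ in the $m+1$ variables $\mathbf{y}$. The general position hypothesis on the batch $A_i$ says precisely that no nonzero degree-$\beta$ form vanishes on $r$ of its points, hence $f(x_i,\mathbf{y})$ vanishes at most $r-1$ times on $A_i$. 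Summing the two contributions, $f$ vanishes on at most $s(tr+1) + (b-s)(r-1)$ points of $\calP$, so the weight of $\mathbf{c}$ is at least
\[
b(tr+1) - s(tr+1) - (b-s)(r-1) = (b-s)\bigl((tr+1)-(r-1)\bigr) = (b-s)\bigl((t-1)r+2\bigr).
\]
Since $b \geq \alpha+1$ and $s \leq \alpha$ we have $b-s \geq b-\alpha > 0$, and as $(t-1)r+2 > 0$ this gives weight $\geq (b-\alpha)\bigl((t-1)r+2\bigr)$, which is the claimed bound on $d$.

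I expect no serious obstacle: this is a direct counting argument once Lemmas~\ref{lemma:fiber-vanishing} and~\ref{lem:alpha-fiber-vanishing} are available, and it is the higher-dimensional analogue of the zero count in Lemma~\ref{lem:maximum-zeros-poly}(1) for $m=1$. The only points that need a little care are reading the general position condition correctly — it forces a nonzero degree-$\beta$ form to be nonzero at one of every $r$ points of a batch, equivalently to vanish on at most $r-1$ points of that batch — and confirming that $f(x_i,\mathbf{y})$ really is a nonzero degree-$\beta$ form for $i$ outside the zero-fibers, which is exactly the content of "$i$ is not a zero-fiber" combined with Lemma~\ref{lemma:fiber-vanishing}.
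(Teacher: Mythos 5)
Your proof is correct and follows essentially the same route as the paper's: bound the number of zero-fibers by $\alpha$ via the Generalized Fiber Vanishing Lemma, use the general position hypothesis to limit a nonzero restriction $f(x_i,\mathbf{y})$ to at most $r-1$ zeros in each remaining batch, and conclude that every nonzero code word has weight at least $(b-s)\bigl((t-1)r+2\bigr) \geq (b-\alpha)\bigl((t-1)r+2\bigr)$. The only cosmetic difference is that you count zeros and subtract from $n$, whereas the paper counts nonzero entries per non-zero-fiber directly; the arithmetic is identical.
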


\begin{proof}
Let $f(x,\mathbf{y})$ be a nonzero polynomial in $V$. Then $f(x_i,\mathbf{y}_{i,j}) = 0$ for all $\mathbf{y}_{i,j}\in A_{i}$ for $s \leq \alpha$ values of $x_1,\dots,x_b$, by Lemma~\ref{lem:alpha-fiber-vanishing}. If $i\in \{1,\dots,b\}$ is such that $f(x_i,\mathbf{y}_{i,j}) \neq 0$ for at least one $\mathbf{y}_{i,j}\in A_{i}$, then there are $\leq r-1$ points $\mathbf{y}_{i,j_{1}},\dots,\mathbf{y}_{i,j_{r-1}}\in A_{i}$ such that $f(x_i,\mathbf{y}_{i,j_{1}}) = \cdots = f(x_i,\mathbf{y}_{i,j_{r-1}}) = 0$ because the points in $A_i$ are in general position. 
Consequently the number of zeros of $f(x,\mathbf{y})$ along each of the $b - s$ non zero-fibers is $\geq (tr + 1) - (r-1)$, and thus the weight $\omega$ of the code word $\mathbf{c} \coloneqq \ev_{\calP}(f(x,\mathbf{y}))$ is bounded by
\[
\omega \geq (b-s)\left( (t r + 1) - \left(r - 1\right) \right) = (b-s) \left( (t-1) r + 2\right).
\]
As a function of $s$, the right hand side of the above equality is minimized when $s$ is maximized. Hence
\begin{equation*}
d \geq (b-\alpha) \left( (t-1) r + 2\right).
\eqno\qed
\end{equation*}
\hideqed
\end{proof}

We now have all the ingredients to prove Theorem~\ref{theo:higher-dimensional-summary}.

\begin{proof}[Proof of Theorem~\ref{theo:higher-dimensional-summary}]
    This is a consequence of Corollary~\ref{cor:general-injectivity}, Lemma~\ref{lem:general-injectivity}, and Proposition~\ref{prop:bound-minimal-distance}. The asymptotic behavior of the family of codes $\mathscr{C} = \{\calC_m\}_{m=1}^\infty$ was studied in Theorem~\ref{theo:asymptotically-good-codes}.
\end{proof}

\subsection{Gilbert--Varshamov bounds}

Given an asymptotically good family of codes $\mathscr{C}$ like the one produced above, it is natural to ask if the limiting relative parameters in the proof of Theorem~\ref{theo:asymptotically-good-codes} lie near a Gilbert--Varshamov-type bound for LRCs with availability. Barg, Tamo, and Frolov obtained a bound for LRCs with availability in~\cite{TamoBargFrolov16}*{Theorem~B}. It is difficult to derive an asymptotic bound from their formulas, although they do this successfully in the case of availability $t = 2$~\cite{TamoBargFrolov16}*{Figure~1}. Their asymptotic bounds, as $n \to \infty$, hold the locality $r$ fixed. Unfortunately, for the codes in our family, $n$~and $r$~are inextricably linked; if $n \to \infty$ then $r \to \infty$ as well. Therefore, our family of codes $\mathscr{C}$ is not amenable to an asymptotic GV-bound analysis. Exploring whether an alternative construction could be analyzed from this perspective would be worthwhile.

\section*{Acknowledgments}

This project began at the Latinx Mathematics Research Community, sponsored 
by the American Institute of Mathematics and the National Science Foundation. The 
authors thank their fellow LMRC community members Maurice Fabien, Zachary Flores, 
Therese-Marie Landry, Adriana Salerno, and Gustavo Terra Bastos for useful discussions. 
The continuation of this project was made possible by a SQuaRE at the American 
Institute of Mathematics. The authors thank the American Institute of Mathematics 
for providing a supportive and mathematically rich environment. 
The first author was supported by the National Science Foundation individual grant 
DMS-2316892 while working on this project. The second author 
was partially supported by the AMS-Simons Research Enhancement Grants for PUI 
Faculty while working on this project. 
The fourth author was supported by an AMS-Simons Travel Grant. 
The fourth, fifth, and sixth authors conducted some of this work supported by the National 
Science Foundation under Grant No.\ DMS-1928930, while in residence at the Simons Laufer 
Mathematical Sciences Research Institute (formerly MSRI) in Berkeley, California, during the 
summer of 2024, spring 2024, and the spring of 2023, respectively. 
The sixth author was also supported by the National Science Foundation individual grants 
Nos.\ DMS-1902274 and DMS-2302231 while working on this project. 
We used \texttt{Magma}~\cite{BosmaCannonPlayoust97} and \texttt{SageMath}~\cite{Sagemath24} 
for computations.

\bibliography{ref}

\end{document}